\documentclass[11pt]{article}

\usepackage{latexsym,amssymb}
\RequirePackage{graphicx}
\usepackage{subcaption}
\usepackage{xcolor}
\usepackage{amsmath}
\usepackage{amsthm}
\usepackage{amsfonts}
\usepackage{ dsfont }
\usepackage{hyperref}
\usepackage{cleveref}
\usepackage[a4paper,margin=1in]{geometry}

\allowdisplaybreaks

\newtheorem{theorem}{Theorem}
\newtheorem{lemma}[theorem]{Lemma}

\newtheorem{proposition}[theorem]{Proposition}

\newtheorem{remark}[theorem]{Remark}

\newtheorem{assumption}{Assumption}

\title{Convergence and non-asymptotic error analysis for kinetic Langevin samplers using the exact harmonic Langevin integrator}

\author{Katharina Schuh\thanks{Institute of Analysis and Scientific Computing, TU Wien, Wiedner Hauptstra\ss e 8--10, 1040 Wien, Austria 
  ({katharina.schuh@tuwien.ac.at}
  ).}
}

\usepackage{amsopn}

\def\rmd{\mathrm{d}}

\def\1{\mathds{1}}

\ifpdf
\hypersetup{
  pdftitle={Convergence of kinetic Langevin samplers},
  pdfauthor={K. Schuh}
}
\fi

\begin{document}

\maketitle

\begin{abstract}
We propose a novel kinetic Langevin sampler based on a specific splitting scheme using the exact harmonic Langevin integrator. For strongly log-concave target measures, the sampler exploits a decomposition of the strongly convex potential into a quadratic part and a convex perturbation with Lipschitz continuous gradient.
For the resulting first- and second-order schemes associated with this splitting we establish convergence rates in $L^2$-Wasserstein distance as well as non-asymptotic error bounds.
In particular, the contraction rate is of the same order as that of the underlying continuous dynamics.
To achieve $\varepsilon$-accuracy, the required step size for the second-order scheme is comparable to that of established splitting schemes such as OBABO or UBU, which are widely used in machine learning and molecular dynamics.

\textbf{Keywords} Markov Chain Monte Carlo; Langevin diffusion; Wasserstein convergence; kinetic Langevin Sampler; numerical analysis of SDEs 

\textbf{MSC (2020)} Primary 60J05; secondary 65C05, 65C40.

\end{abstract}

\section{Introduction}

In this work we consider a novel splitting scheme derived from the (kinetic) Langevin dynamics $(X_t,V_t)_{t\ge 0}$ given by the second order stochastic differential equation
\begin{align}\label{eq:LD}
\begin{cases}
\rmd X_t= V_t \rmd t \\
\rmd V_t= (-\nabla U(X_t) -\gamma V_t) \rmd t +\sqrt{2\gamma}\rmd B_t,
\end{cases}
\end{align}
where $\gamma>0$ denotes the friction parameter, $(B_t)_{t\ge 0}$ the $d$-dimensional standard Brownian motion and $U:\mathbb{R}^d\to\mathbb{R}$ the potential function. 

The dynamics describes in statistical physics a particle with position and momentum that undergoes damping and random collisions. A central property is that the Boltzmann-Gibbs measure
\begin{align*}
\mu(\rmd x\rmd v)\propto \exp(-U(x)-|v|^2/2)\rmd x\rmd v
\end{align*}
forms its unique invariant probability measure, to which the dynamics converges under suitable conditions on the potential $U$ (see e.g. \cite[Proposition 6.1]{Pa2014}). This observation motivates the use of the continuous Langevin dynamics and numerical implementable approximations, for sampling from the target measure $\mu$ or its position marginal $\mu^x\propto \exp(-U(x))\rmd x$.
This work contributes towards this approach by implementing a novel splitting scheme exploiting the special structure of the potential together with the exact harmonic Langevin integrator, and by providing its long-time analysis in the setting where the target measure $\mu$ is a high-dimensional log-concave measure.
In particular, we consider a decomposition of the strongly convex function $U$ into a quadratic term and a convex remainder term, which is understood below as a perturbation term, i.e.,
\begin{align} \label{eq:U}
U(x)=-\frac{1}{2}x^T K x+ G(x).
\end{align}

\paragraph{Overview of the literature}
Sampling efficiently from high-dimensional log-concave distributions is a central problem of significant interests in many applications including statistical physics \cite{LeSt2016}, machine learning \cite{AnDeDoJo2003} or Bayesian sampling \cite{GeCaStRu1995}. 

The Langevin dynamics \eqref{eq:LD} and its first-order variant, which is known as the overdamped Langevin dynamics and is used to sample $\mu^x$, form a basis for many sampling algorithms.

For the overdamped Langevin dynamics convergence and contraction rates are shown both by functional inequalities \cite{BaGeLe2014,CaGu2009} and probabilistic tools such as coupling techniques \cite{Eb2016}. 
Considering numerical approximation, the unadjusted Langevin algorithm (ULA) and its variants form a very popular class of Langevin-based sampling methods. Their convergence and non-asymptotic complexity bounds have been extensively studied (see e.g. \cite{Da2017,DuMo2017,DuMo2019}). However for a $\kappa$-strongly log-concave distribution a contraction rate of order $\mathcal{O}(\kappa)$ is the optimal achievable rate for first order methods which motivates to consider sampling algorithms based on higher-order dynamics, in particular second-order Langevin dynamics.

The continuous second-order Langevin dynamics given in \eqref{eq:LD} was object of many works (see e.g. \cite{BaCaGu2008, CaLuWa2023, ChChBaJo2018,DaRi2020, DoMoSc2009, DoMoSc2015,EbGuZi2019b,Sc2024, Vi2009}) and for strongly $\kappa$-convex potentials contraction in $L^2$-Wasserstein distance with rates of order $\mathcal{O}(\sqrt{\kappa})$ can be achieved. 
The same rate carries over to a broad class of kinetic Langevin sampler building on numerical approximations of the continuous dynamics.
Besides the classical Euler-Maruyama discretization, splitting schemes form a popular family of kinetic Langevin sampler. In \cite{ChChBaJo2018,DaRi2020,LePaWh2024a,SaZy2021} the long-time behaviour of various kinetic Langevin sampler is analysed and contraction rates matching to the one of the continuous dynamics are obtained. 

The contraction results highlight the potential acceleration obtained by incorporating momentum variables into the sampling procedure.
In parallel, the Hamiltonian Monte Carlo algorithm \cite{Ne1995} forms another prominent momentum-based algorithm by using the Hamiltonian dynamics. For the convergence analysis of the exact Hamiltonian Monte Carlo method we refer to \cite{MaSm2021,BoEbZi2020}. In \cite{GoBrMaMo2025} the generalized HMC, a unified framework for both HMC and the splitting scheme OBABO is introduced and its long-time behaviour is analysed.

The numerical discretization of both the Hamiltonian based and the Langevin dynamics based algorithms introduce an additional error. Bounds on the efficiency of the resulting algorithms are studied for the Langevin dynamics in \cite{CaDuMoSt2023,ChChBaJo2018,LePaWh2024a,Mo2021} and for HMC in \cite{BoSc2023,ChVe2022,MaSm2021}. 
Alternatively, the error can be avoided by an additional Metropolis acceptance-rejection step which allows to sample exactly from the correct target measure \cite{MeRoRoTeTe1953,Ha1970}. 
The Metropolis-adjusted algorithms are not scope of this work and we focus on bounding the additional bias term induced by inexact sampling and studying the dependence on the dimension and condition number of the obtained complexity guarantees. 

Recent advances have been made in the analysis of sampling non-log-concave distributions. Often sampling guarantees are obtained under relaxed dissipativity guarantees by using reflection couplings \cite{Eb2016,EbGuZi2019b}.
We assume that the proposed kinetic Langevin sampler is applicable for sampling non log-concave distributions and sampling guarantees hold under a similar setup as in \cite{ScWh2024}.

Finally, we remark that in many practical applications the exact gradient evaluation is not available or computationally too costly. In these settings,  one typically makes use of a stochastic gradient approximation. In the context of Bayesian Learning, \cite{WeTe2011} proposed sampling based on mini-batches.
The convergence behaviour and complexity bounds for kinetic Langevin samplers with stochastic gradients are analysed in \cite{GoBrMaMo2025,LePaWh2024}. 



\paragraph{Our main contribution} The contribution of the paper is three-fold. 
\begin{itemize}
\item Firstly, we propose a new splitting of \eqref{eq:LD} into two components using the decomposition of the potential \eqref{eq:U} and present the precise numerical implementation of the corresponding first-order and second-order splitting scheme. 
To implement the component given by the Langevin dynamics with a linear drift, we distinguish between the overdamped, the underdamped and the critical cases (see Proposition~\ref{prop:numscheme}). 
\item 
Secondly, we study the long-time behaviour for the proposed schemes.
Under the assumption that the potential $U$ is $\kappa$-strongly convex and has Lipschitz continuous gradients, we establish epxonential contraction.
More precisely, for any initial measures $\nu,\eta\in\mathcal{P}_2(\mathbb{R}^{2d})$ and for the transition kernel $\pi_h$ associated to the numerical schemes with discretization parameter $h>0$, we prove
\begin{align*}
\mathcal{W}_2(\nu  \pi_h^k, \eta \pi_h^k)\le M_1 e^{-c h k } \mathcal{W}_2(\nu,\eta),
\end{align*}
where the contraction rate $c$ is given by $c=\min(\gamma/8, \kappa \gamma^{-1}/4)$ (see Theorem~\ref{thm:conv} and Theorem~\ref{thm:conv_PGP2}). The constant $M_1>0$ which is independent of $h$, $k$ and $d$ appears since contraction is first shown in a $L^2$-Wasserstein distance with respect to a suitably twisted distance that is equivalent to the Euclidean distance. The result holds for friction parameters satisfying $\gamma\ge \sqrt{2L_G}$ where $L_G$ denotes the Lipschitz constant of the gradient of the perturbation term $G$. The condition is consistent with known contraction results for the continuous Langevin dynamics \cite{Sc2024} and with the observation that in the Gaussian case the restriction on $\gamma$ disappears.

\item Thirdly, since these unadjusted schemes do not directly sample the target measure, we study the size of the induced error. In particular, we bound the distance between the true target measure and the invariant probability measure of the schemes by an error of order $h$ for the first-order scheme and of order $h^2$ for the symmetric second-order scheme. 
Combining the error and the contraction result we obtain guarantees on the number of steps of the kinetic sampler and gradient evaluations to approach the true target measure $\varepsilon$-close in $L^2$ Wasserstein distance (see Theorem~\ref{thm:complex} and Theorem~\ref{thm:complex_PGP}).

\end{itemize}

We note that the contraction results and error bounds obtained here are comparable to those of the standard splitting schemes as the OBABO or UBU scheme (\cite{LePaWh2024a, SaZy2021}). In particular the proposed splitting scheme is beneficial for Bayesian sampling when the prior is given by a Gaussian measure, since the structure of the log-likelihood function matches with the remainder term of the splitting. 

\paragraph{Outlook of the paper}
In Section~\ref{sec:ass_splitting} we present the new splitting scheme. In Section~\ref{sec:conv_res} we give the results on the convergence behaviour, before we provide the corresponding strong accuracy bounds and the complexity analysis in Section~\ref{sec:compl_res}. 
The proofs are postponed to Section~\ref{sec:proofs}.

\section{Framework and splitting schemes for the kinetic Langevin sampler} \label{sec:ass_splitting}

\subsection{Assumptions on the potential}

As the construction of the kinetic Lange\-vin sampler uses the structure of the potential function $U$, we first state the assumptions for $U$:

\begin{assumption}\label{ass:U}
There exists a symmetric positive definite matrix $K$ and a convex, differentiable function $G$ which is gradient Lipschitz such that
\begin{align*}
U(x)=\frac{1}{2}x^T K x + G(x).
\end{align*}
We denote the smallest eigenvalue of $K$ by $\kappa>0$, the Lipschitz constant of the linear function $x\mapsto Kx$ by $L_K$ and the Lipschitz constant of $\nabla G$ by $L_G$.

\end{assumption}
Assumption~\ref{ass:U} is satisfied for $\kappa$-strongly convex potentials with Lipschitz continuous gradients. A natural decomposition of the potential $U$ is given by $U(x)=\frac{1}{2}x^T K x + G(x)$ with $K=\kappa I_d$ and convex function $G(x)=U(x)-\frac{\kappa}{2}|x|^2$. But, the decomposition is not unique. In particular, one aims to find a splitting such that the Lipschitz constant $L_G$ of the convex remainder term $G$ is minimised.

We note that $\nabla U$ is Lipschitz continuous with $\|\nabla U\|_{\mathrm{Lip}}\le L_K+L_G$.

\subsection{Splitting schemes}

Recently, advanced numerical schemes of the Langevin dynamics are obtained by using splitting methods. The key idea of these methods is based on splitting the dynamics into components which can be integrated exactly, see \cite{McQu2002} for an overview on splitting methods. Due to their good performance they are frequently used in molecular dynamics \cite{BuPa2007}. 

A common splitting is given by considering the Ornstein-Uhlenbeck ($\mathcal{O}$) and the position part ($\mathcal{A}$) and the velocity part ($\mathcal{B}$) corresponding to the Hamiltonian part of \eqref{eq:LD}. Then, integrating exactly this terms successively for some time $h>0$ leads to first-order schemes of the form $\mathcal{OBA}$, $\mathcal{BAO}$ ... 
Building symmetric higher-order splitting schemes like $\mathcal{OBABO}$ and $\mathcal{BAOAB}$ provide numerical splitting schemes with errors of better order (see e.g. \cite{LeMa2013,LeMaSt2016, LePaWh2024a}, \cite{Mo2021}). 
We also refer to \cite{ChMo2025} for the asymptotic analysis in the non-strongly convex setting.
Recently the splitting where the parts $\mathcal{A}$ and $\mathcal{O}$ are combined to the part $\mathcal{U}$ became famous resulting in the splitting schemes $\mathcal{UB}$ and $\mathcal{UBU}$ \cite{Za2017}. The symmetric second-order method has weak and strong error of order two \cite{SaZy2021, PaWh2024}.

Let $h>0$ be the discretization size. In this work we consider the following two splitting components:
\begin{align*}
\begin{pmatrix}
\rmd x \\ \rmd v
\end{pmatrix}
=
\underbrace{\begin{pmatrix}
0 \\ - \nabla G(x) \rmd t
\end{pmatrix}}_{\mathcal{P}}
+
\underbrace{
\begin{pmatrix}
v \rmd t 
\\
-Kx \rmd t -\gamma \rmd t + \sqrt{2\gamma} \rmd B_t
\end{pmatrix}}_{\mathcal{G}}.
\end{align*}
This splitting is also considered in numerical simulations of the Jansen and Rit neural mass model \cite{AbBuHi2017}.
The exact integration of the step $\mathcal{P}$, where only the perturbation $\nabla G$ is considered, is given by
\begin{align}\label{eq:pstep}
\mathcal{P}(x,v,h)=(x,v-h\nabla G(x)).
\end{align} 
For the exact integration of the part $\mathcal{G}$, we observe that it corresponds to the exact solution of the continuous Langevin dynamics $(\tilde{X}_t,\tilde{V}_t)_{t\ge 0}$ with a quadratic potential given by
\begin{align*}
\begin{pmatrix}
\rmd \tilde{X}_t 
\\ \rmd \tilde{V}_t
\end{pmatrix}=A \begin{pmatrix}
 \tilde{X}_t 
\\ \tilde{V}_t
\end{pmatrix}\rmd t + \sqrt{2\gamma} \rmd \begin{pmatrix}
0 \\ B_t
\end{pmatrix}, \qquad \text{with } 
A=\begin{pmatrix}
0_d & 1_d \\ -K & -\gamma 1_d
\end{pmatrix},
\end{align*}
where $(B_t)_{t\ge 0}$ is a $d$-dimensional standard Brownian motion.
The explicit solution $(\tilde{X}_h,\tilde{V}_h)$ of the exact harmonic Langevin integrator at time $h>0$ is given by
\begin{align} \label{eq:exactsol_d}
\begin{pmatrix}
\tilde{X}_h \\ \tilde{V}_h
\end{pmatrix}
= e^{Ah} 
\begin{pmatrix}
\tilde{X}_0 \\ \tilde{V}_0
\end{pmatrix}+ \int_0^h e^{A(h-s)} \sqrt{2\gamma} \rmd \begin{pmatrix}
 0
\\  B_s
\end{pmatrix}.
\end{align}
To provide the exact numerical scheme for $(\tilde{X}_h,\tilde{Y}_h)$, we impose the following assumption:
\begin{assumption}\label{ass:diag}
The matrix $K$ is of diagonal form, i.e., there exists $k_1,..., k_d$ with $k_j\ge \kappa$ such that $K=\mathrm{diag}(k_1,...,k_d)$.
\end{assumption}

In this case, the solution can be considered componentwise and for each $j\in\{1,\ldots,d\}$ the matrix exponential $\mathcal{E}_j(t)=e^{A_jt}\in\mathbb{R}^{2\times 2}$ with $A_j=\begin{pmatrix}
0 & 1 \\ -k_j & -\gamma
\end{pmatrix}$ can be computed separately.
Then, the matrix exponential $e^{At}\in\mathbb{R}^{2d\times 2d}$ is given by
\begin{align*}
e^{At}= \begin{pmatrix}
D_{1,1}(t) & D_{1,2}(t) \\ D_{2,1}(t) & D_{2,2}(t)
\end{pmatrix}
\text{ with } D_{k,l}(t)=\mathrm{diag}((\mathcal{E}_j(t))_{k,l}: j=1,\ldots,d), \ k,l\in\{1,2\}.
\end{align*}

The following numerically implementable discretization represents the exact integration of $\mathcal{G}$. We note that this representation is not unique.

\begin{proposition}[Numerical representation of step $\mathcal{G}$] \label{prop:numscheme}
Let $h>0$. Let $\xi, \zeta \sim \mathcal{N}(0_d,I_d)$ be two independent $d$-dimensional standard normally distributed random variables. Suppose Assumption~\ref{ass:diag} holds true. Let $(x,v)\in\mathbb{R}^{2d}$. Then the exact integration of step $\mathcal{G}$ is given componentwise  by 
\begin{align} \label{eq:stepG}
\mathcal{G}_j((x,v),h,\xi,\zeta)= \mathbf{A}_j(h)\begin{pmatrix}
x^j \\ v^j 
\end{pmatrix}+ \mathbf{B}_j(h) \begin{pmatrix}
\xi^j \\ \zeta^j
\end{pmatrix}\in \mathbb{R}^2, \qquad  j=1, \ldots, d,
\end{align}
where $\mathbf{A}_j(h)$ and $\mathbf{B}_j(h)$ are of the following form: \\
In the \textbf{overdamped case} ($\gamma^2 >4k_j$):
\begin{align*}
& \mathbf{A}_j(h)=e^{-\frac{\gamma}{2}h} \begin{pmatrix}
\frac{1}{\omega}\sinh(\frac{\gamma}{2}\omega h)+\cosh(\frac{\gamma}{2}\omega h) & \frac{2}{\gamma\omega}\sinh(\frac{\gamma}{2}\omega h)
\\ -k_j\frac{2}{\gamma\omega} \sinh(\frac{\gamma}{2}\omega h) & -\frac{1}{\omega}\sinh(\frac{\gamma}{2}\omega h)+\cosh(\frac{\gamma}{2}\omega h)
\end{pmatrix} ,
\\ & \mathbf{B}_j(h)=
\begin{pmatrix}
\frac{\sqrt{2}}{\gamma \omega}
\frac{(1+\omega)(1-e^{-\gamma h} )-(1-e^{-\gamma(1+\omega)h})}{\sqrt{(1-e^{-\gamma(1+\omega) h})(1+\omega) } }
& \frac{\sqrt{2}}{\gamma \omega}\sqrt{\frac{1-e^{-\gamma(1-\omega)h}}{1-\omega}-\frac{(1-e^{-\gamma h})^2(1+\omega)}{1-e^{-\gamma(1+\omega)h})}}
\\  \frac{1+\omega}{\sqrt{2}\omega}\frac{(1-e^{-\gamma(1+\omega)h})-(1-\omega)(1-e^{-\gamma h})}{\sqrt{(1-e^{-\gamma(1+\omega)h})(1+\omega)}}
&  \frac{\omega-1}{\sqrt{2}\omega} \sqrt{\frac{1-e^{-\gamma(1-\omega)h}}{1-\omega}-\frac{(1-e^{-\gamma h})^2(1+\omega)}{1-e^{-\gamma(1+\omega)h}}}
\end{pmatrix}
\end{align*}
with $\omega=\sqrt{|1-4k_j/\gamma^2|}$. \\
In the \textbf{underdamped case} ($\gamma^2 <4k_j$):
\begin{align*}
& \mathbf{A}_j(h)=e^{-\frac{\gamma}{2}h} \begin{pmatrix}
\frac{1}{\omega}\sin(\frac{\gamma}{2}\omega h)+\cos(\frac{\gamma}{2}\omega h) & \frac{2}{\gamma\omega}\sin(\frac{\gamma}{2}\omega h)
\\ -k_j\frac{2}{\gamma\omega} \sin(\frac{\gamma}{2}\omega h) & -\frac{1}{\omega}\sin(\frac{\gamma}{2}\omega h)+\cos(\frac{\gamma}{2}\omega h)
\end{pmatrix}, 
\\ & \mathbf{B}_j(h)=\begin{pmatrix}
\frac{2}{\gamma \omega}\sqrt{\frac{b }{1+\omega^2} } & 0 \\  \frac{e^{-\gamma h} (1-\cos(\gamma \omega h)) \sqrt{1+\omega^2}  }{\sqrt{b}}  & \frac{\sqrt{  (1-e^{-\gamma h})^2\omega^2  - 2 e^{-\gamma h} (1-\cos( \gamma \omega h ))}}{\sqrt{b}} 
\end{pmatrix}
\end{align*}
with
$b=(1+\omega^2)(1-e^{-\gamma h})-1+e^{-\gamma h}\cos(\omega \gamma h)-\omega e^{-\gamma h}\sin(\omega\gamma h)$
%
and \\ $\omega=\sqrt{|1-4k_j/\gamma^2|}$.\\
In the \textbf{critical case} ($\gamma^2 =4k_j$):
\begin{align*}
& \mathbf{A}_j(h)=e^{-\frac{\gamma}{2} h} \begin{pmatrix}
\frac{\gamma}{2}h+1& h
\\ -k_j h & -\frac{\gamma}{2}h +1
\end{pmatrix} \qquad \text{and} 
\\ & 
\mathbf{B}_j(h)=\begin{pmatrix}
\frac{\sqrt{4(1-e^{-\gamma h})-2e^{-\gamma h}h^2 \gamma^2 - 4e^{-\gamma h}h \gamma}}{\gamma} & 0 \\ \frac{e^{-\gamma h} (\gamma h)^2}{\sqrt{4(1-e^{-\gamma h})-2e^{-\gamma h}h^2 \gamma^2 - 4e^{-\gamma h}h \gamma}} & \frac{\sqrt{2((1-e^{-\gamma h})^2-e^{-\gamma h}h^2\gamma^2)}}{\sqrt{2(1-e^{-\gamma h})-e^{-\gamma h}h^2 \gamma^2 - 2e^{-\gamma h}h \gamma}}
\end{pmatrix}.
\end{align*}
\end{proposition}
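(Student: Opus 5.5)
Under Assumption~\ref{ass:diag} the drift matrix $A$ decouples into $d$ independent $2\times 2$ systems with matrices $A_j$, and the Brownian components are independent. So it suffices to fix $j$ and show that the right-hand side of \eqref{eq:stepG} has the same law as the exact solution \eqref{eq:exactsol_d} of the scalar system started at $(x^j,v^j)$. Independence of the $d$ blocks then gives the claim for the full vector. By \eqref{eq:exactsol_d} the $j$-th block is Gaussian with mean $\mathcal{E}_j(h)(x^j,v^j)^T$ and covariance
\begin{align*}
\Sigma_j(h)=2\gamma\int_0^h \mathcal{E}_j(s)\begin{pmatrix}0&0\\0&1\end{pmatrix}\mathcal{E}_j(s)^T\rmd s .
\end{align*}
Since $\mathbf{B}_j(h)(\xi^j,\zeta^j)^T$ is centred Gaussian with covariance $\mathbf{B}_j\mathbf{B}_j^T$, I need to verify two things: $\mathbf{A}_j(h)=\mathcal{E}_j(h)$, and $\mathbf{B}_j(h)\mathbf{B}_j(h)^T=\Sigma_j(h)$. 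Because the representation is only claimed in law, any square root of $\Sigma_j$ is admissible. This is why it is not unique.

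For the mean part, I compute $e^{A_jt}$ from the eigenvalues $\lambda_\pm=\frac{\gamma}{2}(-1\pm\omega)$ in the overdamped case, $\frac{\gamma}{2}(-1\pm i\omega)$ in the underdamped case, and the double eigenvalue $-\gamma/2$ in the critical case. Writing $A_j=-\frac{\gamma}{2}I+N$ with $N=\begin{pmatrix}\gamma/2&1\\-k_j&-\gamma/2\end{pmatrix}$, one has $N^2=\frac{\gamma^2}{4}(1-4k_j/\gamma^2)I$. Hence $e^{Nt}=\cosh(\frac{\gamma}{2}\omega t)I+\frac{2}{\gamma\omega}\sinh(\frac{\gamma}{2}\omega t)N$ in the overdamped case, with $\cosh,\sinh$ replaced by $\cos,\sin$ in the underdamped case, and $e^{Nt}=I+tN$ in the critical case. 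Multiplying by $e^{-\gamma t/2}$ reproduces the stated $\mathbf{A}_j(h)$ in each case directly.

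For the covariance, I insert the entries of $\mathcal{E}_j(s)$ into the integral. This gives the three entries $\Sigma_{xx},\Sigma_{xv},\Sigma_{vv}$ as explicit integrals of exponentials times products of hyperbolic or trigonometric functions, or of polynomials in the critical case, and I evaluate them in closed form. I then choose $\mathbf{B}_j$ as a Cholesky-type factor. In the underdamped and critical cases $\mathbf{B}_j$ is lower triangular, so I need $B_{11}=\sqrt{\Sigma_{xx}}$, $B_{21}=\Sigma_{xv}/B_{11}$ and $B_{22}=\sqrt{\Sigma_{vv}-B_{21}^2}$. Checking these reduces to identities between the closed forms, for instance $\Sigma_{xx}=\frac{4b}{\gamma^2\omega^2(1+\omega^2)}$. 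In the overdamped case the stated $\mathbf{B}_j$ is a full matrix. There I would verify $\mathbf{B}_j\mathbf{B}_j^T=\Sigma_j$ by computing the three entries. The natural approach is to diagonalize in the eigenbasis of $\lambda_\pm$, where the covariance entries become the simple integrals $\int_0^h e^{(\lambda_a+\lambda_b)s}\rmd s$, giving terms like $1-e^{-\gamma(1\pm\omega)h}$ and $1-e^{-\gamma h}$. I would then transform back.

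The main obstacle is this last algebraic verification, especially the overdamped $\mathbf{B}_j$ and the nonnegativity of the expressions under the square roots. Nonnegativity follows abstractly, since $\Sigma_j$ is positive semidefinite. Its Schur complement $\Sigma_{vv}-\Sigma_{xv}^2/\Sigma_{xx}=\det\Sigma_j/\Sigma_{xx}\ge0$ is exactly what appears under the root in $B_{22}$. So I would identify each radicand with $\Sigma_{xx}$ or with that Schur complement, up to positive factors, instead of checking signs by hand. The remaining work is routine, if tedious, simplification of exponential-trigonometric integrals, which I would organize through the eigenbasis computation.
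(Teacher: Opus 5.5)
Your route is the paper's: reduce to one $2\times 2$ block, match $\mathbf{A}_j(h)$ with $e^{A_jh}$, and take $\mathbf{B}_j(h)$ as a square root of the Gaussian noise covariance. That root is a Cholesky factor in the underdamped and critical cases. In the overdamped case it comes from the exponential variables $\int_0^h e^{-\frac{\gamma}{2}(1\pm\omega)(h-s)}\,dB_s$, which are the paper's $Z_a,Z_b$. Your identity $N^2=\frac{\gamma^2}{4}(1-4k_j/\gamma^2)I$ gives all three exponentials more cleanly than the paper's $P\Lambda P^{-1}$ and Jordan-form computations. The stated $\mathbf{A}_j$, the overdamped $\mathbf{B}_j$ and the critical $\mathbf{B}_j$ all check out.

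The gap is the step you call routine: for the underdamped $(2,1)$ entry as printed, the identity is false. The Lyapunov equation $\partial_h\Sigma_j=A_j\Sigma_j+\Sigma_jA_j^T+2\gamma\,\mathrm{diag}(0,1)$ gives $\Sigma_{xv}=\tfrac12\partial_h\Sigma_{xx}=\frac{2e^{-\gamma h}(1-\cos(\gamma\omega h))}{\gamma\omega^2}$. Since the stated factor has $B_{12}=0$, it forces
\[
B_{21}=\frac{\Sigma_{xv}}{B_{11}}=\frac{e^{-\gamma h}(1-\cos(\gamma\omega h))\sqrt{1+\omega^2}}{\omega\sqrt{b}},
\]
which is the printed entry divided by $\omega$. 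The printed $B_{22}$ matches this corrected value. Indeed, with $\Sigma_{vv}=\omega^{-2}\big((1+\omega^2)(1-e^{-\gamma h})-1+e^{-\gamma h}\cos(\gamma\omega h)+\omega e^{-\gamma h}\sin(\gamma\omega h)\big)$ one gets $\Sigma_{vv}-B_{21}^2=\big((1-e^{-\gamma h})^2\omega^2-2e^{-\gamma h}(1-\cos(\gamma\omega h))\big)/b$. With the printed $B_{21}$, therefore, $\mathbf{B}_j\mathbf{B}_j^T\neq\Sigma_j$ unless $\omega=1$.

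A small-$h$ check confirms this. There $\Sigma_{xx}\approx\frac23\gamma h^3$, $\Sigma_{xv}\approx\gamma h^2$ and $\Sigma_{vv}\approx2\gamma h$, so $B_{21}\approx\sqrt{3\gamma h/2}$, while the printed entry is $\approx\omega\sqrt{3\gamma h/2}$. The paper's derivation produces the correct intermediate expression $\sqrt{2\gamma}\,(c/\sigma_{sin}-\sigma_{sin}/\omega)$, which simplifies to the corrected entry; the factor $1/\omega$ is lost only in the final display. Your argument therefore proves the proposition only with this correction, and the closed-form identities have to be carried out, not asserted.
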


\begin{proof}
The proof is given in Section~\ref{sec:proof_num}.
\end{proof}

\begin{remark} If $K$ is symmetric positive definite but Assumption~\ref{ass:diag} is not satisfied, there still exists an orthogonal matrix $Q$ and a diagonal matrix $D=\mathrm{diag}(k_1,...,k_d)$ with $k_j>0$ such that $K=Q D Q^T$. Then, 
$\mathbf{A}(h)$ in \eqref{eq:stepG} is given by $P e^{\tilde{A}h} P^T$ with $P=\begin{pmatrix} Q & 0 \\ 0 & Q \end{pmatrix},$ and $\tilde{A}=\begin{pmatrix} 0 & I_d \\ -D & - \gamma I_d \end{pmatrix}$.
To calculate $\mathbf{B}(h)$ in \eqref{eq:stepG}, one needs to compute a representation of $\int_0^h Pe^{\tilde{A}(h-s)} P^T \rmd (0,B_s)^T$ via the two independent random variables $\xi, \zeta\sim \mathcal{N}(0_d,I_d)$ in the same spirit as in the proof of Proposition~\ref{prop:numscheme}.
\end{remark}

Then, for $h>0$ and two independent sequences $(\xi_k)_{k\in\mathbb{N}}$, $(\zeta_k)_{k\in\mathbb{N}}$ of  $d$-dimensional standard normally distributed random variables, the Markov chain $(\mathbf{X}_{k},\mathbf{V}_{k})_{k\in\mathbb{N}}$ corresponding to the $\mathcal{PG}$-splitting is given by
\begin{align} \label{eq:GP}
(\mathbf{X}_{k+1},\mathbf{V}_{k+1}) & =\mathcal{P}\mathcal{G}((\mathbf{X}_k,\mathbf{V}_k), h, \xi_{k+1},\zeta_{k+1}) 
 =\mathcal{P}(\mathcal{G}(\mathbf{X}_k,\mathbf{V}_k, h, \xi_{k+1},\zeta_{k+1}),h).
\end{align}
The Markov chain $(\mathbf{X}_{k},\mathbf{V}_{k})_{k\in\mathbb{N}}$ corresponding to the symmetric $\mathcal{PGP}$-splitting is given by 
\begin{align} \label{eq:PGP}
(\mathbf{X}_{k+1},\mathbf{V}_{k+1}) & =\mathcal{P}\mathcal{G}\mathcal{P}((\mathbf{X}_k,\mathbf{V}_k), h, \xi_{k+1},\zeta_{k+1}) \nonumber
\\ & =\mathcal{P}(\mathcal{G}(\mathcal{P}(\mathbf{X}_k,\mathbf{V}_k,h/2), h, \xi_{k+1},\zeta_{k+1}),h/2).
\end{align}

\section{Convergence results} \label{sec:conv_res}


We provide convergence results in $L^2$-Wasserstein distance for the kinetic Langevin samplers corresponding to the $\mathcal{PG}$-splitting \eqref{eq:GP} and the $\mathcal{PGP}$-splitting \eqref{eq:PGP}.

We define the (twisted) $L^2$-Wasserstein on $\mathbb{R}^{2d}$ in the following way. Let $\rho: \mathbb{R}^{2d}\times \mathbb{R}^{2d}\to [0,\infty)$ be a distance of the form 
\begin{align} \label{eq:dist}
\rho((x,v),((x',v'))=\sqrt{ (x-x',v-v')\begin{pmatrix}
A & B \\B & C
\end{pmatrix}\begin{pmatrix}
x-x' \\ v-v'
\end{pmatrix}},
\end{align}
where 
$\begin{pmatrix}
A & B \\ B & C
\end{pmatrix}$ is a symmetric positive definite matrix.  Let $\nu,\eta\in\mathcal{P}_2(\mathbb{R}^{2d})$. Then, the $L^2$ Wasserstein distance is defined by 
\begin{align*}
\mathcal{W}_{2,\rho}(\nu,\eta)=\Big(\inf_{\omega\in \Gamma(\nu,\eta)}\int_{\mathbb{R}^{4d}} \rho((x,v),((x',v'))^2 \omega(\rmd x \rmd v \rmd x' \rmd v')\Big)^{1/2},
\end{align*}  
where $\Gamma(\nu,\eta)$ denotes the set of all couplings of $\nu,\eta$ on $\mathbb{R}^{4d}$. For $A,C=1_d$ and $B=0_d$, $\rho$ is the standard Euclidean distance and we obtain the standard $L^2$ Wasserstein distance which we denote by $\mathcal{W}_2$.

In the following we consider the distance introduced in \eqref{eq:dist} with
\begin{align} \label{eq:rho}
A=\gamma^{-2} K+\frac{(1-2\tau)^2}{2}1_d,\qquad B=\frac{1-2\tau}{2}1_d, \qquad C=\gamma^{-2}1_d
\end{align}
and
\begin{align} \label{eq:tau}
\tau=\min\Big(\frac{1}{8}, \frac{\kappa\gamma^{-2} }{4}\Big).
\end{align}

\begin{theorem}[Convergence in $L^2$ Wasserstein distance for the $\mathcal{PG}$-sampler] \label{thm:conv} 
Suppose Assumption~\ref{ass:U} holds true. Let $\nu_0,\eta_0\in \mathcal{P}_2(\mathbb{R}^{2d})$. Denote by $\nu_k$ (resp. $\eta_k$) the distribution of the resulting Markov chain given by \eqref{eq:GP} with initial distribution $\nu_0$ (resp. $\eta_0$). Assume
\begin{align} \label{eq:condh}
L_G \gamma^{-2} \le \frac{1}{2} \qquad \text{and} \qquad h \le \min\Big(\frac{1}{2\gamma},\frac{\gamma}{2L_K}\Big).
\end{align}
Then, for $k\in\mathbb{N}$ it holds
\begin{align} \label{eq:convW2}
\mathcal{W}_{2,\rho}(\nu_k, \eta_k)\le e^{-c kh}\mathcal{W}_{2,\rho}(\nu_0, \eta_0), \qquad \text{and} \qquad \mathcal{W}_{2}(\nu_k, \eta_k)\le M_1 e^{-c kh}\mathcal{W}_{2}(\nu_0, \eta_0),
\end{align}
where the contraction rate $c>0$ and the constant $M_1$ satisfy
\begin{align}
& c=\min\Big(\frac{\gamma}{8},\frac{\kappa \gamma^{-1}}{4}\Big), \qquad \text{and} \label{eq:c}
\\ & M_1= \Big(\frac{\max(L_K\gamma^{-2}+1,(3/2)\gamma^{-2})}{\min((1/4)\gamma^{-2},(9/128)+\kappa \gamma^{-2})}\Big)^{1/2}. \label{eq:M}
\end{align}
Further, there exists a unique invariant measure $\mu_h$ for the sampler corresponding to the $\mathcal{PG}$-splitting \eqref{eq:GP} and for $k\in\mathbb{N}$,
\begin{align} \label{eq:conv_inv}
\mathcal{W}_{2,\rho}(\nu_k, \mu_h)\le e^{-c kh}\mathcal{W}_{2,\rho}(\nu_0, \mu_h), \qquad \text{and} \qquad \mathcal{W}_{2}(\nu_k, \mu_h)\le M_1 e^{-c kh}\mathcal{W}_{2}(\nu_0, \mu_h),
\end{align}

\end{theorem}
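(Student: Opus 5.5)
We use a synchronous coupling: run two copies of the $\mathcal{PG}$-chain \eqref{eq:GP} from $(x,v)$ and $(x',v')$ with the same Gaussian pair $(\xi_{k+1},\zeta_{k+1})$. Since the noise enters $\mathcal{G}$ additively through $\mathbf{B}(h)$, it cancels in the difference. Write $z=x-x'$, $w=v-v'$. One step acts on the difference as
\begin{align*}
\begin{pmatrix} z_1\\ w_1\end{pmatrix}=\begin{pmatrix} 1_d & 0\\ -hQ & 1_d\end{pmatrix} e^{Ah}\begin{pmatrix} z\\ w\end{pmatrix},
\end{align*}
where $Q=\int_0^1\nabla^2G(x'+s(x-x'))\,\rmd s$ is evaluated along the $\mathcal{G}$-updated positions. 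When $G$ is not $C^2$, we use the co-coercivity substitute $\langle \nabla G(a)-\nabla G(b),a-b\rangle\ge L_G^{-1}|\nabla G(a)-\nabla G(b)|^2$. Convexity gives $0\preceq Q\preceq L_G$.

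Let $M$ be the matrix in \eqref{eq:rho}. It then suffices to prove the one-step contraction
\begin{align*}
\rho\big((x_1,v_1),(x_1',v_1')\big)^2\le e^{-2ch}\rho\big((x,v),(x',v')\big)^2 .
\end{align*}
Taking expectations over an optimal coupling of $\nu_0,\eta_0$ and iterating then gives the first bound in \eqref{eq:convW2}. The second bound follows from the norm equivalence $\lambda_{\min}(M)|\cdot|^2\le \rho^2\le\lambda_{\max}(M)|\cdot|^2$. Estimating the eigenvalues of $M$ by Gershgorin/Schur-complement bounds, using $\kappa\le K\le L_K$ and $\tau\le 1/8$, should give $\lambda_{\max}(M)\le\max(L_K\gamma^{-2}+1,\tfrac32\gamma^{-2})$ and $\lambda_{\min}(M)\ge\min(\tfrac14\gamma^{-2},\tfrac9{128}+\kappa\gamma^{-2})$. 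This yields $M_1$ in \eqref{eq:M}.

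To prove the one-step contraction, we split it into two estimates. First, the harmonic flow contracts $\rho$ in continuous time. Set $F(t)=|e^{At}u|_M^2$. Then $F'=u_t^T(A^TM+MA)u_t$, and the choice \eqref{eq:rho}--\eqref{eq:tau} is the one used for the continuous dynamics in \cite{Sc2024}. We check that
\begin{align*}
A^TM+MA\preceq -2c'M \qquad\text{with } c'>c.
\end{align*}
This leaves slack to absorb the $\mathcal{P}$-step. Since $K$ commutes with everything in $A$ and $M$, it reduces to a $2\times2$ matrix inequality for each eigenvalue $k\ge\kappa$. In that inequality the $B$-block $\tfrac{1-2\tau}{2}$ produces the $-\kappa\gamma^{-1}$ dissipation in $x$, and $C$ produces the $-\gamma$ dissipation in $v$. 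Second, we expand the $\mathcal{P}$-step: with $(\bar z,\bar w)=e^{Ah}(z,w)$,
\begin{align*}
|(\bar z,\bar w-hQ\bar z)|_M^2=|(\bar z,\bar w)|_M^2-2h\,\bar z^TQ\big(B\bar z+C\bar w\big)+h^2\bar z^TQCQ\bar z .
\end{align*}
The cross term $-2hB\,\bar z^TQ\bar z$ is nonpositive and helps. The term $-2hC\,\bar z^TQ\bar w$ is bounded by Young's inequality using $L_G\gamma^{-2}\le\tfrac12$. The quadratic term is $O(h^2L_G^2\gamma^{-2})$ and is absorbed using $h\le 1/(2\gamma)$.

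The main obstacle is this second estimate: the $\mathcal{P}$-perturbation must fit inside the slack of the exact harmonic contraction uniformly in $h$. Rather than comparing only at time $0$ and time $h$, we would follow $(\bar z,\bar w)$ along $t\in[0,h]$. We would bound $|\bar z-z|$ and $|\bar w-w|$ by $O(h(\gamma+L_K/\gamma))$ times the $\rho$-norm, and this is where $h\le\gamma/(2L_K)$ enters. The leftover $-2hC\,\bar z^TQ\bar w$ is then bounded by $h\big(\tfrac{\tau}{\gamma}\ldots\big)$ terms dominated by the $c'-c$ gap. If that gap is too thin, we would instead treat $-\nabla G$ as part of the continuous drift over $[0,h]$ via a frozen-$Q$ comparison. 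In that comparison $A$ is replaced by $\begin{pmatrix}0&1\\-K-Q&-\gamma\end{pmatrix}$, which again satisfies the Lyapunov inequality under $L_G\gamma^{-2}\le\tfrac12$, and we would control the splitting commutator as $O(h^2)$.

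Finally, for the invariant measure: by the contraction in $\mathcal{W}_{2,\rho}$, the kernel $\pi_h$ is a strict contraction on the complete space $(\mathcal{P}_2,\mathcal{W}_{2,\rho})$. Second moments stay finite because $\pi_h$ maps $\mathcal{P}_2$ into itself, by the Lipschitz dynamics and Gaussian noise. Banach's fixed point theorem then gives a unique $\mu_h$. Applying \eqref{eq:convW2} with $\eta_0=\mu_h$ and $\eta_k=\mu_h$ yields \eqref{eq:conv_inv}.
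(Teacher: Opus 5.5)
Your architecture is the paper's: synchronous coupling, separate estimates for the $\mathcal{G}$- and $\mathcal{P}$-steps in the twisted norm, Young plus co-coercivity for $\mathcal{P}$ using $L_G\gamma^{-2}\le\frac12$ and $h\le\frac{1}{2\gamma}$, norm equivalence for $M_1$, and Banach's fixed point theorem. The gap is in how you glue the two steps together.

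\textbf{Why the scalar slack fails.} After Young and co-coercivity, the $\mathcal{P}$-step leaves a residual of size about $h\gamma^{-1}|\bar w|^2$. The paper's bound is $\rho(\mathcal{P}(\cdot),\mathcal{P}(\cdot))^2\le\rho^2+h\gamma^{-1}|\bar w|^2$. This residual is first order in $h$ and has no factor $\tau$, so your claim that the leftover cross term is of size $h(\tau/\gamma\ldots)$ is unjustified. It is also essentially sharp. Take $d=1$, $G(x)=\frac{\gamma^2}{4}x^2$ (so $L_G\gamma^{-2}=\frac12$) and $\bar w=-\gamma\bar z$; then the $\mathcal{P}$-step multiplies $\rho^2$ by a factor of order $1+h\gamma$ when $\kappa\ll\gamma^2$. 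On the other hand, suppose $\rho(e^{Ah}u)\le e^{-c'h}\rho(u)$ for all $u$. Iterating shows $c'$ is at most the spectral gap of the harmonic drift. For the eigenvalue $\kappa$ with $\gamma^2>4\kappa$, that gap is $\frac{\gamma}{2}\big(1-\sqrt{1-4\kappa/\gamma^2}\big)\le 2\kappa/\gamma$. Since $c=\kappa\gamma^{-1}/4$ in this regime, your slack $c'-c$ is $O(\kappa/\gamma)$, whereas you need a gain of order $\gamma$. Both sides are first order in $h$, so shrinking $h$ does not help. The argument therefore fails precisely in the regime $\kappa\ll\gamma^2\approx 2L_G$ that \eqref{eq:condh} is meant to allow.

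\textbf{The fallback.} Freezing $Q$ and bounding the splitting commutator by $O(h^2)$ can be made rigorous. However, absorbing an $O(h^2)$ error into a scalar gap $c''-c$ forces an extra restriction $h\lesssim(c''-c)/C$. Here $C$ depends on $L_G$, $L_K$, $\gamma$ and the equivalence constants of $\rho$, so the restriction is not implied by \eqref{eq:condh}. That route proves a weaker theorem.

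\textbf{The missing idea.} Keep the slack in matrix form and transport it to the endpoint of the $\mathcal{G}$-step. Along the harmonic flow of the differences the paper shows
\[
\frac{\mathrm{d}}{\mathrm{d}s}\rho(z_s,w_s)^2\le-2\tau\gamma\,\rho(z_s,w_s)^2-\frac{z_s\cdot Kz_s}{4\gamma}-\gamma^{-1}|w_s|^2 .
\]
It then combines three ingredients: $\frac{\mathrm{d}}{\mathrm{d}s}|w_s|^2\le-\gamma|w_s|^2+\gamma^{-1}|Kz_s|^2$, the bound $|Kz|^2\le L_K\,z\cdot Kz$, and $h\le\gamma/(2L_K)$, which lets $-\frac{z\cdot Kz}{4\gamma}$ absorb the $K$-terms. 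From these it follows that $f(s)=\rho(z_s,w_s)^2+s\gamma^{-1}|w_s|^2$ decays at rate $2\tau\gamma$. At $s=h$ this gives
\[
\rho\big(\mathcal{G}(x,v,h,\xi,\zeta),\mathcal{G}(x',v',h,\xi,\zeta)\big)^2\le e^{-2\gamma\tau h}\rho((x,v),(x',v'))^2-h\gamma^{-1}|\mathbf{W}^G|^2 .
\]
The last term cancels the $\mathcal{P}$-cost $+h\gamma^{-1}|\mathbf{W}^G|^2$ exactly, giving the rate $c=\gamma\tau$ under \eqref{eq:condh} alone.

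Your perturbation bounds on $|\bar w-w|$ are the right kind of tool for passing from $\int_0^h|w_s|^2\,\mathrm{d}s$ to $h|w_h|^2$. But they must be played against the $-\gamma^{-1}|w_s|^2$ dissipation, not against a scalar gap.

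A side remark for the $2\times2$ computation: read the off-diagonal block of \eqref{eq:rho} as $\frac{1-2\tau}{2\gamma}1_d$. This is what the paper's derivative and $\mathcal{P}$-step formulas actually use. With $\frac{1-2\tau}{2}1_d$ as printed, the matrix is in general not even positive definite.
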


\begin{proof}
The proof is given in Section~\ref{sec:proof_conv}.
\end{proof}

%
%
%

\begin{remark} We expect that convergence in $L^1$ Wasserstein distance holds for non-convex potentials satisfying analogous assumption as in \cite{Sc2024,ScWh2024}. Since we focused here on the dependence on the condition number and dimension in the complexity analysis, the non-convex potentials are not in the scope of this work.
\end{remark}

Similarly, convergence in $L^2$ Wasserstein distance holds for the $\mathcal{PGP}$-sampler.


\begin{theorem}[Convergence in $L^2$ Wasserstein distance for the $\mathcal{PGP}$-sampler] \label{thm:conv_PGP2} 
Let $\nu_0,\eta_0\in \mathcal{P}_2(\mathbb{R}^{2d})$. Suppose Assumption~\ref{ass:U} holds true. Denote by $\nu_k$ (resp. $\eta_k$) the distribution of the resulting Markov chain given by \eqref{eq:PGP} with initial distribution $\nu_0$ (resp. $\eta_0$). Assume 
\begin{align} \label{eq:condh2}
L_G\gamma^{-2} \le \frac{1}{2}  \qquad \text{and} \qquad  h\le \min\Big(\frac{1}{4\gamma},\frac{\gamma}{4 L_K}\Big).
\end{align}
Then, for $k\in\mathbb{N}$ it holds
\begin{align*}
\mathcal{W}_{2,\rho}(\nu_k, \eta_k)\le e^{-c kh}\mathcal{W}_{2,\rho}(\nu_0, \eta_0), \qquad \text{and} \qquad \mathcal{W}_{2}(\nu_k, \eta_k)\le M_1 e^{-c kh}\mathcal{W}_{2}(\nu_0, \eta_0),
\end{align*}
where the contraction rate $c>0$ and the constant $M_1$ satisfy \eqref{eq:c} and \eqref{eq:M}, respectively.
Further, there exists a unique invariant measure $\tilde{\mu}_h$ for the sampler corresponding to  the $\mathcal{PGP}$-splitting and for $k\in\mathbb{N}$,
\begin{align*}
\mathcal{W}_{2,\rho}(\nu_k, \tilde{\mu}_h)\le e^{-c kh}\mathcal{W}_{2,\rho}(\nu_0, \tilde{\mu}_h), \qquad \text{and} \qquad  \mathcal{W}_{2}(\nu_k, \tilde{\mu}_h)\le  M_1 e^{-c kh}\mathcal{W}_{2}(\nu_0, \tilde{\mu}_h).
\end{align*}
\end{theorem}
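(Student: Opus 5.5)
We use a synchronous coupling and show that one step of the $\mathcal{PGP}$-chain contracts the twisted distance $\rho$ from \eqref{eq:rho} by the factor $e^{-ch}$. Everything else then follows as for Theorem~\ref{thm:conv}.

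\textbf{Coupling and reduction.} Take $(x,v)$, $(x',v')$ and drive both copies with the same Gaussians $(\xi,\zeta)$. The noise term $\mathbf{B}(h)(\xi,\zeta)$ in \eqref{eq:stepG} is additive, so it cancels in the difference. Writing $z=(x-x',v-v')$, one step maps $z$ to
\begin{align*}
z_{k+1}=\Phi_{h/2}\, e^{Ah}\,\Phi_{h/2}\, z ,
\end{align*}
with $\Phi_{s}z=(z^x,\, z^v - s(\nabla G(x)-\nabla G(x')))$. By the mean value theorem, $\nabla G(x)-\nabla G(x')=H z^x$ with $H=\int_0^1\nabla^2G(x'+t(x-x'))\rmd t$ symmetric and $0\le H\le L_G$ (convexity and gradient Lipschitz; if $G\notin C^2$ we mollify). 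The same holds at the intermediate point with a possibly different matrix $\tilde H$. The one-step map is therefore a deterministic matrix
\begin{align*}
M=\begin{pmatrix}1&0\\-\tfrac h2\tilde H&1\end{pmatrix}e^{Ah}\begin{pmatrix}1&0\\-\tfrac h2 H&1\end{pmatrix},
\end{align*}
and it suffices to prove $M^TPM\le e^{-2ch}P$, where $P$ is the matrix of $\rho$.

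\textbf{Key steps.} (i) Expand $e^{Ah}=I+hA+R_2(h)$ with $\|R_2\|\lesssim h^2(\gamma+L_K/\gamma)^2$; here we use Assumption~\ref{ass:U} and $L_K h\le\gamma/4$. Then
\begin{align*}
M=I+h\hat A+\mathcal O(h^2), \qquad \hat A=\begin{pmatrix}0&1\\-K-\tfrac12(H+\tilde H)&-\gamma\end{pmatrix}.
\end{align*}
So to first order $M$ is the linearized Langevin drift with Hessian $K+\bar H$, where $\bar H=\tfrac12(H+\tilde H)$ lies between $0$ and $L_G$.

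(ii) Reuse the continuous-time Lyapunov computation from the proof of Theorem~\ref{thm:conv}: with $\tau$ from \eqref{eq:tau} and $L_G\gamma^{-2}\le\tfrac12$,
\begin{align*}
\hat A^TP+P\hat A\le -4cP-\delta P .
\end{align*}
The $P$ chosen in \eqref{eq:rho} involves only $K$, while the $\bar H$ cross terms are absorbed by the margin coming from $L_G\le\gamma^2/2$. If the proof of Theorem~\ref{thm:conv} established this inequality for a single Hessian $H$, then $\bar H$ works equally well, since it satisfies the same bounds $0\le\bar H\le L_G$.

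(iii) The second-order remainder contributes at most $Ch^2\|P\|$-type terms relative to $P$. Bound them using $P\ge \min(\tfrac14\gamma^{-2},\tfrac9{128}+\kappa\gamma^{-2})$, as in the definition of $M_1$, together with the stricter step size bound \eqref{eq:condh2}, so that they are dominated by $\delta h P$. This gives $M^TPM\le(1-2ch)P\le e^{-2ch}P$.

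\textbf{Conclusion and main obstacle.} Iterating and taking expectations over an optimal $\rho$-coupling of $\nu_0,\eta_0$ gives the $\mathcal W_{2,\rho}$ contraction. The Euclidean estimate follows from the norm equivalence between $\rho$ and $|\cdot|$, whose constants produce exactly $M_1$ in \eqref{eq:M}. Existence and uniqueness of $\tilde\mu_h$ follow from Banach's fixed point theorem on the complete space $(\mathcal P_2,\mathcal W_{2,\rho})$, since the kernel maps $\mathcal P_2$ into itself (finite second moments are preserved by the linear-plus-Lipschitz update). The bounds towards $\tilde\mu_h$ are then the case $\eta_0=\tilde\mu_h$. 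We expect step (iii) to be the main obstacle: the remainder contains products such as $h^2\tilde H e^{Ah}H$ and $h^2 K$ terms, and these must be controlled with the explicit constants so that the halved step restriction in \eqref{eq:condh2} suffices. We would first try writing $M=e^{Ah}+\tfrac h2 E$ with an explicit $E$, and reusing the estimate on $e^{Ah\,T}Pe^{Ah}$ already obtained for the $\mathcal{PG}$ case.
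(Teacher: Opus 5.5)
Your reduction is sound. The synchronous coupling cancels the additive noise of $\mathcal{G}$, and the coupled difference evolves pathwise by a deterministic matrix $M$ with $0\le H,\tilde H\le L_G$. The passage to $\mathcal{W}_2$, the constant $M_1$ and the Banach fixed-point argument then work as you say. The gap is step (iii): it carries the whole quantitative content of the theorem, you do not carry it out, and the route you indicate cannot give the statement under \eqref{eq:condh2}.

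First, a slip in (ii). Since $\rho^2$ must decay at rate $2c$, you need $\hat A^TP+P\hat A\le -2cP-(\text{margin})$. The version with $-4cP$ is false: for $d=1$, $\gamma=1$, $K=\tfrac12$, $G=0$ one has $c=\tfrac18$, and the quadratic form of $\hat A^TP+P\hat A+4cP$ equals $\tfrac{1}{64}>0$ at $(z,w)=(1,0)$.

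The corrected inequality does hold, but its margin is anisotropic. Up to constants it has the form $\gamma^{-1}(z\cdot Kz+z\cdot\bar Hz+|w|^2)$. In directions with $w=0$, $\bar Hz=0$, $Kz=\kappa z$ it is only of order $cP$, so any scalar $\delta$ with $\text{margin}\ge\delta P$ is $O(c)$. The $O(h^2)$ remainder, by contrast, contains the kick term $\tfrac{h^2}{4}\gamma^{-2}|\tilde Hz|^2$ and terms in $|Kz|^2$. Bounding these by $Ch^2P$ (a fortiori through the Euclidean equivalence constants behind $M_1$) and absorbing them into $\delta hP$ forces restrictions like $h\lesssim\kappa\gamma/L_G^2$. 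Condition \eqref{eq:condh2} does not imply this, e.g. for $K=\kappa I_d$ with $L_G\gg\kappa$ and $\gamma^2=2L_G$.

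To close under \eqref{eq:condh2}, each remainder term must be matched against its own piece of the margin, using $|Kz|^2\le L_K\,z\cdot Kz$, $|\tilde Hz|^2\le L_G\,z\cdot\tilde Hz$, $hL_K\le\gamma/4$ and $h\gamma\le\tfrac14$. This must be done for all orders of $e^{Ah}-I-hA$ and for the mixed products of the two kicks with $e^{Ah}$. That bookkeeping is the proof, and it is missing.

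The paper makes the constants work by keeping a velocity surplus on both sides of the $\mathcal{G}$ step, rather than expanding the one-step map. It treats $\mathcal{G}$ exactly through \eqref{eq:diffproc}. The Lyapunov identity for $\rho^2$, a trapezoidal-rule estimate of $\int_0^h|w_s|^2\,\mathrm{d}s$ (its remainders compared term by term with $\gamma^{-1}z\cdot Kz$ and $\gamma^{-1}|w|^2$) and Gr\"onwall give \eqref{eq:contrG}. That is contraction by $e^{-2\gamma\tau h}$ plus extra dissipation $\tfrac{4h}{11}\gamma^{-1}(|\mathbf{W}^G|^2+|v-v'|^2)$ at both ends of the step.

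Each half kick $\mathcal{P}_{1/2}$ raises $\rho^2$ by at most a multiple of $h$ times the squared velocity difference adjacent to the $\mathcal{G}$ step. This uses Young's inequality, co-coercivity $|\nabla G(x)-\nabla G(x')|^2\le L_G(\nabla G(x)-\nabla G(x'))\cdot(x-x')$, $L_G\gamma^{-2}\le\tfrac12$ and $h\gamma\le\tfrac14$. The two kick costs are then paid by the two surpluses.

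Your fallback of reusing the $\mathcal{PG}$ estimate \eqref{eq:Gstep_conv} does not suffice. It keeps a surplus only at the end of the step, so the incoming kick stays unpaid. You would then get contraction only up to a prefactor, as in the remark after Theorem~\ref{thm:conv_PGP2} (factor $\sqrt{1+\gamma h}$), not the clean factor $e^{-ch}$.
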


\begin{proof}
The proof is given in Section~\ref{sec:proof_conv}.
\end{proof}

\begin{remark} We observe that under the condition of Theorem~\ref{thm:conv}, we can alternatively prove for $k\in\mathbb{N}$
\begin{align*}
\mathcal{W}_{2,\rho}(\nu_k, \eta_k)\le \mathcal{C} e^{-c kh}\mathcal{W}_{2,\rho}(\nu_0, \eta_0), \qquad \text{and} \qquad \mathcal{W}_{2}(\nu_k, \eta_k)\le \mathcal{C} M_1 e^{-c kh}\mathcal{W}_{2}(\nu_0, \eta_0),
\end{align*}
where $\mathcal{C}=\sqrt{1+\gamma h}\le \sqrt{3/2}$. This holds true since the splitting scheme satisfies
\begin{align*}
(\mathcal{P}_{1/2}\mathcal{G}\mathcal{P}_{1/2})^k=(\mathcal{P}_{1/2}\mathcal{G})(\mathcal{P}\mathcal{G})^{k-1}\mathcal{P}_{1/2},
\end{align*}
where $\mathcal{P}_{1/2}$ denotes a half step. Therefore, using the contraction result of Theorem~\ref{thm:conv} for the steps $(\mathcal{P}\mathcal{G})^{k-1}$ it remains to control the steps $\mathcal{P}_{1/2}\mathcal{G}$ and $\mathcal{P}_{1/2}$. For $\mathcal{P}_{1/2}\mathcal{G}$ we can show contraction with the same rate as for $\mathcal{P}\mathcal{G}$, while the step $\mathcal{P}_{1/2}$ induces the error given by the constant $\mathcal{C}$.
\end{remark}

\section{Complexity analysis} \label{sec:compl_res}


We establish strong accuracy bounds and complexity bounds for the kinetic Langevin samplers corresponding to the $\mathcal{PG}$-splitting \eqref{eq:GP} and the $\mathcal{PGP}$-splitting \eqref{eq:PGP}.

\begin{theorem}[Strong accuracy for the $\mathcal{PG}$-sampler] \label{thm:strongacc}
Suppose Assumption~\ref{ass:U} and \eqref{eq:condh} hold true. 
Denote by $\mu$ the invariant measure of the exact continuous Langevin dynamics given by \eqref{eq:LD} and by $\mu_h$ the invariant measure of the kinetic Langevin sampler given by \eqref{eq:GP}. 
Further, we assume that $\int_{\mathbb{R}^{2d}} x \mu(\rmd x)=0$ and $\nabla G(0)=0$. Then, 
\begin{align*}
\mathcal{W}_{2,\rho}(\mu, \mu_h)&\le h 8 c^{-1}    \sqrt{d} L_G \gamma^{-1} \sqrt{\frac{2\gamma^2}{\kappa} + \frac{L_K}{\kappa} },
\end{align*}
where $c$ is given in \eqref{eq:c}.

\end{theorem}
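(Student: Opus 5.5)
The plan is the standard ``one-step local error plus contraction'' argument. Fix $\mu$ as the initial law and run two processes from the same initial condition $(X_0,V_0)\sim\mu$. The first is the exact Langevin dynamics \eqref{eq:LD}, which stays $\mu$-distributed. The second is one step of the $\mathcal{PG}$-sampler \eqref{eq:GP}, driven by the same Brownian path. The Gaussian increments $\xi,\zeta$ are realised from the Brownian motion exactly as in the proof of Proposition~\ref{prop:numscheme}. Write $\pi_h$ for the sampler kernel. By invariance, $\mu=\mu\pi_h^{\mathrm{exact}}$ and $\mu_h=\mu_h\pi_h$. The triangle inequality and Theorem~\ref{thm:conv} then give
\begin{align*}
\mathcal{W}_{2,\rho}(\mu,\mu_h)\le \mathcal{W}_{2,\rho}(\mu P_h,\mu\pi_h)+\mathcal{W}_{2,\rho}(\mu\pi_h,\mu_h\pi_h)\le \mathcal{W}_{2,\rho}(\mu P_h,\mu\pi_h)+e^{-ch}\mathcal{W}_{2,\rho}(\mu,\mu_h),
\end{align*}
where $P_h$ is the exact semigroup. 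Hence $\mathcal{W}_{2,\rho}(\mu,\mu_h)\le (1-e^{-ch})^{-1}\,\mathcal{W}_{2,\rho}(\mu P_h,\mu\pi_h)$. Since $ch\le 1/16$ under \eqref{eq:condh}, $(1-e^{-ch})^{-1}\le 2/(ch)$. It therefore suffices to show a local error of order $h^2$:
\[
\mathbb{E}\big[\rho((X_h,V_h),(\mathbf{X}_1,\mathbf{V}_1))^2\big]^{1/2}\lesssim h^2 \sqrt d\, L_G\gamma^{-1}\sqrt{2\gamma^2/\kappa+L_K/\kappa}.
\]
The factor $8$ in the statement absorbs the numerical constants.

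For the local error I would use Duhamel's formula relative to the harmonic flow $e^{At}$. The exact solution satisfies
\[
(X_h,V_h)=\mathcal{G}\text{-flow}+\int_0^h e^{A(h-s)}\begin{pmatrix}0\\-\nabla G(X_s)\end{pmatrix}\rmd s .
\]
The $\mathcal{PG}$ step gives the same $\mathcal{G}$-flow, with the same noise, plus $(0,-h\nabla G(\tilde X_h))$. The noise parts cancel exactly, so the difference is
\[
\int_0^h \Big[e^{A(h-s)}\begin{pmatrix}0\\ \nabla G(X_s)\end{pmatrix}-\begin{pmatrix}0\\ \nabla G(\tilde X_h)\end{pmatrix}\Big]\rmd s .
\]
I would split this into two pieces. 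The first is $(e^{A(h-s)}-I)(0,\nabla G(X_s))^T$, of size $\|A\|(h-s)|\nabla G(X_s)|$. The second is $\nabla G(X_s)-\nabla G(\tilde X_h)$, bounded by $L_G(|X_s-X_0|+|\tilde X_h-X_0|)\le L_G\int_0^h(|V_r|+|\tilde V_r|)\rmd r$. In both pieces the $x$-component picks up an extra factor $h$. Since $\nabla G(0)=0$, we have $|\nabla G(X_s)|\le L_G|X_s|$. Stationarity under $\mu$ gives $\mathbb{E}|V_s|^2=d$. Strong convexity of $U$ (Brascamp–Lieb or Poincaré, together with the centering $\int x\,\mu=0$) gives $\mathbb{E}|X_s|^2\le d/\kappa$. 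The harmonic flow $\tilde V_r$ has second moments of order $d(1+L_K/\kappa)$, using $\mathbb{E}|\tilde X|^2\le d/\kappa$ and the bound $|Kx|^2\le L_K\,x^TKx$. After Cauchy–Schwarz in time, these estimates give an $L^2$ local error of order $h^2L_G\sqrt d\,(\ldots)$.

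The final step converts to $\rho$. The matrix in \eqref{eq:rho} gives $\rho^2\le(\gamma^{-2}L_K+1)|\Delta x|^2+\ldots+(3/2)\gamma^{-2}|\Delta v|^2$. The velocity discrepancy, of size $h^2L_G\sqrt{d}\,\sqrt{1+\gamma^2/\kappa+L_K/\kappa}$, then contributes with the weight $\gamma^{-1}$. The position discrepancy carries an extra factor $h\le 1/(2\gamma)$, which turns it into the same form. This should produce the prefactor $\gamma^{-1}\sqrt{2\gamma^2/\kappa+L_K/\kappa}$.

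The main obstacle is the bookkeeping in the local error. The piece $(e^{A(h-s)}-I)$ involves $\|A\|\sim\max(1,\gamma,L_K)$, and it must be controlled using $h\le\gamma/(2L_K)$ and $h\le1/(2\gamma)$ so that no spurious powers of $L_K$ appear. The moment bounds for $X_s$ must also be kept in the form $d/\kappa$, so that the combination $2\gamma^2/\kappa+L_K/\kappa$ emerges exactly. I would first try grouping the terms by $\rho$-weights directly, rather than estimating in the Euclidean norm and converting afterwards.
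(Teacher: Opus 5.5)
Your proposal is correct, and it takes a genuinely different and shorter route than the paper. You compare $\mu P_h$ with $\mu\pi_h$ over a single step. Because the $\mathcal{G}$-step is the exact harmonic flow driven by the same Brownian motion, the noise cancels in the synchronous coupling, and the discrepancy is the deterministic Duhamel remainder $\int_0^h[e^{A(h-s)}(0,\nabla G(X_s))^T-(0,\nabla G(\tilde X_h))^T]\,\rmd s$. Since positions are $C^1$ in time, this is a genuine $O(h^2)$ local strong error, and the factor $(1-e^{-ch})^{-1}\lesssim(ch)^{-1}$ turns it into $O(h/c)$.

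The paper instead compares $\mu p_{lh}$ with $\mu\pi_h^l$ over $l=\lfloor\lambda^{-1/2}h^{-1}\rfloor$ steps. It controls the accumulated errors $a_k\propto\mathbb{E}|\mathbf{Z}_k|^2$ and $b_k=\mathbb{E}|\mathbf{Z}_k+\gamma^{-1}\mathbf{W}_k|^2$ through a discrete Gr\"onwall recursion, after a self-consistent bound on $\mathbb{E}\int_0^h|W_{ih+s}|^2\,\rmd s$. It then uses $(1-e^{-clh})^{-1}\le 1+(clh)^{-1}$ with $lh\approx\lambda^{-1/2}$.

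For $\mathcal{PG}$ this machinery buys nothing: those bounds use only Cauchy--Schwarz across steps and no cancellation. Your argument therefore gives the same order with less work and more transparent constants. The multi-step machinery is really needed for the $\mathcal{PGP}$ proof. There the local error contains the It\^o term $\frac{h}{2}\int_0^h\int_0^r\nabla^2G(X_{ih+u})\sqrt{2\gamma}\,\rmd B_{ih+u}\,\rmd r$ of size $h^{5/2}$, so a one-step argument would only give $O(h^{3/2})$. Over many steps these martingale increments add in quadrature (the bounds on $\tilde J_1,\tilde J_2$ grow like $k+1$, not $(k+1)^2$), and this is what produces $O(h^2)$.

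Your claim that the factor $8$ absorbs all constants still has to be checked, but it does close.
\begin{itemize}
\item Write $e^{At}(0,y)^T=(\Phi_x(t)y,\Phi_v(t)y)$. Energy dissipation of the damped oscillator gives $\|\Phi_v(t)\|\le 1$ and $\|\Phi_x(t)\|\le t$. Hence $\|\Phi_v(t)-I\|\le\gamma t+L_Kt^2/2\le\frac54\gamma t$ for $t\le h$, which is precisely what removes the spurious $\|A\|\sim L_K$ you were worried about.
\item For the second piece, write $X_s-\tilde X_h=(X_s-\tilde X_s)-\int_s^h\tilde V_r\,\rmd r$ with $\tilde V_r=\Phi_v(r)v-K\Phi_x(r)x+(\text{noise})$.
\item Use $\mathbb{E}_\mu[x^TKx]\le\mathbb{E}_\mu[x\cdot\nabla U(x)]=d$, which follows from convexity of $G$ and $\nabla G(0)=0$. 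This gives $\mathbb{E}|\tilde V_r|^2\le\frac{25}{4}d$ under \eqref{eq:condh}. It also replaces your appeal to $\mathbb{E}|\tilde X_r|^2\le d/\kappa$, which is not literally true because $\tilde X$ is not stationary.
\end{itemize}
Combine these with the weighted form $\rho^2\le(\gamma^{-2}L_K+1)|\Delta x|^2+\frac32\gamma^{-2}|\Delta v|^2$ that you use. The one-step error in $\rho$ is then below $3h^2\sqrt{d}\,L_G\gamma^{-1}\sqrt{2\gamma^2/\kappa+L_K/\kappa}$. So even your crude bound $(1-e^{-ch})^{-1}\le 2/(ch)$ gives a final constant below $8$.
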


\begin{proof}
The proof is given in Section~\ref{sec:proof_compl}.
\end{proof}

\begin{remark} The conditions $x^*=\int_{\mathbb{R}^{2d}} x \mu(\rmd x \rmd v)=0$ and $\nabla G(0)=0$ in Theorem~\ref{thm:strongacc} are technical assumptions, which in general can be removed. Deleting the first one leads to the same bound up to a constant and $d$ being replaced by $d+\kappa x^*$. Without the second condition the estimate $|\nabla G(x)| \le L_G|x|$ changes to $|\nabla G(x)|\le L_G|x|+ |\nabla G(0)|$ which instead of $L_G\sqrt{d}$ leads to $\sqrt{2}(L_G\sqrt{d}+|\nabla G(0)|)$ in the result.
\end{remark}

\begin{remark} Suppose the assumptions of Theorem~\ref{thm:strongacc} hold true. If $L_G\le \kappa$, we assume $\kappa\gamma^{-2}=(1/2)$. Then,
\begin{align*}
\mathcal{W}_{2,\rho}(\mu, \mu_h)&\le 64 h \sqrt{d} L_G \gamma^{-2} \sqrt{\frac{5L_K}{\kappa} }=  h \sqrt{d} (\sqrt{5}\cdot 32)\frac{L_G}{\kappa} \sqrt{\frac{L_K}{\kappa} }.
\end{align*}
If $L_G>\kappa$, we assume $L_G\gamma^{-2} =(1/2)$. Then, $c=\kappa\gamma^{-1}/4$ and
\begin{align*}
\mathcal{W}_{2,\rho}(\mu, \mu_h)&\le h  \frac{32 L_G }{\kappa}    \sqrt{d}  \sqrt{\frac{4 L_G+L_K}{\kappa}}.
\end{align*}

\end{remark}

\begin{theorem}[Complexity result the $\mathcal{PG}$-sampler] \label{thm:complex}
Let $\nu_0\in\mathcal{P}_2(\mathbb{R}^{2d})$. Suppose Assumption~\ref{ass:U} and \eqref{eq:condh} hold true. Further, we assume that $\int_{\mathbb{R}^{2d}} x \mu(\rmd x)=0$ and $\nabla G(0)=0$. Denote by $\nu_k$ the distribution of the resulting Markov chain given by \eqref{eq:GP} with initial distribution $\nu_0$. Then, for $k\in\mathbb{N}$, 
\begin{align*}
&\mathcal{W}_{2,\rho}(\nu_k, \mu)\le e^{-c kh}\mathcal{W}_{2,\rho}(\nu_0, \mu_h)+ h 8 c^{-1}    \sqrt{d} L_G \gamma^{-1} \sqrt{\frac{2\gamma^2}{\kappa} + \frac{L_K}{\kappa} },  \qquad \text{and}
\\ & \mathcal{W}_{2}(\nu_k, \mu)\le M_1 e^{-c kh}\mathcal{W}_{2}(\nu_0, \mu_h)+ h M_2 \Big(8 c^{-1}    \sqrt{d} L_G \gamma^{-1} \sqrt{\frac{2\gamma^2}{\kappa} + \frac{L_K}{\kappa} }\Big),
\end{align*}
where $c$ and $M_1$ are given in \eqref{eq:c} and \eqref{eq:M}, respectively, and
\begin{align} \label{eq:M2}
M_2=\max\Big(2\gamma, \frac{1}{\sqrt{(9/128)+\kappa \gamma^{-2}}} \Big).
\end{align}

\end{theorem}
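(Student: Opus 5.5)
The plan is to combine Theorem~\ref{thm:conv} with Theorem~\ref{thm:strongacc} through the triangle inequality. For the twisted distance $\mathcal{W}_{2,\rho}$ this is immediate:
\begin{align*}
\mathcal{W}_{2,\rho}(\nu_k,\mu)\le \mathcal{W}_{2,\rho}(\nu_k,\mu_h)+\mathcal{W}_{2,\rho}(\mu_h,\mu).
\end{align*}
Since $\mu_h$ is invariant for the $\mathcal{PG}$-chain, the first term is bounded by $e^{-ckh}\mathcal{W}_{2,\rho}(\nu_0,\mu_h)$ via \eqref{eq:conv_inv}. The second term is bounded by Theorem~\ref{thm:strongacc}, whose hypotheses (Assumption~\ref{ass:U}, \eqref{eq:condh}, centered $\mu$, $\nabla G(0)=0$) are exactly those assumed here. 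Adding the two estimates gives the first claim.

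For the Euclidean $\mathcal{W}_2$ bound, apply the triangle inequality in $\mathcal{W}_2$ instead:
\begin{align*}
\mathcal{W}_2(\nu_k,\mu)\le \mathcal{W}_2(\nu_k,\mu_h)+\mathcal{W}_2(\mu_h,\mu).
\end{align*}
The first term is controlled by the second statement of \eqref{eq:conv_inv}, which gives $M_1e^{-ckh}\mathcal{W}_2(\nu_0,\mu_h)$. This avoids paying a norm-equivalence factor on the transient term twice. For the bias term, convert $\mathcal{W}_2(\mu_h,\mu)\le M_2\,\mathcal{W}_{2,\rho}(\mu_h,\mu)$ and then insert Theorem~\ref{thm:strongacc}. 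This requires the pointwise comparison $|z|^2\le M_2^2\,\rho(z)^2$ for $z=(x-x',v-v')$, applied to an optimal $\rho$-coupling. Taking the infimum over couplings then yields the Wasserstein inequality.

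The main (only nontrivial) step is verifying the lower eigenvalue bound of the matrix in \eqref{eq:rho}, namely
\begin{align*}
\rho(z)^2\ge \min\Big(\tfrac14\gamma^{-2},\,\tfrac{9}{128}+\kappa\gamma^{-2}\Big)|z|^2 .
\end{align*}
This is essentially the computation behind $M_1$ in Theorem~\ref{thm:conv}. To show it, write
\begin{align*}
\rho(z)^2 = x^TKx\,\gamma^{-2}+\big|\tfrac{1-2\tau}{\sqrt2}x+\tfrac{1}{\sqrt2}v\big|^2+\tfrac12\gamma^{-2}|v|^2-\tfrac12|v|^2+\ldots
\end{align*}
and reorganise it as $\gamma^{-2}x^TKx+\frac{(1-2\tau)^2}{2}|x|^2+(1-2\tau)x\cdot v+\gamma^{-2}|v|^2$. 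Then bound the cross term by Young's inequality with weight tuned using $\tau\le 1/8$ and $\gamma^{-2}\kappa\ge 4\tau$, which is where \eqref{eq:tau} enters. I would simply cite the bound as established in the proof of Theorem~\ref{thm:conv}.

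With $m:=\min(\frac14\gamma^{-2},\frac{9}{128}+\kappa\gamma^{-2})$, the comparison gives $|z|\le m^{-1/2}\rho(z)$, and
\begin{align*}
m^{-1/2}=\max\Big(2\gamma,\big(\tfrac{9}{128}+\kappa\gamma^{-2}\big)^{-1/2}\Big)=M_2 ,
\end{align*}
as defined in \eqref{eq:M2}. Substituting into the triangle inequality finishes the second claim.
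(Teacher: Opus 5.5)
Your proposal is correct and follows the paper's proof: you use the triangle inequality through $\mu_h$, the contraction \eqref{eq:conv_inv} from Theorem~\ref{thm:conv}, the bias bound of Theorem~\ref{thm:strongacc}, and the lower bound in \eqref{eq:equivdist} to obtain $M_2$. Splitting in $\mathcal{W}_2$ as you do, or converting the whole $\mathcal{W}_{2,\rho}$ estimate, gives identical constants because $M_2\sqrt{\max(L_K\gamma^{-2}+1,\tfrac32\gamma^{-2})}=M_1$, so nothing is ``paid twice'' either way.

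Citing \eqref{eq:equivdist} rather than relying on your Young's-inequality sketch is the right call. If the off-diagonal block in \eqref{eq:rho} is taken literally as $B=\frac{1-2\tau}{2}1_d$, that matrix is not positive definite for large $\gamma$. The bound holds once $B$ is read as $\frac{1-2\tau}{2\gamma}1_d$, which is what the paper's own computations (the derivative of $\rho^2$ and \eqref{eq:W2_stracc2}) actually use.
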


\begin{proof}
The proof is given in Section~\ref{sec:proof_compl}.
\end{proof}

\begin{remark}[$\varepsilon$-accuracy]
Assume that we can choose $\gamma$ such that the contraction rate $c$ is optimized, i.e., $\gamma^2=2\kappa$. Note that in this case by \eqref{eq:condh}, $L_G/\kappa\le 1$.
Then, for some $\varepsilon>0$ we obtain $\varepsilon$-accuracy, i.e., $\mathcal{W}_{2,\rho}(\nu_k, \mu)\le \varepsilon$, by setting
\begin{align*}
h^{-1}\ge 64 \sqrt{d} \frac{L_G}{\kappa}\sqrt{5 \frac{L_K}{\kappa}}\varepsilon^{-1}
\end{align*}
and the number of steps by 
\begin{align*}
k\ge h^{-1} \frac{8}{\sqrt{2\kappa}} \log\Big(\frac{2M_1 \mathcal{W}_{2}(\nu_0, \mu_h)}{\varepsilon}\Big) \ge  \sqrt{d} \frac{L_G}{\kappa}\sqrt{5 \frac{L_K}{\kappa}}\varepsilon^{-1}\frac{2^9}{\sqrt{2\kappa}} \log\Big(\frac{2M_1 \mathcal{W}_{2}(\nu_0, \mu_h)}{\varepsilon}\Big).
\end{align*}
Hence, $h^{-1}$ is of order $\mathcal{O}(\sqrt{d}L_K/\kappa\varepsilon^{-1})$.
Since there is one gradient evaluation of the function $G$ per step, we need $\mathcal{O}(\sqrt{dL_K}L_G/\kappa^2)$ gradient evaluations to obtain $\varepsilon$-accuracy.

\end{remark}

To establish strong accuracy bounds for the $\mathcal{PGP}$-sampler, we assume additionally:

\begin{assumption}\label{ass_GHess}
Suppose that $G$ is three times continuously differentiable. Let $\nabla^3 G$ denote the tensor of third derivatives that for each $x\in\mathbb{R}^d$ is a bilinear operator mapping $(v,w)\in\mathbb{R}^d$ to $\mathbb{R}^d$.
Suppose that there exists a constant $L_H<\infty$ such that for all $v,w\in\mathbb{R}^d$
\begin{align*}
\sup_{x\in\mathbb{R}^d}\|\nabla^3 G(x)[v,w]\| \le L_H|v||w|.
\end{align*}
\end{assumption}

\begin{theorem}[Strong accuracy for the $\mathcal{PGP}$-sampler] \label{thm:stracc_PGP}
Suppose Assumption~\ref{ass:U}, Assumption~\ref{ass_GHess} and \eqref{eq:condh2} hold true. 
Denote by $\mu$ the invariant measure of the exact continuous Langevin dynamics given by \eqref{eq:LD} and by $\tilde{\mu}_h$ the invariant measure of the kinetic Langevin sampler given by \eqref{eq:PGP}. 
Further, we assume that $\int_{\mathbb{R}^{2d}} x \mu(\rmd x)=0$ and $\nabla G(0)=0$. Then, 
\begin{align*}
\mathcal{W}_{2,\rho}(\mu, \tilde{\mu}_h)&\le h^2 \frac{1}{c}\mathcal{C}\Big(L_G \sqrt{d}\sqrt{ \frac{L_K}{\gamma^2}+ \frac{\gamma^2}{\kappa}+ \frac{L_K}{\kappa}+\frac{L_K^2}{\kappa \gamma^2}}+ \sqrt{L_H}\frac{d}{\gamma}\Big),
\end{align*}
where $c$ is given in \eqref{eq:c} and $\mathcal{C}\in\mathbb{R}_+$ is some number.
\end{theorem}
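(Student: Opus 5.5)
We use the standard "contraction plus local error" argument, as in the $\mathcal{PG}$ case (Theorem~\ref{thm:strongacc}), now with a second-order local error estimate.

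\emph{Reduction to a one-step estimate.} Let $\pi_h$ be the $\mathcal{PGP}$ kernel and $(Z_t)=(X_t,V_t)$ the exact dynamics \eqref{eq:LD} started at $Z_0\sim\mu$, so that $Z_h\sim\mu$. Since $\mu=\mu\,P_h$ and $\tilde\mu_h=\tilde\mu_h\pi_h$, the triangle inequality and the one-step contraction of Theorem~\ref{thm:conv_PGP2} give
\begin{align*}
\mathcal{W}_{2,\rho}(\mu,\tilde\mu_h)\le \mathcal{W}_{2,\rho}(\mu P_h,\mu\pi_h)+\mathcal{W}_{2,\rho}(\mu\pi_h,\tilde\mu_h\pi_h)\le \mathcal{W}_{2,\rho}(\mu P_h,\mu\pi_h)+e^{-ch}\mathcal{W}_{2,\rho}(\mu,\tilde\mu_h).
\end{align*}
Because $1-e^{-ch}\ge ch/2$ for the admissible $h$, it suffices to prove $\mathcal{W}_{2,\rho}(\mu P_h,\mu\pi_h)\le \mathcal{C}h^3(\cdots)$. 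This bound is a local strong error in $L^2(\rho)$ at stationarity.

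\emph{Synchronous coupling.} We couple the exact flow and the scheme through the same Brownian path, realizing $\xi,\zeta$ from $(B_s)_{s\le h}$. By Duhamel's formula with respect to the linear part $A$, the exact solution satisfies
\[
Z_h=e^{Ah}Z_0+\text{noise}-\int_0^h e^{A(h-s)}\binom{0}{\nabla G(X_s)}\,\rmd s .
\]
The scheme gives
\[
e^{Ah}Z_0+\text{noise}-\tfrac h2 e^{Ah}\binom{0}{\nabla G(X_0)}-\tfrac h2\binom{0}{\nabla G(\hat X_h)},
\]
where $\hat X_h$ is the position after $\mathcal{G}$. The noise terms are identical. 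The difference is therefore a trapezoidal-rule error for $f(s)=e^{A(h-s)}(0,\nabla G(X_s))^T$, plus the replacement of $X_h$ by $\hat X_h$. Note that $|\hat X_h-X_h|=O(h^2 L_G|X_0|)$, because the perturbation enters the position only through $\int\!\!\int$.

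\emph{Expanding the trapezoidal error.} We expand $\nabla G(X_s)$ by Itô/Taylor. Since $X$ is $C^1$ with $\dot X=V$, we have $\nabla G(X_s)=\nabla G(X_0)+\int_0^s\nabla^2G(X_r)V_r\,\rmd r$, and
\[
\nabla^2 G(X_r)V_r-\nabla^2G(X_0)V_0=\int_0^r\big(\nabla^3G[V,V]+\nabla^2G(-\nabla U-\gamma V)\big)\rmd u+\sqrt{2\gamma}\int_0^r\nabla^2G\,\rmd B_u .
\]
The linear-in-$s$ parts cancel exactly in the trapezoidal rule. This includes the expansion of $e^{A(h-s)}$, whose first-order terms combine symmetrically. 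What remains are deterministic $O(h^3)$ terms and a stochastic integral term whose $L^2$ norm is $O(h^{5/2})$. The stochastic term is the main obstacle: naively it is only $O(h^{5/2})$. We handle it in the standard way (as for OBABO/UBU): the term has mean zero conditional on $Z_0$, so its contribution accumulates like a martingale across steps. This requires using the local-error framework of Leimkuhler--Paulin--Whalley, where the global error over $k$ steps is bounded by $c^{-1}(\text{mean local error})/h+c^{-1/2}(\text{fluctuation})/h^{1/2}$. For this we need the strengthened contraction \emph{along the synchronous coupling} (pathwise in $\rho$), which the proof of Theorem~\ref{thm:conv_PGP2} provides. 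If needed we accept a constant $\mathcal{C}$ and use $c\le\gamma$, $h\le 1/(4\gamma)$ to absorb the $h^{5/2}/\sqrt{c h}$ term into the stated form.

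\emph{Moment bounds at stationarity.} Under $\mu$ we have $V\sim\mathcal N(0,I_d)$, so $\mathbb E|V|^2=d$ and $\mathbb E|V|^4\lesssim d^2$. Using $\int x\,\rmd\mu=0$ and Poincaré/strong convexity, $\mathbb E|X|^2\le d/\kappa$, and $\mathbb E|\nabla U(X)|^2=\mathbb E\,\Delta U\le d(L_K+L_G)$. Using $\nabla G(0)=0$ we get $|\nabla G(x)|\le L_G|x|$ and $\|\nabla^2 G\|\le L_G$. Hence the terms $\nabla^2G\,\nabla U$, $\gamma\nabla^2 G\,V$, $\nabla^2G\,V$ and $A$-expansion terms such as $KX$, $\gamma V$ contribute $L_G\sqrt d\sqrt{L_K/\gamma^2+\gamma^2/\kappa+L_K/\kappa+L_K^2/(\kappa\gamma^2)}$ after converting to $\rho$ (norms weighted by $\gamma^{-1}$ in the velocity and $\sqrt{L_K}\gamma^{-1}$ in the position). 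The third-derivative term gives $\|\nabla^3G[V,V]\|\le L_H|V|^2$, whose $L^2$ norm is $L_Hd$; with the velocity weight $\gamma^{-1}$ it yields the $\sqrt{L_H}\,d/\gamma$ contribution, up to the constant $\mathcal{C}$ and the precise bookkeeping of powers. Collecting the terms and dividing by $ch/2$ gives the claimed $h^2/c$ bound.
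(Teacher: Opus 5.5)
You identify the right obstacle, but the argument as written does not close.

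\textbf{The one-step reduction cannot give $h^2$.} With the synchronous coupling you use, the one-step error from stationarity contains the It\^o-integral part of the trapezoidal error (cf.\ Lemma~\ref{lem:trapez}). That part is genuinely of order $h^{5/2}$ in $L^2$: already $\int_0^h(r-\tfrac h2)B_r\,\rmd r=\int_0^h\tfrac{u(h-u)}{2}\,\rmd B_u$ has variance $h^5/120$. So your first paragraph, together with the closing ``divide by $ch/2$'', only yields $\mathcal{W}_{2,\rho}(\mu,\tilde\mu_h)\lesssim h^{3/2}/c$. The one-step reduction must be replaced by a multi-step error recursion, which your proposal only names.

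\textbf{What the multi-step recursion needs.} Set $Y_n=(X_{nh},V_{nh})$ and $\tilde Y_n=(\tilde{\mathbf X}_n,\tilde{\mathbf V}_n)$, let $\psi_h$ be the synchronously coupled one-step map of \eqref{eq:PGP}, and write $e_{n+1}=[\psi_h(Y_n)-\psi_h(\tilde Y_n)]+\ell_n$, with $\ell_n$ the local error at the stationary state $Y_n$. Let $\|\cdot\|_\rho$ and $\langle\cdot,\cdot\rangle_\rho$ be induced by the matrix in \eqref{eq:dist}. The pathwise contraction from the proof of Theorem~\ref{thm:conv_PGP2} does control the first bracket. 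It does not, however, let you drop the cross term $\mathbb{E}\langle\psi_h(Y_n)-\psi_h(\tilde Y_n),\ell_n\rangle_\rho$. The propagated difference is not $\mathcal F_{nh}$-measurable, because the closing half-kick evaluates $\nabla G$ at post-noise positions. You additionally need the pathwise bound $\|\psi_h(x)-\psi_h(y)-(x-y)\|_\rho\le C_2h\|x-y\|_\rho$, with $C_2\lesssim\gamma+\sqrt{L_K}+L_K/\gamma$. The recursion then gives
\[
\sup_n\mathbb{E}[\rho(Y_n,\tilde Y_n)^2]^{1/2}\lesssim\frac{C_0h^2}{c}+\frac{C_1C_2h^{5/2}}{c}+\frac{C_1h^2}{\sqrt c},
\]
where $C_1h^{5/2}$ bounds the conditionally centred It\^o part of $\ell_n$ and $C_0h^3$ bounds the rest. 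Your stated LPW-type bound omits the middle term. Since $C_1\lesssim\gamma^{-1/2}L_G\sqrt d$, $c\le\gamma/8$, and $C_2h^{1/2}\lesssim\sqrt\gamma+\sqrt{L_K/\gamma}$ under \eqref{eq:condh2}, all three terms have the stated form. With this repair your route is a valid alternative.

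\textbf{Comparison with the paper.} The paper never meets that cross term.
It compares the exact and numerical chains over only $l=\lfloor\hat\lambda^{-1/2}h^{-1}\rfloor$ steps and telescopes $\mathbf Z_{k+1}$, $\mathbf W_{k+1}$ into sums over $i\le k$.
It isolates the trapezoidal error of $\nabla G$ along the \emph{exact} path, whose It\^o parts are orthogonal over disjoint intervals. Hence $\tilde J_1,\tilde J_2\lesssim(k+1)h^5\gamma L_G^2d$ rather than $(k+1)^2h^5\gamma L_G^2d$.
A discrete Gr\"onwall then grows by at most a factor $e$ over this horizon.
Contraction enters only once, through $(1-e^{-chl})^{-1}\le1+(chl)^{-1}$.
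Your route is more modular: a local error at stationarity plus contraction, reusable for other splittings, and it controls the coupled error uniformly in $n$. The paper's is self-contained and needs neither pathwise contraction nor a Lipschitz-deviation bound.

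\textbf{The $L_H$ term.} The step ``$L^2$ norm $L_Hd$ \dots yields $\sqrt{L_H}\,d/\gamma$'' is not a matter of bookkeeping: your computation gives $L_H d/\gamma$. The paper reaches $\sqrt{L_H}$ only because $\mathbf M_2$ records $L_H(d^2+2d)$, whereas its own bounds \eqref{eq:badterm1}--\eqref{eq:badterm2} give $L_H^2(d^2+2d)$. On either route the honest conclusion carries $L_H\,d/\gamma$.
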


\begin{proof}
The proof is given in Section~\ref{sec:proof_compl}.
\end{proof}

\begin{theorem}[Complexity result for the $\mathcal{PGP}$-sampler] \label{thm:complex_PGP}

Let $\nu_0\in\mathcal{P}_2(\mathbb{R}^{2d})$. Suppose Assumption~\ref{ass:U}, Assumption~\ref{ass_GHess} and \eqref{eq:condh2} hold true. Further, we assume that $\int_{\mathbb{R}^{2d}} x \mu(\rmd x)=0$ and $\nabla G(0)=0$. Denote by $\nu_k$ the distribution of the resulting Markov chain given by \eqref{eq:PGP} with initial distribution $\nu_0$. Then, for $k\in\mathbb{N}$, 
\begin{align*}
\mathcal{W}_{2,\rho}(\nu_k, \mu)&\le e^{-c kh}\mathcal{W}_{2,\rho}(\nu_0, \tilde{\mu}_h)+  \frac{h^2}{c}\mathcal{C}\Big(L_G \sqrt{d}\sqrt{ \frac{L_K}{\gamma^2}+ \frac{\gamma^2}{\kappa}+ \frac{L_K}{\kappa}+\frac{L_K^2}{\kappa \gamma^2}}+ \sqrt{L_H}\frac{d}{\gamma}\Big),
\\  \mathcal{W}_{2}(\nu_k, \mu)&\le M_1 e^{-c kh}\mathcal{W}_{2}(\nu_0, \tilde{\mu}_h)\\ & + h^2  \frac{M_2}{c}\mathcal{C}\Big(L_G \sqrt{d}\sqrt{ \frac{L_K}{\gamma^2}+ \frac{\gamma^2}{\kappa}+ \frac{L_K}{\kappa}+\frac{L_K^2}{\kappa \gamma^2}}+ \sqrt{L_H}\frac{d}{\gamma}\Big),
\end{align*}
where $c$, $M_1$ and $M_2$ are given in \eqref{eq:c}, \eqref{eq:M} and \eqref{eq:M2}, respectively, and $\mathcal{C}\in\mathbb{R}_+$ is some number.
\end{theorem}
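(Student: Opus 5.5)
The plan is to combine Theorem~\ref{thm:conv_PGP2} and Theorem~\ref{thm:stracc_PGP} with the triangle inequality, exactly as for Theorem~\ref{thm:complex}. In the twisted metric we write
\begin{align*}
\mathcal{W}_{2,\rho}(\nu_k,\mu)\le \mathcal{W}_{2,\rho}(\nu_k,\tilde{\mu}_h)+\mathcal{W}_{2,\rho}(\tilde{\mu}_h,\mu).
\end{align*}
For the first term we use that $\tilde{\mu}_h$ is invariant for the $\mathcal{PGP}$-chain, which exists and is unique by Theorem~\ref{thm:conv_PGP2}. Starting the chain from $\tilde{\mu}_h$ gives $\eta_k=\tilde{\mu}_h$ for all $k$. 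The contraction estimate of Theorem~\ref{thm:conv_PGP2} then yields $\mathcal{W}_{2,\rho}(\nu_k,\tilde{\mu}_h)\le e^{-ckh}\mathcal{W}_{2,\rho}(\nu_0,\tilde{\mu}_h)$. The second term is bounded directly by Theorem~\ref{thm:stracc_PGP}, since its hypotheses (Assumptions~\ref{ass:U}, \ref{ass_GHess}, \eqref{eq:condh2}, centred $\mu$, $\nabla G(0)=0$) are exactly those assumed here. Together these give the first inequality with the same constant $\mathcal{C}$.

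For the Euclidean statement we split the same way:
\begin{align*}
\mathcal{W}_2(\nu_k,\mu)\le \mathcal{W}_2(\nu_k,\tilde{\mu}_h)+\mathcal{W}_2(\tilde{\mu}_h,\mu).
\end{align*}
The first term is at most $M_1e^{-ckh}\mathcal{W}_2(\nu_0,\tilde{\mu}_h)$, again by Theorem~\ref{thm:conv_PGP2} with $\eta_0=\tilde{\mu}_h$. For the second term we need the norm comparison $\mathcal{W}_2\le M_2\,\mathcal{W}_{2,\rho}$. We then apply Theorem~\ref{thm:stracc_PGP}, which yields the factor $h^2 M_2 c^{-1}\mathcal{C}(\dots)$.

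The only real step is this equivalence constant, i.e.\ a lower bound for the quadratic form $\rho^2$ defined by \eqref{eq:rho}. Since $K\succeq\kappa$, we have
\begin{align*}
\rho^2\ge (\kappa\gamma^{-2}+\tfrac{(1-2\tau)^2}{2})|z_x|^2+(1-2\tau)\langle z_x,z_v\rangle+\gamma^{-2}|z_v|^2 .
\end{align*}
This reduces to the smallest eigenvalue of a $2\times2$ matrix, since each direction decouples. We estimate the cross term by Young's inequality, $(1-2\tau)|\langle z_x,z_v\rangle|\le \tfrac{(1-2\tau)^2}{2}\,\alpha|z_x|^2+\tfrac{1}{2\alpha}|z_v|^2$, with $\alpha$ chosen so that the $|z_v|^2$ coefficient remains at least $\gamma^{-2}/4$. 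We then use $\tau\le 1/8$ and $\gamma h$-independent bounds to reach coefficients $\min\big((1/4)\gamma^{-2},\,9/128+\kappa\gamma^{-2}\big)$, matching the denominator of $M_1$ in \eqref{eq:M}. This gives $|z|\le M_2\rho$ with $M_2=\max(2\gamma,(9/128+\kappa\gamma^{-2})^{-1/2})$. The same comparison was presumably established in the proof of Theorem~\ref{thm:conv} to obtain $M_1$ and used for Theorem~\ref{thm:complex}, so we simply reuse it here.

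I expect no genuine obstacle. The only care needed is to ensure that the lower eigenvalue bound used for $M_2$ is the same one implicit in $M_1$. Otherwise the argument is a two-line triangle inequality.
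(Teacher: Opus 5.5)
Your proposal is correct and is essentially the paper's proof. It uses the triangle inequality through $\tilde{\mu}_h$, Theorem~\ref{thm:conv_PGP2} with $\eta_0=\tilde{\mu}_h$ for the first term and Theorem~\ref{thm:stracc_PGP} for the second. For the Euclidean bound it uses the lower comparison in \eqref{eq:equivdist}, whose constant is exactly $M_2^{-2}$.

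One caution on your side computation. Your Young step reaches $\min(\gamma^{-2}/4,\,9/128+\kappa\gamma^{-2})$ (with $\alpha=3/4$) only if the cross term is $(1-2\tau)\gamma^{-1}\langle z_x,z_v\rangle$, i.e.\ $B=\frac{1-2\tau}{2\gamma}1_d$; this is the scaling the derivative computation in the proof of Theorem~\ref{thm:conv} actually uses. If you read $B$ literally from \eqref{eq:rho}, you need $\alpha\ge 2\gamma^2/3$, and then the $|z_x|^2$ coefficient cannot stay above $9/128+\kappa\gamma^{-2}$ once $\gamma$ exceeds an absolute constant. So simply cite \eqref{eq:equivdist}, as you ultimately propose, rather than re-deriving it that way.
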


\begin{proof}
The proof is given in Section~\ref{sec:proof_compl}.
\end{proof}

\begin{remark} 
For the strong accuracy bounds we observe: If $L_G\le \kappa$, we assume $\kappa\gamma^{-2}=(1/2)$. Then, $c=\gamma/8$ and
\begin{align*}
\mathcal{W}_{2,\rho}(\mu, \tilde{\mu}_h)&\le h^2 \frac{8}{\gamma}\mathcal{C}\Big(L_G \sqrt{ \frac{L_K}{2\kappa}+ \frac{2\kappa}{\kappa}+ \frac{L_K}{\kappa}+\frac{L_K^2}{2\kappa^2}}\sqrt{d}+ \sqrt{L_H}\frac{d}{\gamma}\Big)
\\ & \le  h^2 \frac{8}{\sqrt{2\kappa}}\mathcal{C}\Big(L_G 2 \frac{L_K}{\kappa}\sqrt{d}+ \sqrt{L_H}\frac{d}{\gamma}\Big).
\end{align*}
If $L_G>\kappa$, we assume $L_G\gamma^{-2} =(1/2)$. Then, $c=\kappa\gamma^{-1}/4=\kappa/(4\sqrt{2L_G})$ and
\begin{align*}
\mathcal{W}_{2,\rho}(\mu, \tilde{\mu}_h)&\le h^2 \frac{4\sqrt{2L_G}}{\kappa}\mathcal{C}\Big(L_G \sqrt{d}\sqrt{ \frac{L_K}{2L_G}+ \frac{2L_G}{\kappa}+ \frac{L_K}{\kappa}+\frac{L_K^2}{\kappa 2L_G}}+ \sqrt{L_H}\frac{d}{\gamma}\Big)
\\ & \le \mathcal{O} \Big(h^2 \frac{1}{\sqrt{\kappa}}\Big(L_G \sqrt{d}\frac{L_K+L_G}{\kappa}+ \sqrt{L_H\frac{L_G}{\kappa}}\frac{d}{\gamma}\Big)\Big).
\end{align*}
Then, for some $\varepsilon>0$ we obtain $\varepsilon$-accuracy, i.e., $\mathcal{W}_{2,\rho}(\nu_k, \mu)\le \varepsilon$, by taking $h^{-1}$ of order $\mathcal{O}(d^{1/4} \varepsilon^{-1/2})$. If $\exp(-chk) M_1 \mathcal{W}_2(\nu_0, \mu_h^*)\le \varepsilon/2$, we obtain for the number $k$ of steps the order $\mathcal{O}(d^{1/4} \varepsilon^{-1/2} \log(\varepsilon^{-1}))$.
\end{remark}

\begin{remark}[Comparison to other samplers and conditions on the potential]

We observe that the above dimension dependency in the complexity guarantees are comparable with the one of OBABO \cite{LePaWh2024a} and the UBU splitting \cite{SaZy2021,PaWh2024}.

Further, we note that often stricter conditions on the Hessian of $ G$ can be imposed. For a broad range of applications $G$ satisfies the strongly Hessian Lipschitz assumption as in \cite[Assumption 2]{PaWh2024} (see also \cite{ChGaJi2023}).
Assuming $G$ to be strongly Hessian Lipschitz, we can obtain a better dimension dependence in the strong accuracy estimate and in the complexity bounds.
In particular, using \cite[Lemma 7]{PaWh2024} the terms \eqref{eq:badterm1} and \eqref{eq:badterm2} in the proof of Theorem~\ref{thm:complex_PGP}, which give the overall dependence $\mathcal{O}(h^2d)$, can be bounded by terms of order $\mathcal{O}(h^6 d)$ instead of $\mathcal{O}(h^6 d^2)$. This estimate improves the bound in Theorem~\ref{thm:complex_PGP} to $\mathcal{O}(h^2 \sqrt{d})$.
\end{remark}

\section{Proofs} \label{sec:proofs}

\subsection{Proofs of the convergence results} \label{sec:proof_conv}

\begin{proof}[Proof of Theorem~\ref{thm:conv}]
Fix $h>0$ satisfying \eqref{eq:condh}.
Let $(x,v),(x',v')\in\mathbb{R}^{2d}$ and $\xi,\zeta\sim \mathcal{N}(0_d,I_d)$ be two independent standard normally distributed random variables. In the following we apply a synchronously coupled transition step of $\mathcal{P}\mathcal{G}$-step to $(x,v)$ and $(x',v')$, i.e., we apply the same pair of standard normally distributed random variables $(\xi, \zeta)$ in the transition step \eqref{eq:GP},
\begin{align*}
((\mathbf{X},\mathbf{V}),(\mathbf{X}',\mathbf{V}'))=(\mathcal{P}\mathcal{G}(x,v,h,\xi, \zeta),\mathcal{P}\mathcal{G}(x',v',h,\xi, \zeta)).
\end{align*}
For the $\mathcal{G}$-step it holds
\begin{align*}
((\mathbf{X}^G,\mathbf{V}^G),(\mathbf{X}'^G,\mathbf{V}'^G))=(\mathcal{G}(x,v,h,\xi, \zeta),\mathcal{G}(x',v',h,\xi, \zeta))=(x_h,v_h,x'_h,v_h'),
\end{align*}
where $(x_s,v_s,x'_s,v_s')_{s\in[0,h]}$ is given as a solution to 
\begin{align} \label{eq:exactGauss}
\begin{cases}
 \rmd x_s&= v_s\rmd s
\\ \rmd v_s&=(-\gamma v_s - K x_s)\rmd s + \sqrt{2 \gamma}\rmd B_s
\\ \rmd x_s'&= v_s'\rmd s
\\ \rmd v_s'&=(-\gamma v_s' - K x_s')\rmd s + \sqrt{2 \gamma}\rmd B_s
\end{cases}
\end{align}
and $(x_0,v_0,x'_0,v_0')=(x,v,x',v')$. Note that since we represent the $\mathcal{G}$-step by the continuous dynamics, the pair of random variables  $(\xi, \zeta)$ are replaced by the Brownian motion $(B_t)_{t\ge 0}$ in both copies resulting in a synchronous coupling for the continuous dynamics.
Then, the difference process $(z_s,w_s)_{s\in[0,h]}=(x_s-x_s',v_s-v_s')_{s\in[0,h]}$ satisfies
\begin{align} \label{eq:diffproc}
\begin{cases}
\frac{\rmd}{\rmd s}z_s=w_s,
\\ \frac{\rmd}{\rmd s}w_s=-\gamma w_s -K z_s,
\end{cases}
\end{align}
and we define 
\begin{align} \label{eq:diff_Gstep}
(\mathbf{Z}^G,\mathbf{W}^G)=(\mathbf{X}^G-\mathbf{X}'^G,\mathbf{V}^G-\mathbf{V}'^G).
\end{align}
For $\rho:\mathbb{R}^{2d}\times \mathbb{R}^{2d}\to [0,\infty)$, given by \eqref{eq:dist} and \eqref{eq:rho}, 
we write 
$\rho((x_s,v_s),( x_s', v_s'))=\rho(z_s,w_s)$ for simplicity and obtain analogously to \cite[Theorem 1]{Sc2024} by Ito's formula
\begin{align*}
\frac{\rmd }{\rmd s} \rho(z_s,w_s)^2&=-(1-2\tau)\gamma^{-1} z_s \cdot(K z_s) - (1-2\tau)\gamma^{-1}(2\tau \gamma) z_s\cdot w_s- \frac{1+2\tau}{\gamma} |w_s|^2
\\ & \le -2\tau \gamma \rho(z_s,w_s)^2- \frac{z_s\cdot(Kz_s)}{4\gamma}  - \gamma^{-1}|w_s|^2,
\end{align*}
where the last step holds since for all $z\in\mathbb{R}^d$
\begin{align*}
-(1-2 \tau)\gamma^{-1} z\cdot(Kz)&=-(1-4 \tau)\gamma^{-1} z\cdot(Kz)- 2\tau\gamma^{-1}z\cdot(Kz)
\\ & \le -\frac{z\cdot(Kz)}{2\gamma}- 2\tau\gamma^{-1}z\cdot(Kz)
\\ & \le -(2\tau \gamma)\Big(\frac{z\cdot(Kz)}{\gamma^{2}}+\frac{(1-2\tau)^2}{2}|z|^2\Big)-\frac{z\cdot(K z)}{4\gamma}
\end{align*}
due to applying \eqref{eq:tau} twice.
Hence, for all $t\in[0,h]$
\begin{align} \label{eq:Grho}
\rho(z_t, w_t)^2-\rho(z_0,w_0)^2\le \int_0^t (-2\tau \gamma) \rho(z_s,w_s)^2 \rmd s +\int_0^t \Big(- \frac{z_s\cdot(Kz_s)}{4\gamma}  - \gamma^{-1}|w_s|^2\Big)\rmd s.
\end{align}
Further, we observe by \eqref{eq:diffproc} and Young's inequality
\begin{align*}
|w_t|^2-|w_s|^2& =\int_s^t w_u\cdot(-2\gamma w_u-2K z_u)\rmd u 
\\ &\le \int_s^t -2\gamma|w_u|^2+ \gamma |w_u|^2 + \frac{|Kz_u|^2}{\gamma} \rmd u =
 \int_s^t-\gamma |w_u|^2+ \frac{|Kz_u|^2}{\gamma} \rmd u.
\end{align*}
Inserting this bound in \eqref{eq:Grho}, we obtain
\begin{align*}
\rho(z_t, w_t)^2-\rho(z_0,w_0)^2&\le \int_0^t (-2\tau \gamma) \rho(z_s,w_s)^2 \rmd s -\int_0^t \gamma^{-1}|w_t|^2\rmd s
\\ & +\int_0^t \Big(- \frac{z_s\cdot(Kz_s)}{\gamma}  + \int_s^t\Big(-|w_u|^2+\frac{|Kz_u|^2}{2\gamma^2} \Big)\rmd u\Big)\rmd s.
\end{align*}
For the two terms involving $K$ it holds
\begin{align*}
\int_0^t \Big(- \frac{z_s\cdot(Kz_s)}{4\gamma}+ \int_s^t\frac{|Kz_u|^2}{2\gamma^2} \rmd u\Big)\rmd s&\le \int_0^t \Big(- \frac{z_s\cdot(Kz_s)}{4\gamma}+ h\frac{|Kz_s|^2}{2\gamma^2}\Big)\rmd s
\\ & \le \int_0^t \Big(- \frac{z_s\cdot(Kz_s)}{4\gamma}+ h\frac{L_K z_s(Kz_s)}{2\gamma^2}\Big)\rmd s\le 0,
\end{align*}
where we applied \cite[Theorem 2.1.5]{Ne2018} in the second last step and \eqref{eq:condh} in the last one.
Further,
\begin{align*}
\int_0^t \int_s^t |w_u|^2 \rmd u \rmd s &= \int_0^t \int_0^t 1_{\{s\le u\} } |w_u|^2 \rmd u \rmd s= \int_0^t \int_0^t 1_{\{s\le u\} } |w_u|^2 \rmd s \rmd u
\\ & =\int_0^t |w_u|^2 \int_0^u  \rmd s \rmd u=\int_0^t |w_u|^2 u \rmd u.
\end{align*}
Hence, for any $t\in[0,h]$
\begin{align*}
\rho(z_t, w_t)^2-\rho(z_0,w_0)^2&\le \int_0^t (-2\tau \gamma) \rho(z_s,w_s)^2 \rmd s -t  \gamma^{-1}|w_t|^2- \int_0^t |w_s|^2 s \rmd s
\\ & \le  \int_0^t (-2\tau \gamma) (\rho(z_s,w_s)^2+s\gamma^{-1}|w_s|^2) \rmd s -t  \gamma^{-1}|w_t|^2,
\end{align*}
where we applied $\tau\le (1/2)$ in the last step. Hence the continuous function $f(s)=\rho(z_s, w_s)^2+ s\gamma^{-1} |w_s|^2$ 
satisfies
\begin{align*}
f(t)-f(0)\le \int_0^t (-2\tau \gamma) f(s) \rmd s.
\end{align*}
By Grönwall's inequality we obtain
\begin{align*}
f(t)\le e^{-2\gamma \tau t} f(0)
\end{align*}
and thus for $t=h$
\begin{align} \label{eq:Gstep_conv}
\rho((\mathcal{G}(x,v,h,\xi, \zeta),\mathcal{G}(x',v',h,\xi, \zeta))^2\le e^{-2 \gamma \tau h}\rho((x,v),(x',v'))^2-h\gamma^{-1}|\mathbf{W}^G|^2.
\end{align}
with $\mathbf{W}^G$ given by \eqref{eq:diff_Gstep}.
Further, we observe
\begin{align*}
\rho((\mathcal{P}(x,v,h),&\mathcal{P}(x',v',h))^2
\\ &\le \rho((x,v),(x',v'))^2-h (1-2\tau)\gamma^{-1}(\nabla G(x)-\nabla G(x'))(x-x')
\\ & -2h\gamma^{-2}(\nabla G(x)-\nabla G(x'))(v-v')+ h^2\gamma^{-2}|\nabla G(x)-\nabla G(x')|^2.
\end{align*}
By Young's inequality, \eqref{eq:tau} and \cite[Theorem 2.1.5]{Ne2018} we obtain
\begin{align*}
&-h (1-2\tau)\gamma^{-1}(\nabla G(x)-\nabla G(x'))(x-x') -2h\gamma^{-2}(\nabla G(x)-\nabla G(x'))(v-v')
\\ & \qquad  + h^2\gamma^{-2}|\nabla G(x)-\nabla G(x')|^2
\\ & \le \gamma^{-1} h  |v-v'|^2 + h\gamma^{-1} \Big(-\frac{3}{4}+\gamma^{-2} L_G+ h\gamma^{-1} L_G\Big)(\nabla G(x)-\nabla G(x'))(x-x')\le 0,
\end{align*}
where the last step holds by convexity of $G$ and the fact that
the prefactor is non-positive due to \eqref{eq:condh}.
Hence, 
\begin{align*}
\rho((\mathcal{P}(x,v,h),\mathcal{P}(x',v',h))^2\le  \rho((x,v),(x',v'))^2+\gamma^{-1} h |v-v'|^2.
\end{align*}
Combining the two steps we obtain
\begin{equation}\label{eq:contrPG}
\begin{aligned} 
\rho((\mathbf{X},\mathbf{V}),(\mathbf{X}',\mathbf{V}'))^2&=
\rho((\mathcal{P}\mathcal{G}(x,v,h,\xi, \zeta),\mathcal{P}\mathcal{G}(x',v',h,\xi, \zeta))^2
\\ & \le  \rho((\mathbf{X}^G,\mathbf{V}^G),(\mathbf{X}'^G,\mathbf{V}'^G))^2+\gamma^{-1} h |\mathbf{W}^G|^2
\\ & \le e^{-2\gamma \tau h}\rho((x,v),(x',v')).
\end{aligned}
\end{equation}
with $\mathbf{W}^G$ given by \eqref{eq:diff_Gstep}.
Taking expectation and square roots yields
\begin{align*}
\mathcal{W}_{2,\rho}(\nu_1,\eta_1)&\le \mathbb{E}[\rho((\mathbf{X},\mathbf{V}),(\mathbf{X}',\mathbf{V}'))^2]^{1/2}
\\ & \le  e^{-\gamma \tau h}\mathbb{E}_{(x,v)\sim \nu_0, (x',v')\sim \eta_0}[\rho((x,v),(x',v'))^2]^{1/2}.
\end{align*}
Taking the infimum over all couplings between $\eta_0$ and $\nu_0$, we obtain the first bound.

By equivalence of $\rho$ and the Euclidean distance in $\mathbb{R}^{2d}$, i.e., by \eqref{eq:tau}
\begin{equation}\label{eq:equivdist}
\begin{aligned} 
\min\Big(\frac{1}{4\gamma^2},\frac{9}{128}+\kappa \gamma^{-2}\Big)|(x,v)-(x',v')|^2&\le \rho((x,v),(x',v'))^2
\\ & \le \max\Big(L_K\gamma^{-2}+1,\frac{3}{2\gamma ^{2}}\Big) |(x,v)-(x',v')|^2
\end{aligned}
\end{equation}
we obtain the second bound.
The result \eqref{eq:conv_inv} is an immediate consequence of \eqref{eq:convW2} by setting $\eta_0=\mu_h$.  Note that the existence of the unique invariant measure follows by Banach fixed point theorem.
\end{proof}

\begin{proof}[Proof of Theorem~\ref{thm:conv_PGP2}]
As in the proof of Theorem~\ref{thm:conv}, we consider the process $(z_s,w_s)_{s\ge 0}$ given by \eqref{eq:diffproc} for the step $\mathcal{G}$ and it holds \eqref{eq:Grho}.
By the trapezoidal rule,
\begin{align*}
\int_0^t \frac{1}{2}&(|w_t|^2+|w_0|^2)-|w_s|^2\rmd s 
\\ & = \int_0^t \Big( \frac{1}{2} \int_0^t (t-r) \frac{\rmd^2}{\rmd r^2} |w_r|^2 \rmd r- \int_0^s (s-r) \frac{\rmd^2}{\rmd r^2} |w_r|^2 \rmd r \Big)\rmd s.
\end{align*}
By \eqref{eq:diffproc} and Young's inequality it holds
\begin{align*}
\frac{\rmd^2}{\rmd r^2} |w_r|^2&= \frac{\rmd}{\rmd r} (2 w_r\cdot(-\gamma w_r - K z_r))
\\ & = 4 \gamma^2 |w_r|^2 + 6\gamma w_r \cdot (K z_r) +2 (K z_r)^2- 2 w_r\cdot(Kw_r)  
\\ & \le 7 \gamma^{2} |w_r|^2 + 5 (Kz_r)^2 - 2 w_r\cdot(Kw_r), \qquad \text{and }
\\ \frac{\rmd^2}{\rmd r^2} |w_r|^2 & \ge - (1/4) (Kz_r)^2 - 2 w_r\cdot(Kw_r).
\end{align*}
Hence, for $t\le h$
\begin{align} \label{eq:boundtrapez}
\int_0^t  \frac{1}{2}(|w_t|^2&+|w_0|^2)-|w_s|^2\rmd s \nonumber
\\ & \le \int_0^t \frac{1}{2} \int_0^t (t-r) \Big(7 \gamma^{2} |w_r|^2 + 5 (Kz_r)^2 - 2 w_r\cdot(Kw_r)\Big) \rmd r \rmd s \nonumber
\\ & \quad - \int_0^t \int_0^s (s-r) \Big( - (1/4) (Kz_r)^2 - 2 w_r\cdot(Kw_r)\Big) \rmd r \rmd s \nonumber
\\ & \le \int_0^t \frac{1}{2} \int_0^t (t-r) \Big(7 \gamma^{2} |w_r|^2 + 5 (Kz_r)^2 \Big) \rmd r \rmd s \nonumber
\\ & \quad - \frac{1}{2}\int_0^t \int_0^s (s-r) \Big( - (1/2) (Kz_r)^2 - 2 w_r\cdot(Kw_r)\Big) \rmd r \rmd s \nonumber
\\ & \le \frac{t^2}{2} \int_0^t  \Big(7 \gamma^{2} |w_r|^2 + 5 (Kz_r)^2 + (1/2) (Kz_r)^2 +2 w_r\cdot(Kw_r)\Big) \rmd r\nonumber
\\ & \le \frac{h^2}{2} \int_0^t  \Big(7 \gamma^{2} |w_r|^2 +(11/2) L_K z_r \cdot(Kz_r) +2 L_K |w_r|^2\Big) \rmd r,
\end{align}
where we used Assumption~\ref{ass:U} and \cite[Theorem 2.1.5]{Ne2018}. We consider $\delta_1\in[0,1]$ and
insert \eqref{eq:boundtrapez} in \eqref{eq:Grho} to obtain 
\begin{align*}
&\rho(z_t, w_t)^2-\rho(z_0,w_0)^2  \le \int_0^t (-2\tau \gamma) \rho(z_s,w_s)^2 \rmd s +\int_0^t \Big(- \frac{z_s\cdot(Kz_s)}{4\gamma}  - \gamma^{-1}|w_s|^2\Big)\rmd s
\\ & \le  \int_0^t (-2\tau \gamma) \rho(z_s,w_s)^2 \rmd s +\int_0^t \Big(- \frac{z_s\cdot(Kz_s)}{4\gamma}  - \gamma^{-1}(1-\delta_1)|w_s|^2\Big)\rmd s
\\ &  - \delta_1 \gamma^{-1} \int_0^t \frac{1}{2}(|w_t|^2+|w_0|^2)\rmd s 
 +  \frac{\delta_1 h^2}{2 \gamma} \int_0^t  \Big((7 \gamma^{2}+2L_K) |w_r|^2 + \frac{11}{2} L_K z_r \cdot(Kz_r) \Big) \rmd r
\\ & \le  \int_0^t (-2\tau \gamma) \rho(z_s,w_s)^2 \rmd s - \delta_1 \gamma^{-1} \frac{t}{2}(|w_t|^2+|w_0|^2)
\\ &  +\int_0^t \Big( \Big(\frac{11 \delta_1 L_K h^2}{4\gamma} -\frac{1}{4\gamma}\Big) z_r \cdot(Kz_r) + \gamma^{-1}\Big( \delta_1 h^2(\frac{7}{2} \gamma^{2}+L_K)-(1-\delta_1)\Big)|w_r|^2\Big)\rmd r.
\end{align*}
Then, there exists $\delta_1\in [0,1]$ such that the last integral on the right hand side is bounded by 
\begin{align} \label{eq:boundrhoG2}
\int_0^t \Big( \Big(\frac{11 \delta_1 L_K h^2}{4\gamma} & -\frac{1}{4\gamma}\Big) z_r \cdot(Kz_r) + \gamma^{-1}\Big( \delta_1 h^2((7/2) \gamma^{2}+L_K)-(1-\delta_1)\Big)|w_r|^2\Big)\rmd r \nonumber
\\ & \qquad  \le -\tau \gamma \delta_1 \int_0^t \gamma^{-1} s |w_s|^2 \rmd s.
\end{align} 
In particular, it holds true for $\delta_1=(3/4)$ since by \eqref{eq:condh2} and \eqref{eq:tau}
\begin{align*}
11 \delta_1 h^2 L_K\le 11 \delta_1 \frac{1}{16}\le 1 \qquad \text{and } \qquad \delta_1 h^2((7/2) \gamma^{2}+L_K)-(1-\delta_1)\le -\tau \gamma \delta_1  h,
\end{align*}
where the second inequality is satisfied since for $\delta_1=(3/4)$
\begin{align*}
\delta_1 \Big(h^2((\frac{7}{2}\gamma^2+L_K)+ \tau \gamma h+1\Big)\le \delta_1 \Big(\frac{9}{2}\frac{1}{16}+ \frac{1}{8} \frac{1}{4}+1\Big)\le 1.
\end{align*}
%
Further, we observe that for $t\le h$ by \eqref{eq:tau} and \eqref{eq:condh2}
\begin{align*}
-\frac{t}{2}\gamma^{-1} |w_0|^2 \le - \frac{t}{2}\gamma^{-1} 8\tau  |w_0|^2 \le - \frac{t^2}{2} 32\tau |w_0|^2 = (-\tau \gamma)\int_0^t 32 \gamma^{-1} s|w_0|^2 \rmd s  .
\end{align*}
Hence,
\begin{align*}
\rho(z_t, w_t)^2&+\delta_1 \gamma^{-1} \frac{t}{2}\Big(|w_t|^2+\frac{32}{33}|w_0|^2\Big)-\rho(z_0,w_0)^2  
\\ & \le \int_0^t (-2\tau \gamma) (\rho(z_s,w_s)^2 +\delta_1 \frac{\gamma^{-1}}{2}|w_s|^2)   \rmd s -\delta_1 \gamma^{-1} \frac{t}{2}\frac{1}{33}|w_0|^2
\\ & \le \int_0^t (-2\tau \gamma) \Big(\rho(z_s,w_s)^2 +\delta_1 \frac{\gamma^{-1}}{2}|w_s|^2 +\delta_1\frac{32}{33}  \frac{\gamma^{-1}}{2} s|w_0|^2\Big)   \rmd s 
\end{align*}
and therefore by Grönwall's inequality
\begin{align*}
\rho(z_t, w_t)^2&+\delta_1 \gamma^{-1} \frac{t}{2}\Big(|w_t|^2+\frac{32}{33}|w_0|^2\Big)\le e^{-2\gamma \tau t} \rho(z_0, w_0)^2.
\end{align*}
In particular for $t=h$ and $\delta_1=(3/4)$
\begin{align} \label{eq:contrG}
\rho(\mathcal{G}(x,v,h, \xi,\zeta),&\mathcal{G}(x',v',h, \xi,\zeta))^2 
\\ &   \le e^{-2\gamma \tau h} \rho((x,v),(x',v'))^2-\frac{3}{4} \gamma^{-1} \frac{h}{2}\Big(|\mathbf{W}^G|^2+\frac{32}{33}|v-v'|^2\Big) \nonumber
\\ & \le e^{-2\gamma \tau h} \rho((x,v),(x',v'))^2- \gamma^{-1} \frac{4h}{11}\Big(|\mathbf{W}^G|^2+|v-v'|^2\Big)
\end{align}
with $\mathbf{W}^G$ given by \eqref{eq:diff_Gstep}.
Next, we set $\delta_2=\frac{\delta_1}{2}\frac{32}{33}=\frac{4}{11}$.
For the $\mathcal{P}$-step we obtain
\begin{align*}
\rho&((\mathcal{P}(x,v,h/2),\mathcal{P}(x',v',h/2))^2
\\ & = \rho((x,v),(x',v'))^2-(h/2) (1-2\tau)\gamma^{-1}(\nabla G(x)-\nabla G(x'))(x-x')
\\ & \quad -h\gamma^{-2}(\nabla G(x)-\nabla G(x'))(v-v')+ (h^2/4)\gamma^{-2}|\nabla G(x)-\nabla G(x')|^2
\\ & \le \rho((x,v),(x',v'))^2-(3h/8)\gamma^{-1}(\nabla G(x)-\nabla G(x'))(x-x')
\\ & \quad +h \delta_2|v-v'|^2+ \Big(h\gamma^{-3} \delta_2^{-1}/4+ (h^2/4)\gamma^{-2}\Big)|\nabla G(x)-\nabla G(x')|^2
\\ & \le \rho((x,v),(x',v'))^2+ h \delta_2|v-v'|^2
\\ & \quad +\Big(-(3h/8)\gamma^{-1}+ L_G h\gamma^{-3} \delta_2^{-1}/4+ L_G h\gamma^{-3}(1/16) \Big)(\nabla G(x)-\nabla G(x'))(x-x'),
\end{align*}
where we applied Young's inequality with $\delta_2>0$ in the second step and \cite[Theorem 2.1.5]{Ne2018} and Assumption~\ref{ass:U} in the third step.
By \eqref{eq:condh2} and since $\delta_2=4/11$,
\begin{align*}
-(3h/8)\gamma^{-1}+ L_G h\gamma^{-3}\delta_2^{-1}/4+ L_G h\gamma^{-3}(1/16) \le h\gamma^{-1}\Big(-\frac{3}{8}+ \frac{1}{2}\frac{11}{16}+\frac{1}{2}\frac{1}{16}\Big)=0,
\end{align*}
which implies
\begin{align*}
\rho((\mathcal{P}(x,v,h/2),\mathcal{P}(x',v',h/2))^2 & \le \rho((x,v),(x',v'))^2+ h \frac{4}{11}|v-v'|^2.
\end{align*}
Combining this estimate for the $\mathcal{P}$-step with \eqref{eq:contrG} for the $\mathcal{G}$-step 
yields
\begin{align} 
\rho&( (\mathbf{X},\mathbf{V}),  (\mathbf{X}',\mathbf{V}') )^2 \nonumber
 =\rho(\mathcal{P}(\mathcal{G}(\mathcal{P}(x,v,\frac{h}{2}),h, \xi,\zeta),\frac{h}{2}),\mathcal{P}(\mathcal{G}(\mathcal{P}(x',v',\frac{h}{2}),h, \xi,\zeta),\frac{h}{2}))^2 \nonumber
\\ & \le \rho(\mathcal{G}\mathcal{P}_{1/2}(x,v),\mathcal{G}\mathcal{P}_{1/2}(x',v'))^2+ h \frac{4}{11}|\mathcal{G}\mathcal{P}_{1/2}(x,v)_v-\mathcal{G}\mathcal{P}_{1/2}(x',v')_v|^2 \nonumber
\\ & \le e^{-2\gamma \tau h} \rho(\mathcal{P}_{1/2}(x,v),\mathcal{P}_{1/2}(x',v'))^2-\delta_1 \gamma^{-1} \frac{h}{2}\frac{32}{33}\Big(|w_h|^2+|v-v'|^2\Big) \nonumber
\\ & + h \frac{4}{11}|\mathcal{G}\mathcal{P}_{1/2}(x,v)_v-\mathcal{G}\mathcal{P}_{1/2}(x',v')_v|^2 \nonumber
\\ & \le e^{-2\gamma \tau h}\Big(\rho((x,v),(x',v'))^2+ h \frac{4}{11}|v-v'|^2\Big)-\frac{4}{11}\Big(|w_h|^2+|v-v'|^2\Big) + h \frac{4}{11}|w_h|^2 \nonumber
\\ & \le e^{-2\gamma \tau h}\rho((x,v),(x',v'))^2, \nonumber
\end{align}
where $\mathcal{G}\mathcal{P}_{1/2}(x,v)_v$ denotes the second component of a $\mathcal{G}\mathcal{P}_{1/2}$ step with inital data $(x,v)$. Note that we used
$\mathcal{G}\mathcal{P}_{1/2}(x,v)_v-\mathcal{G}\mathcal{P}_{1/2}(x',v')_v=w_h$
where $w_h$ is given by \eqref{eq:diffproc} with initial data $(\mathcal{P}_{1/2}(x,v),\mathcal{P}_{1/2}(x',v'))$.
Taking expectation and square roots yields
\begin{align*}
\mathcal{W}_{2,\rho}(\nu_1,\eta_1) 
\le  e^{-\gamma \tau h}\mathbb{E}_{(x,v)\sim \nu_0, (x',v')\sim \eta_0}[\rho((x,v),(x',v'))^2]^{1/2}.
\end{align*}
Taking the infimum over all couplings between $\eta_0$ and $\nu_0$, we obtain the first bound and by \eqref{eq:equivdist} the second bound. The third and forth bounds are an immediate consequences by setting $\eta_0=\tilde{\mu}_h$ which uniquely exists by Banach fixed point theorem.
\end{proof}

\subsection{Proofs of the complexity results} \label{sec:proof_compl}


%


\begin{proof}[Proof of Theorem~\ref{thm:strongacc}]
Fix $h$ satisfying \eqref{eq:condh}. Consider $l\in\mathbb{N}$ whose precise value we fix later. Denote by $(p_t)_{t\ge 0}$ the transition function of the exact continuous dynamics and by $\pi_h$ the transition kernel of the kinetic Langevin sampler given by \eqref{eq:GP}. Since $\mu$ and $\mu_h$ are the invariant probability measures of the exact dynamics and the kinetic Langevin sampler, respectively, and it holds
\begin{align*}
\mathcal{W}_{2,\rho}(\mu, \mu_h)& = \mathcal{W}_{2,\rho}(\mu p_{lh}, \mu_h \pi_h^l)\le \mathcal{W}_{2,\rho}(\mu p_{lh}, \mu \pi_h^l)+\mathcal{W}_{2,\rho}(\mu \pi_{h}^l, \mu_h \pi_h^l)
\\ & \le \mathcal{W}_{2,\rho}(\mu p_{lh}, \mu \pi_h^l)+e^{-clh}\mathcal{W}_{2,\rho}(\mu , \mu_h ),
\end{align*}
where Theorem~\ref{thm:conv} is applied in the last step.
Hence, 
\begin{align} \label{eq:W2_stracc}
\mathcal{W}_{2,\rho}(\mu, \mu_h)&\le (1-e^{-clh})^{-1} \mathcal{W}_{2,\rho}(\mu p_{lh}, \mu \pi_h^l)\le \Big(1+\frac{1}{clh}\Big)\mathcal{W}_{2,\rho}(\mu p_{lh}, \mu \pi_h^l).
\end{align}
To bound the right hand side, we consider a synchronous coupling of the exact Langevin dynamics $(X_t,V_t)_{t\ge 0}$ given by \eqref{eq:LD} and the kinetic Langevin sampler given by $(\mathbf{X}_k,\mathbf{V}_k)_{k\in\mathbb{N}}$ with identical initial conditions $(X_0,V_0)=(\mathbf{X}_0,\mathbf{V}_0)=(x,v)$. 
We represent $({\mathbf{X}}_k,{\mathbf{V}}_k)_{k\in\mathbb{N}}$ by the following recursive scheme where the splitting steps are successively preformed:
\begin{align*}
\mathcal{G}:\qquad& ({\mathbf{X}}_{k+1}^G,{\mathbf{V}}_{k+1}^G)=(\tilde{X}_{(k+1)h}, \tilde{V}_{(k+1)h})
\\\mathcal{P}:\qquad & ({\mathbf{X}}_{k+1},{\mathbf{V}}_{k+1})=({\mathbf{X}}_{k+1}^G,{\mathbf{V}}_{k+1}^G-(h/2)\nabla G({\mathbf{X}}_{k+1}^G)),
\end{align*} 
where
\begin{align} \label{eq:exactG}
&\begin{cases}
\rmd \tilde{X}_s=\tilde{V}_s \rmd s
\\ \rmd \tilde{V}_s=(-\gamma \tilde{V}_s-K\tilde{X}_s) \rmd s+ \sqrt{2\gamma}\rmd B_s
\end{cases}
\text{with } \tilde{X}_{kh}={\mathbf{X}}_k, \qquad \tilde{V}_{kh}={\mathbf{V}}_k.
\end{align}
%
For the synchronous coupling, we consider the two processes on a joint probability space and take the same Brownian motion $(B_t)_{t\ge 0}$ in \eqref{eq:LD} and \eqref{eq:exactG}.
Then, it holds for $(\mathbf{Z}_k,\mathbf{W}_k)=(X_{kh}-\mathbf{X}_{k},V_{kh}-\mathbf{V}_{k})$
\begin{align*}
\begin{cases}
\mathbf{Z}_{k+1}=\mathbf{Z}_k+ \int_0^h W_{kh+r} \rmd r
\\  \mathbf{W}_{k+1}=\mathbf{W}_k- \int_0^h (\gamma W_{kh+r}+KZ_{kh+r}+\nabla G(X_{kh+r})-\nabla G(\mathbf{X}_{k+1})) \rmd r,
\end{cases}
\end{align*}
where $(Z_u,W_u)_{u\ge 0}=(X_{u}-\tilde{X}_u,V_{u}-\tilde{V}_u)_{u\ge 0}$ and $(\mathbf{Z}_0,\mathbf{W}_0)=(0,0)$. 
We define 
\begin{align}\label{eq:ab}
a_k=\Big(\frac{L_K}{2\gamma^2}+1\Big)\hat{a}_k=\Big(\frac{L_K}{2\gamma^2}+1\Big)\mathbb{E}_{(x,v)\sim \mu}[|\mathbf{Z}_k|^2], \qquad b_k=\mathbb{E}_{(x,v)\sim \mu}[|\mathbf{Z}_k+\gamma^{-1} \mathbf{W}_k|^2]
\end{align}
for $k\in\mathbb{N}$.
Then, 
\begin{equation} \label{eq:W2_stracc2}
\begin{aligned}
\mathcal{W}_{2,\rho}(\mu p_{lh},& \mu \pi_h^l)^2\le \mathbb{E}_{(x,v)\sim \mu}\Big[\begin{pmatrix}
\mathbf{Z}_l & \mathbf{W}_l
\end{pmatrix}\begin{pmatrix}
A & B \\ B & C
\end{pmatrix}\begin{pmatrix}
\mathbf{Z}_l \\ \mathbf{W}_l
\end{pmatrix} \Big]
\\ & \le \mathbb{E}_{(x,v)\sim \mu}\Big[\frac{\mathbf{Z}_l K \mathbf{Z}_l}{\gamma^2} + \frac{(1-2\tau)^2}{4}|\mathbf{Z}_l|^2+ \Big( \frac{1+2\tau}{2}|\mathbf{Z}_l|+|\mathbf{Z}_l+\gamma^{-1}\mathbf{W}_l|\Big)^2\Big]
\\ & \le \max(\gamma^{-2}L_K+1)\mathbb{E}_{(x,v)\sim \mu}[|\mathbf{Z}_l|^2]+2 b_l\le 2(a_l+b_l)
\end{aligned}
\end{equation}
with $A,B,C$ given in \eqref{eq:rho} and $\tau$ given in \eqref{eq:tau}. 
Here, we used that $(1-2\tau)^2/4+(1+2\tau)^2/2\le 1$.
Note that in the following we drop the initial condition in the subscript of the expectation for simplicity.

We bound for all $i\in\mathbb{N}$, 
\begin{align}
 \mathbb{E}\Big[\int_0^h |Z_{ih+s}|^2 \rmd s\Big] & =\mathbb{E}\Big[\int_0^h \Big|\mathbf{Z}_{i}+\int_0^s W_{ih+r}\rmd r\Big|^2 \rmd s\Big]  \nonumber
 \\ & \le 2 h \hat{a}_i + 2 \mathbb{E}\Big[\int_0^h s \int_0^s |W_{ih+r}|^2 \rmd r \rmd s\Big] \nonumber
\\ &\le 2h \hat{a}_i + h^2 \mathbb{E}\Big[\int_0^h  |W_{ih+r}|^2 \rmd r\Big] \label{eq:intZ}
\end{align}
and
\begin{align*}
& \mathbb{E}\Big[\int_0^h |W_{ih+s}|^2 \rmd s\Big]  =\mathbb{E}\Big[\int_0^h\Big|\mathbf{W}_i+ \int_0^s (-\gamma W_{ih+r}- K Z_{ih+r}-\nabla G(X_{ih+r} )) \rmd r \Big|^2 \rmd s\Big]
\\ & \le 4 \int_0^h (2 \gamma^2 (\hat{a}_i+b_i))\rmd s + 4 \mathbb{E}\Big[ \int_0^h \Big|\int_0^s (\gamma W_{ih+r} + \gamma^2 Z_{ih+r} )\rmd r \Big|^2 \rmd s\Big]
\\ & \quad + 4 \mathbb{E}\Big[\int_0^h \Big|\int_0^s (\gamma^2-K) Z_{ih+r} \rmd r \Big|^2 \rmd s\Big]+4 \mathbb{E}\Big[\int_0^h \Big|\int_0^s \nabla G (X_{ih+r})\rmd r \Big|^2 \rmd s \Big]
\\ & \le 8 h\gamma^2 (\hat{a}_i+b_i)
\\ & \quad + 4\gamma^4 \mathbb{E}\Big[\int_0^h s\int_0^s \Big|\gamma^{-1} \mathbf{W}_i +\mathbf{Z}_i + \gamma^{-1}\int_0 ^r - KZ_{ih+u}-\nabla G(X_{ih+u})\rmd u\Big|^2 \rmd r \rmd s \Big]
\\ & \quad + 4 \mathbb{E}\Big[\int_0^h s\int_0^s \max(\gamma^4,L_K^2) |Z_{ih+r}|^2 \rmd r  \rmd s\Big]+4 \int_0^h s \int_0^s L_G^2  \max_{u \ge 0} \mathbb{E}[|X_{u}|^2] \rmd r  \rmd s
\\ & \le 8 h\gamma^2 (\hat{a}_i+b_i)+ 8\gamma^4 \frac{h^3}{3}b_i
\\ & \quad + 8\gamma^2 \mathbb{E}\Big[ \int_0^h s\int_0^s  r \int_0 ^r \Big|-KZ_{ih+u}-\nabla G(X_{ih+u})\Big|^2\rmd u \rmd r \rmd s \Big]
\\ & \quad + 4\max(\gamma^4,L_K^2) \int_0^h  \Big(2s^2 \hat{a}_i+s^3 \mathbb{E}\Big[\int_0^s |W_{ih+r}|^2\rmd r\Big]\Big) \rmd s  +\frac{4h^3}{3} L_G^2  \max_{u \ge 0} \mathbb{E}[|X_{u}|^2] 
\\ & \le 8 h\gamma^2 ((\hat{a}_i+b_i)+  \frac{\gamma^2 h^2}{3}b_i)+ 16 \gamma^2 L_K^2 \int_0^h s\int_0^s  r  \Big( 2 r \hat{a}_i + r^2 \mathbb{E}\Big[ \int_0 ^r  |W_{ih+u}|^2\rmd u\Big]\Big) \rmd r \rmd s
\\& \quad + 16 \gamma^2 \int_0^h s\int_0^s  r \int_0 ^r  L_G^2 \max_{v\ge 0}\mathbb{E}[|X_{v}|^2]\rmd u \rmd r \rmd s+ 8\max(\gamma^4,L_K^2)\frac{h^3 \hat{a}_i}{3}
\\ & \quad + \max(\gamma^4,L_K^2) h^4 \mathbb{E}\Big[\int_0^h |W_{ih+r}|^2 \rmd r \Big] +\frac{4h^3}{3} L_G^2  \max_{u \ge 0} \mathbb{E}[|X_{u}|^2] 
\\ & \le 8 h\gamma^2 (\hat{a}_i+b_i)+  \frac{8\gamma^4 h^3}{3}b_i+ \frac{32\gamma^2 L_K^2 h^5 a_i}{15} +\frac{16\gamma^2 L_K^2 h^6}{24} \mathbb{E}\Big[\int_0 ^h  |W_{ih+u}|^2\rmd u \Big]
\\& \quad + \frac{16 \gamma^2 h^5  L_G^2 \max_{u\ge 0}\mathbb{E}[|X_{u}|^2]}{15} + 8\max(\gamma^4,L_K^2)\frac{h^3 \hat{a}_i}{3}
\\ & \quad + \max(\gamma^4,L_K^2) h^4 \mathbb{E}\Big[\int_0^h |W_{ih+r}|^2 \rmd r\Big]  +\frac{4h^3}{3} L_G^2  \max_{u \ge 0} \mathbb{E}[|X_{u}|^2] ,
\end{align*}
where we used \eqref{eq:intZ} in the third and forth step. Hence, 
\begin{align}
\mathbb{E}\Big[\int_0^h  & |W_{ih+s}|^2 \rmd s \Big] \le \frac{1}{1-\max(\gamma^4,L_K^2) h^4- (2/3)\gamma^2 L_K^2 h^6 }\Big[\Big(8 h\gamma^2+  \frac{8\gamma^4 h^3}{3}\Big)b_i \nonumber
\\&  + \Big(8 h\gamma^2 + \frac{32\gamma^2 L_K^2 h^5 }{15} + 8\max(\gamma^4,L_K^2)\frac{h^3 }{3}\Big)\hat{a}_i \nonumber
\\ &  + \Big(\frac{16 \gamma^2 h^5  L_G^2}{15}    +\frac{4h^3}{3} L_G^2\Big)  \max_{u \ge 0} \mathbb{E}[|X_{u}|^2] \Big] \nonumber
\\ & \le \frac{96}{89}\Big[\Big(8 h\gamma^2+  \frac{2\gamma^2 h}{3}\Big)b_i  + \Big(8 h\gamma^2 + \frac{2\gamma^2 h }{15} + \frac{2\gamma^2 h }{3}\Big)\hat{a}_i + \frac{8 h^3  L_G^2}{5}   \max_{u \ge 0} \mathbb{E}[|X_{u}|^2] \Big] \nonumber
\\ & \le \frac{96}{89}\Big[ \frac{132\gamma^2 h }{15} (\hat{a}_i+b_i) + \frac{8 h^3  L_G^2}{5}   \max_{u \ge 0} \mathbb{E}[|X_{u}|^2] \Big], \label{eq:intW}
\end{align}
by applying \eqref{eq:condh} and using that by \eqref{eq:condh}, $\max(\gamma^4,L_K^2) h^4\le 1/16$ and $(2/3)\gamma^2 L_K^2 h^6\le 1/96$.
Further, it holds for $k\in\mathbb{N}$,
\begin{align*}
a_{k+1}&=\Big(\frac{L_K}{2\gamma^2}+1\Big)\mathbb{E}\Big[\Big|\mathbf{Z}_k+ \int_{0}^h W_{kh+r} \rmd r\Big|^2\Big]=\Big(\frac{L_K}{2\gamma^2}+1\Big)\mathbb{E}\Big[\Big|\mathbf{Z}_0+ \sum_{i=0}^k\int_{0}^h W_{ih+r} \rmd r\Big|^2\Big]
\\ & \le \Big(\frac{L_K}{2\gamma^2}+1\Big)(k+1)\sum_{i=1}^k h \mathbb{E}\Big[\int_0^h|W_{ih+r}|^2 \rmd r\Big]
\end{align*}
and 
\begin{align*}
b_{k+1}&=\mathbb{E}\Big[\gamma^{-2}\Big|\sum_{i=0}^k \int_0^h \Big(-KZ_{ih+r}- \nabla G(X_{ih+r})+\nabla G (\mathbf{X}_{i+1})\Big)\rmd r \Big|^2\Big]
\\ &\le 3 \gamma^{-2} (k+1)h \sum_{i=0}^k  \mathbb{E}\Big[\int_0^h ( |KZ_{ih+r}|^2 + | \nabla G(X_{ih+r})-\nabla G(X_{(i+1)h})|^2
\\ & \qquad + |\nabla G(X_{(i+1)h})-\nabla G(\mathbf{X}_{i+1})|^2)  \rmd r \Big]
\\ & \le 3 \gamma^{-2} (k+1)h \sum_{i=0}^k\Big(L_K^2 \Big(2 \hat{a}_i h+ h^2 \mathbb{E}\Big[\int_0^h |W_{ih+r}|^2 \rmd r\Big]\Big) 
\\ & \qquad + L_G^2 \mathbb{E}\Big[\int_0^h \Big|\int_r^h V_{ih+ s} \rmd s \Big|^2 \rmd r\Big]
 + h L_G^2\hat{a}_{i+1}   \Big)
\\ & \le  3 \gamma^{-2} (k+1)h \sum_{i=0}^k\Big(L_K^2 2 \hat{a}_i h+ L_K^2 h^2 \mathbb{E}\Big[\int_0^h |W_{ih+r}|^2 \rmd r\Big] + L_G^2 \frac{h^3}{3} \max_{u\ge 0}\mathbb{E}[|V_{u}|^2]
\\ & \qquad  + 2h L_G^2\hat{a}_{i}+ 2h^2L_G^2 \mathbb{E}\Big[\int_0^h|W_{ih+r}|^2\rmd r \Big]  \Big).
\end{align*}
Since $\mu$ is the invariant measure of the dynamics given by \eqref{eq:LD}, it holds
$\mathbb{E}[|X_u|^2]=\mathbb{E}[|X_0|^2]$ and $\mathbb{E}[|V_u|^2]=\mathbb{E}[|V_0|^2]$ for all $u\ge 0$.
Using this observation and \eqref{eq:intW} and combining the two estimates on $a_{k+1}$ and $b_{k+1}$, we obtain
\begin{align*}
& a_{k+1} +b_{k+1}  \le (k+1)\sum_{i=0}^k \Big( 6 \gamma^{-2}h^2 (L_K^2+L_G^2)  \hat{a}_i + 3 \gamma^{-2} L_G^2 \frac{h^4}{3} \mathbb{E}[|V_0|^2]
\\ &  \qquad +h\Big(\Big(\frac{L_K}{2\gamma^2}+1\Big)+3 \gamma^{-2} L_K^2 h^2+6 \gamma^{-2} h^2 L_G^2 \Big)  \mathbb{E}\Big[\int_0^h|W_{ih+r}|^2 \rmd r\Big]  \Big)
\\ & \le (k+1)\sum_{i=0}^k \Big( 6 \gamma^{-2}h^2 (L_K^2+L_G^2)  \hat{a}_i + \gamma^{-2} L_G^2 h^4 \mathbb{E}[|V_0|^2]
\\ & \qquad  +h\Big(\frac{L_K}{2\gamma^2}+1+3 \frac{  h^2(L_K^2+2L_G^2)}{\gamma^2}\Big)  \frac{96}{89}\Big[ \frac{132\gamma^2 h }{15} (\hat{a}_i+b_i) + \frac{8 h^3  L_G^2}{5}   \mathbb{E}[|X_0|^2] \Big]   \Big)
\\ & \le (k+1)\sum_{i=0}^k \Big(  \frac{6h^2 (L_K^2+L_G^2)}{\gamma^{2}} +\Big(\frac{L_K}{2\gamma^2}+1+3 \frac{h^2 (L_K^2 +2 L_G^2)}{\gamma^2} \Big)  \frac{96}{89}\frac{132\gamma^2 h^2 }{15}\Big) (\hat{a}_i+b_i) 
\\ & \qquad  +\Big(\frac{L_K}{2\gamma^2}+1+3 \gamma^{-2}h^2 (L_K^2 +2 L_G^2) \Big)  \frac{96}{89} \frac{8 h^4  L_G^2}{5}   \mathbb{E}[|X_0|^2]  +  \gamma^{-2} L_G^2 h^4 \mathbb{E}[|V_0|^2]  \Big).
\end{align*}
Since $\hat{a}_i\le a_i$, it holds
\begin{align*}
a_{k+1}+b_{k+1} 
 \le  (k+1)\sum_{i=0}^k h^2 \tilde{\lambda} (a_i+b_i) + h^4 (k+1)^2 C_1,
\end{align*}
with
\begin{align*} 
& \tilde{\lambda}= \Big(6 \gamma^{-2} (L_K^2+L_G^2) +\Big(\frac{L_K}{2\gamma^2}+1+3 \gamma^{-2}h^2 (L_K^2 +2 L_G^2) \Big)  \frac{96}{89}\frac{132\gamma^2  }{15}\Big),
\\ & \tilde{C}_1= \Big(\frac{L_K}{2\gamma^2}+1+3 \gamma^{-2}h^2 (L_K^2 +2 L_G^2) \Big)  \frac{96}{89} \frac{8  L_G^2}{5}   \mathbb{E}[|X_0|^2]  +  \gamma^{-2} L_G^2 \mathbb{E}[|V_0|^2]. 
\end{align*}
Using \eqref{eq:condh} and the fact that by \cite[Theorem 5.1]{BrLi1976}, $\mathbb{E}[|X_0|^2]=(d/\kappa)$ and $\mathbb{E}[|V_0|^2]=d$, we bound these two constants by
\begin{align*}
& \tilde{\lambda}\le \Big(6 \gamma^{-2} L_K^2+5L_K+ \frac{3}{2}\gamma^2 +\frac{17}{8}  \frac{96}{89}\frac{132\gamma^2  }{15}\Big)\le 6 \gamma^{-2} L_K^2 +5L_K+ 22 \gamma^{2}=:\lambda,
\\ & \tilde{C}_1\le \frac{L_K \gamma^{-2}d L_G^2}{\kappa} + \frac{17}{8}  \frac{96}{89} \frac{8  L_G^2}{5}   \frac{d}{\kappa}  +  \gamma^{-2} L_G^2 d\le  d L_G^2\Big(\frac{2L_K \gamma^{-2}}{\kappa}+ \frac{4}{\kappa}\Big) =: C_1,
\end{align*}
where we used \eqref{eq:condh} and $1\le L_K/\kappa$.
We fix $l\in\mathbb{N}$ such that $l = \lfloor \lambda^{-1/2} h^{-1} \rfloor $. Then, $k\in\mathbb{N}$ such that $(k+1)\le l$, it holds $ (k+1) h \le hl \le \lambda^{-1/2}$ and 
\begin{align*}
a_{k+1}+b_{k+1} \le \sum_{i=0}^k h \lambda^{1/2} (a_i+b_i) + h^2 C_1/\lambda
\end{align*}
and we observe that there exists a sequence $(c_k)_{k\in\mathbb{N}}$ satisfying $a_k+b_k\le c_k$ for $k\le h^{-1}\lambda^{-1/2}$, $c_1=h^2 C_1/\lambda$  and 
\begin{align*}
c_{k+1} = \sum_{i=1}^k h \sqrt{\lambda} c_i + h^2  C_1/\lambda= c_k+h \sqrt{\lambda} c_k= (1+h\sqrt{\lambda})^k c_1.
\end{align*}
Hence, for all $k\in \mathbb{N}$ such that $k+1\le l$
\begin{align*}
a_{k+1}+b_{k+1} &\le c_{k+1}\le  (1+h\sqrt{\lambda})^k c_1 \le e^{h k \sqrt{\lambda}} c_1\le e^{1} h^2 C_1/\lambda.
\end{align*}
Then, by \eqref{eq:W2_stracc} and \eqref{eq:W2_stracc2} we obtain
\begin{align*}
\mathcal{W}_{2,\rho}(\mu, \mu_h)&\le \Big(1+\frac{1}{clh}\Big)\mathcal{W}_{2,\rho}(\mu p_{lh}, \mu \pi_h^l)
\le \Big(1+\frac{1}{clh}\Big)\sqrt{2(a_l+b_l)}
\\ &  \le \Big(1+\frac{2\sqrt{\lambda}}{c}\Big) h \sqrt{2e} \frac{\sqrt{C_1}}{\sqrt{\lambda}}
 \le \Big(\frac{1}{\sqrt{\lambda}}+\frac{2}{c}\Big)   h \sqrt{2e}  \sqrt{d} L_G \gamma^{-1} \sqrt{\frac{4\gamma^2}{\kappa} + \frac{2L_K}{\kappa} }.
\end{align*}
Note that we used in the last step that by $l=\lfloor \lambda^{-1/2} h^{-1}\rfloor$ it holds $lh>2 \lambda^{-1/2}$.
We note that since $(\lambda)^{-1/2}\le (8c)^{-1/2}$, it holds
\begin{align*}
\mathcal{W}_{2,\rho}(\mu, \mu_h)&\le 8 c^{-1}   h \sqrt{d} L_G \gamma^{-1} \sqrt{\frac{2\gamma^2}{\kappa} + \frac{L_K}{\kappa} },
\end{align*}
which concludes the proof.
\end{proof}

\begin{proof}[Proof of Theorem~\ref{thm:complex}]
Applying the triangle inequality and Theorem~\ref{thm:conv} and Theorem~\ref{thm:strongacc}, we obtain for all $k\in\mathbb{N}$,
\begin{align*}
\mathcal{W}_{2,\rho}(\mu, \nu \pi_h^k)&\le \mathcal{W}_{2,\rho}(\mu, \mu_h)+\mathcal{W}_{2,\rho}(\mu_h, \nu \pi_h^k)
\\ & \le h\Big(8 c^{-1}    \sqrt{d} L_G \gamma^{-1} \sqrt{\frac{2\gamma^2}{\kappa} + \frac{L_K}{\kappa} }\Big)+e^{-chk} \mathcal{W}_{2,\rho}(\mu_h, \nu ).
\end{align*}
The bound in $L^2$ Wasserstein distance with respect to the Euclidean distance is obtained by using the equivalence of the distance $\rho$ and the Euclidean distance stated in \eqref{eq:equivdist}.
\end{proof}

\begin{proof}[Proof of Theorem~\ref{thm:stracc_PGP}]
Fix $h$ satisfying \eqref{eq:condh2}. Denote by $(p_t)_{t\ge0}$ the transition function of the exact continuous dynamics given by \eqref{eq:LD} and by $\tilde{\pi}_h$ the transition kernel of the Markov chain corresponding to the $\mathcal{PGP}$-splitting scheme \eqref{eq:PGP}. the measures $\mu$ and $\tilde{\mu}_h$ denote the invariant probability measures of the processes corresponding to \eqref{eq:LD} and \eqref{eq:PGP}, respectively. By triangle inequality and Theorem~\ref{thm:conv_PGP2}, it holds for $l\in\mathbb{N}$
\begin{align*}
\mathcal{W}_{2,\rho}(\mu,\tilde{\mu}_h)&\le \mathcal{W}_{2,\rho}(\mu p_{lh},\mu \tilde{\pi}_h^l)+\mathcal{W}_{2,\rho}(\mu \tilde{\pi}_h^l,\tilde{\mu}_h\tilde{\pi}_h^l)
\\ & \le \mathcal{W}_{2,\rho}(\mu p_{lh},\mu\tilde{\pi}_h^l)+e^{-chl}\mathcal{W}_{2,\rho}(\mu ,\tilde{\mu}_h).
\end{align*}
Hence, for $l\ge (ch)^{-1}$
\begin{align} \label{eq:W2_stracc_triangle}
\mathcal{W}_{2,\rho}(\mu,\tilde{\mu}_h)\le (1- e^{-chl})^{-1} \mathcal{W}_{2,\rho}(\mu p_{lh},\mu \tilde{\pi}_h^l)\le (1+(chl)^{-1}) \mathcal{W}_{2,\rho}(\mu p_{lh},\mu \tilde{\pi}_h^l).
\end{align}
We bound the Wasserstein distance on the right hand side by considering a synchronous coupling of the exact Langevin dynamics $(X_t, V_t)_{t\ge 0}$ given by \eqref{eq:LD} and with initial condition $(X_0, V_0)=(x,v)\sim \mu$ and the Markov chain $(\tilde{\mathbf{X}}_k,\tilde{\mathbf{V}}_k)_{k\in\mathbb{N}}$ given by \eqref{eq:PGP} with initial condition $(\tilde{\mathbf{X}}_0,\tilde{\mathbf{V}}_0)=(x,v)\sim \mu$. We represent $(\tilde{\mathbf{X}}_k,\tilde{\mathbf{V}}_k)_{k\in\mathbb{N}}$ by the following recursive scheme where the splitting steps are successively performed:
\begin{align*}
\mathcal{P}_{1/2}:\qquad  &(\tilde{\mathbf{X}}_k^P,\tilde{\mathbf{V}}_k^P)=(\tilde{\mathbf{X}}_k,\tilde{\mathbf{V}}_k-(h/2)\nabla G(\tilde{\mathbf{X}}_k)),
\\ \mathcal{G}:\qquad& (\tilde{\mathbf{X}}_{k+1}^G,\tilde{\mathbf{V}}_{k+1}^G)=(\tilde{X}_{(k+1)h}, \tilde{V}_{(k+1)h})
\\\mathcal{P}_{1/2}:\qquad & (\tilde{\mathbf{X}}_{k+1},\tilde{\mathbf{V}}_{k+1})=(\tilde{\mathbf{X}}_{k+1}^G,\tilde{\mathbf{V}}_{k+1}^G-(h/2)\nabla G(\tilde{\mathbf{X}}_{k+1}^G)),
\end{align*} 
where
\begin{align*}
&\begin{cases}
\rmd \tilde{X}_s=\tilde{V}_s \rmd s
\\ \rmd \tilde{V}_s=(-\gamma \tilde{V}_s-K\tilde{X}_s) \rmd s+ \sqrt{2\gamma}\rmd B_s
\end{cases}
\text{with } \tilde{X}_{kh}=\tilde{\mathbf{X}}_k^P, \qquad \tilde{V}_{kh}=\tilde{\mathbf{V}}_k^P.
\end{align*}

For the synchronous coupling, we consider the same Brownian motion for the two processes given by  \eqref{eq:LD} and \eqref{eq:dynPGP}, respectively, on a joint probability space. Then, it holds for $(\mathbf{Z}_k,\mathbf{W}_k)=(X_{kh}-\tilde{\mathbf{X}}_{k},V_{kh}-\tilde{\mathbf{V}}_{k})$
\begin{align}\label{eq:dynPGP}
\begin{cases}
\mathbf{Z}_{k+1}&=\mathbf{Z}_k+ \int_0^h W_{kh+r} \rmd r
\\  \mathbf{W}_{k+1}&=\mathbf{W}_k- \int_0^h (\gamma W_{kh+r}+KZ_{kh+r}+\nabla G(X_{kh+r})
\\ & -\frac{1}{2}\nabla G(\tilde{\mathbf{X}}_{k})-\frac{1}{2}\nabla G(\tilde{\mathbf{X}}_{k+1})) \rmd r,
\end{cases}
\end{align}
where $(Z_u,W_u)_{u\ge 0}=(X_{u}-\tilde{X}_u,V_{u}-\tilde{V}_u)_{u\ge 0}$ and $(\mathbf{Z}_0,\mathbf{W}_0)=(0,0)$.
Set for $k\in\mathbb{N}$
\begin{align*}
a_{k}=\Big(\frac{L_K}{2\gamma^2}+1\Big)\hat{a}_k=\Big(\frac{L_K}{2\gamma^2}+1\Big) \mathbb{E}[|\mathbf{Z}_k|^2], \qquad \text{and} \qquad 
b_k=\mathbb{E}[|\mathbf{Z}_k+\gamma^{-1} \mathbf{W}_k|^2].
\end{align*}
As in the proof of Theorem~\ref{thm:strongacc} the equation \eqref{eq:W2_stracc2} holds and by \eqref{eq:W2_stracc_triangle}
\begin{align} \label{eq:W2_stracc3}
\mathcal{W}_{2,\rho}(\mu ,\tilde{\mu}_h)\le (1+(chl)^{-1}) \sqrt{2(a_l+b_l)}.
\end{align}
Therefore, we provide next bounds for $a_k$ and $b_k$, $k\in\mathbb{N}$. By \eqref{eq:dynPGP}, it holds
\begin{align*}
a_{k+1}=\Big(\frac{L_K}{2\gamma^2}+1\Big) \mathbb{E}\Big[\Big|\mathbf{Z}_k+ \int_0^h W_{kh+s} \rmd s \Big|^2\Big]=\Big(\frac{L_K}{2\gamma^2}+1\Big) \mathbb{E}\Big[\Big|\sum_{i=0}^k \int_0^h W_{ih+s}\rmd s\Big|^2\Big],
\end{align*}
since $\mathbf{Z}_0=0$.
We write the integral by using multiple times \eqref{eq:dynPGP}
\begin{align*}
\int_0^h W_{ih+s}\rmd s& = \int_0^h \Big(\mathbf{W}_i+ \frac{h}{2}\nabla G(\tilde{\mathbf{X}}_i)-\int_0^s (\gamma W_{ih+r}+ K Z_{ih+r}+ \nabla G(X_{ih+r}))\rmd r\Big) \rmd s
\\ &= h \mathbf{W}_i+ \int_0^h \int_0^s (-\gamma W_{ih+r}- K Z_{ih+r}- \nabla G(X_{ih+r})+\nabla G(\tilde{\mathbf{X}}_i))\rmd r \rmd s
\\ & = h \gamma ( \mathbf{Z}_i+\gamma^{-1}  \mathbf{W}_i)- h\gamma \mathbf{Z}_i-\int_0^h \int_0^s \gamma W_{ih+r}\rmd r\rmd s
\\ & \qquad - K \int_0^h\int_0^s \int_0^r W_{ih+u} \rmd u \rmd r \rmd s 
 - K\frac{h^2}{2}\mathbf{Z}_i
 \\ & \qquad + \int_0^h \int_0^s (-\nabla G(X_{ih+r})+\nabla G(X_{ih})-\nabla G(X_{ih})+\nabla G(\tilde{\mathbf{X}}_i))\rmd r \rmd s.
\end{align*}
Hence, using Young's inequality (i.e., $(\sum_{i=1}^6 a_i)^2\le 6 \sum_{i=1}^6 a_i^6$), we obtain
\begin{align*}
\hat{a}_{k+1}=\mathbb{E}\Big[\Big|\sum_{i=0}^k \int_0^h W_{ih+s}\rmd s\Big|^2\Big]\le (k+1)\sum_{i=0}^k\mathbb{E}\Big[\Big|\int_0^h W_{ih+s}\rmd s\Big|^2\Big] \le \sum_{i=0}^k 6 \sum_{n=1}^6 I_{i,n}, 
\end{align*}
where
\begin{align*}
I_{i,1}&= \mathbb{E}\Big[\Big| h \gamma ( \mathbf{Z}_i+\gamma^{-1}  \mathbf{W}_i)\Big|^2\Big]\le h^2\gamma^2 b_i
\\  I_{i,2}&= \mathbb{E}\Big[\Big|\Big(h \gamma \mathbf{Z}_i+ K\frac{h^2}{2}\mathbf{Z}_i\Big)\Big|^2\Big]\le \Big(h\gamma+ L_K\frac{h^2}{2}\Big)^2 \hat{a}_i
\\  I_{i,3}&=\mathbb{E}\Big[\Big| \int_0^h\int_0^s (-\gamma W_{ih+r}) \rmd r \rmd s\Big|^2\Big]\le \gamma^2\mathbb{E}\Big[ h\int_0^hs\int_0^s |W_{ih+r}|^2 \rmd r \rmd s\Big]
\\ & \le  \gamma^2\frac{h^3}{2}\mathbb{E}\Big[ \int_0^h |W_{ih+r}|^2 \rmd r\Big]
\\  I_{i,4}&=\mathbb{E}\Big[\Big| K\int_0^h\int_0^s\int_0^r W_{ih+u} \rmd u \rmd r \rmd s\Big|^2\Big]
 \le L_K^2\mathbb{E}\Big[ h\int_0^hs\int_0^s r\int_0^r |W_{ih+u}|^2 \rmd u \rmd r \rmd s\Big]
\\ & \le  \frac{L_K^2h^5}{8}\mathbb{E}\Big[ \int_0^h |W_{ih+u}|^2 \rmd u\Big]
\\  I_{i,5}&=\mathbb{E}\Big[\Big| \int_0^h \int_0^s \nabla G(X_{ih+r})-\nabla G(X_{ih}))\rmd r \rmd s\Big|^2\Big]
\\ & \le  \mathbb{E}\Big[ h\int_0^h s\int_0^s L_G^2|\int_0^r V_{ih+u} \rmd u|^2 \rmd r \rmd s\Big]
\\ & \le L_G^2  h\int_0^h s\int_0^sr\int_0^r \max_{v\ge 0}\mathbb{E}[|V_{v}|^2] \rmd u \rmd r \rmd s \le  L_G^2 \frac{h^6}{15}  \max_{v\ge 0}\mathbb{E}[|V_{v}|^2]
\\ I_{i,6}&=\mathbb{E}\Big[\Big| \int_0^h \int_0^s \nabla G(X_{ih})-\nabla G(\tilde{\mathbf{X}}_i))\rmd r \rmd s\Big|^2\Big]\le  \frac{h^4}{4}L_G^2  \hat{a}_i.
\end{align*}
In particular, using that by \eqref{eq:condh2}, $L_K^2h^2\le \gamma^2/(16)$ and that $b_0=\hat{a}_0=0$
\begin{align}\label{eq:boundhata}
\hat{a}_{k+1}& =\mathbb{E}\Big[\Big|\sum_{i=0}^k  \int_0^h  W_{ih+s}\rmd s\Big|^2\Big] \nonumber
\\ & \le (k+1) \sum_{i=1}^k\Big(6 h^2 \gamma^2 b_i+6\Big(\Big(h\gamma+ L_K\frac{h^2}{2}\Big)^2+\frac{h^4}{4}L_G^2\Big) \hat{a}_i\Big) \nonumber
\\ & + 3(k+1)\frac{65\gamma^2 h^3}{64}\sum_{i=0}^k \mathbb{E}\Big[\int_0^h |W_{ih+r}|^2 \rmd r\Big]+ 2\frac{(k+1)^2 L_G^2 h^6}{5}  \max_{v\ge 0}\mathbb{E}[|V_{v}|^2].
\end{align}
Similar to the above calculation we bound the second last term by 
\begin{align*}
\mathbb{E}\Big[\int_0^h |W_{ih+s}|^2 \rmd s\Big]
& = \mathbb{E}\Big[\int_0^h\Big|\gamma( \mathbf{Z}_i+\gamma^{-1}  \mathbf{W}_i)- \gamma \mathbf{Z}_i+ \int_0^s (-\gamma W_{ih+r})\rmd r - Ks\mathbf{Z}_i
 \\ & - K\int_0^s \int_0^r W_{ih+u} \rmd u \rmd r  + \int_0^s (-\nabla G(X_{ih+r}))\rmd r+\frac{h}{2}\nabla G(X_{ih})
 \\ & +\frac{h}{2}
(-\nabla G(X_{ih})+\nabla G(\tilde{\mathbf{X}}_i))\Big|^2\rmd s \Big]\le 6 \sum_{n=1}^6 J_{i,n},
\end{align*}
where $J_{i,n}$ are similarly bounded as $I_{i,n}$ by
\begin{align*}
J_{i,1} & = \mathbb{E}\Big[\int_0^h\Big|\gamma( \mathbf{Z}_i+\gamma^{-1}  \mathbf{W}_i)\Big|^2\rmd s\Big]\le \gamma^2 h b_i
\\ J_{i,2} & =\mathbb{E}\Big[\int_0^h\Big|\gamma \mathbf{Z}_i+ Ks \mathbf{Z}_i\Big|^2\rmd s\Big]\le (\gamma+L_K h)^2 h \hat{a}_i
\\ J_{i,3} & =\mathbb{E}\Big[\int_0^h\Big|\int_0^s (-\gamma W_{ih+r})\rmd r\Big|^2\rmd s\Big]\le \frac{\gamma^2 h^2}{2}\mathbb{E}\Big[\int_0^h |W_{ih+r}|^2 \rmd r\Big]
\\ J_{i,4} & =\mathbb{E}\Big[\int_0^h\Big|K\int_0^s \int_0^r W_{ih+u} \rmd u \rmd r\Big|^2\rmd s\Big]\le \frac{L_K^2 h^4}{8}\mathbb{E}\Big[\int_0^h |W_{ih+u}|^2 \rmd u\Big]
\\ J_{i,5} & =\mathbb{E}\Big[\int_0^h\Big|\int_0^s \nabla G(X_{ih+r})\rmd r-\frac{h}{2}\nabla G(X_{ih})\Big|^2\rmd s\Big] 
\\ & \le 2\mathbb{E}[h L_G^2(\frac{h}{2})^2|X_{ih}|^2]+ 2L_G^2\int_0^h s \int_0^s\mathbb{E}[|X_{ih+r}|^2]\rmd r \rmd s
\\ & \le \frac{h^3 L_G^2}{2}\mathbb{E}[|X_{ih}|^2]+ \frac{2h^3 L_G^2}{3}\max_{s\ge 0}\mathbb{E}[|X_s|^2] \le \frac{7}{6}h^3 L_G^2 \max_{s\ge 0}\mathbb{E}[|X_s|^2]
\\ J_{i,6} & =\mathbb{E}\Big[\int_0^h\Big|\frac{h}{2}(
\nabla G(X_{ih})-\nabla G(\tilde{\mathbf{X}}_i))\Big|^2\rmd s\Big]\le L_G^2 \frac{h^3}{4}\hat{a}_i.
\end{align*}
Subtracting the bound of $J_{i,3}$ and $J_{i,4}$ and dividing both sides by $1-3\gamma^2h^2 -(3/4)L_K^2h^2$, we obtain
\begin{align} \label{eq:intW2}
\mathbb{E}\Big[\int_0^h |W_{ih+s}|^2 & \rmd s\Big]\le \frac{6}{1-3\gamma^2h^2 -(3/4)L_K^2h^2}\sum_{n\in\{1,2,5,6\}} J_{i,n}\le \frac{6144}{829}\sum_{n\in\{1,2,5,6\}} J_{i,n} \nonumber
\\ & \le \frac{6144}{829}\Big(\gamma^2 h b_i+ (\gamma+L_K h)^2 h \hat{a}_i+\frac{7}{6}h^3 L_G^2 \max_{s\ge 0}\mathbb{E}[|X_s|^2]+ L_G^2 \frac{h^3}{4}\hat{a}_i\Big),
\end{align}
where the second step holds by \eqref{eq:condh2}, since $3\gamma^2h^2+(3/4)L_K^2h^4\le 195/1024$.
Inserting it back into \eqref{eq:boundhata}, yields
\begin{align*}
\hat{a}_{k+1}&\le 6 (k+1) \sum_{i=1}^k \Big( h^2\gamma^2b_i+\Big(\Big(h\gamma+ L_K\frac{h^2}{2}\Big)^2+\frac{h^4}{4}L_G^2\Big) \hat{a}_i\Big)
\\ & \quad 
 + 3(k+1)\gamma^2 h^3\sum_{i=0}^k \frac{6240}{829}\Big(\gamma^2 h b_i+ (\gamma+L_K h)^2 h \hat{a}_i
 \\ & \quad +\frac{7}{6}h^3 L_G^2 \max_{s\ge 0}\mathbb{E}[|X_s|^2]+ L_G^2 \frac{h^3}{4}\hat{a}_i\Big)+ 2\frac{(k+1)^2 L_G^2 h^6}{5}  \max_{v\ge 0}\mathbb{E}[|V_{v}|^2].
\end{align*}
Bounding $b_k$ we observe
\begin{align*}
 b_{k+1}
& = \mathbb{E}\Big[\Big|\mathbf{Z}_k+\gamma^{-1} \mathbf{W}_k+\frac{1}{\gamma}\int_0^h (-KZ_{kh+r}-\nabla G(X_{kh+r}))\rmd r
\\ &  +\frac{h}{2\gamma}(\nabla G(\tilde{X}_k)+\nabla G(\tilde{X}_{k+1}))\Big|^2\Big]
\\ & \le 2\gamma^{-2} \mathbb{E}\Big[\Big|\sum_{i=0}^k\int_0^h (-KZ_{ih+r}-\nabla G(X_{ih+r}))\rmd r+\frac{h}{2}(\nabla G({X}_i)+\nabla G({X}_{i+1}))\Big|^2\Big]
\\ & \quad + 2\gamma^{-2} \mathbb{E}\Big[\Big|\sum_{i=0}^k \frac{h}{2}(\nabla G(\tilde{X}_i)+\nabla G(\tilde{X}_{i+1})-\nabla G({X}_i)-\nabla G({X}_{i+1}))\Big|^2\Big]
\\ & \le 4\gamma^{-2} \mathbb{E}\Big[\Big|\sum_{i=0}^k K \int_0^h \Big(\mathbf{Z}_i +\int_0^r W_{ih+u}\rmd u\Big)\rmd r\Big|^2\Big]
\\ & \quad+4\gamma^{-2} \mathbb{E}\Big[\Big|\sum_{i=0}^k \Big(-\int_0^h\nabla G(X_{ih+r}))\rmd r+\frac{h}{2}(\nabla G({X}_i)+\nabla G({X}_{i+1}))\Big)\Big|^2\Big]
\\ & \quad+ 4\gamma^{-2} \mathbb{E}\Big[\Big|\sum_{i=0}^k \frac{h}{2}(\nabla G(\tilde{X}_i)-\nabla G({X}_i))\Big|^2\Big]
\\ & \quad + 4\gamma^{-2} \mathbb{E}\Big[\Big|\sum_{i=0}^k \frac{h}{2}(\nabla G(\tilde{X}_{i+1})-\nabla G({X}_{i+1}))\Big|^2\Big]
 = \sum_{n=1}^4 \tilde{I}_n.
\end{align*}
In the same spirit as $I_{i,2}$ and $I_{i,3}$ we bound
\begin{align} \label{eq:tildeI1}
\tilde{I}_1\le 8 \gamma^{-2} k \sum_{i=0}^k L_K^2 h^2 \hat{a}_i+ 8\gamma^{-2} (k+1) \sum_{i=0}^k L_K^2 \frac{h^3}{2}\mathbb{E}\Big[\int_0^h |W_{ih+r}|^2 \rmd r\Big].
\end{align}
Similar as $I_{i,6}$, we bound
\begin{align} \label{eq:tildeI34}
\tilde{I}_3+\tilde{I}_4\le \gamma^{-2} (k+1) \sum_{i=0}^k h^2 L_G^2 (\hat{a}_i+\hat{a}_{i+1}).
\end{align}
Using Lemma~\ref{lem:trapez}, we bound $\tilde{I}_2$ by 
\begin{align*}
\tilde{I}_2&\le 4\gamma^{-2} \mathbb{E}\Big[\Big|\sum_{i=0}^k \Big\{ \frac{h}{2}\int_0^h \int_0^r \nabla^2 G(X_{ih+u})\sqrt{2\gamma}\rmd B_{ih+u} \rmd r
\\&  -\int_0^h \int_0^s \int_0^r \nabla^2 G(X_{ih+u})\sqrt{2\gamma}\rmd B_{ih+u} \rmd r \rmd s
\\ & + \frac{h}{2}\int_0^h \int_0^s \nabla^3 G(X_{ih+u})[V_{ih+u},V_{ih+u}] \rmd u \rmd s
\\ &  - \int_0^h \int_0^s \int_0^r \nabla^3 G(X_{ih+u})[V_{ih+u},V_{ih+u}] \rmd u \rmd r \rmd s 
\\ & + \frac{h}{2}\int_0^h \int_0^r \nabla^2 G(X_{ih+u})(-\gamma V_{ih+u}-KX_{ih+u}-\nabla G(X_{ih+u})) \rmd u \rmd r
\\ &  -\int_0^h \int_0^s \int_0^r  \nabla^2 G(X_{ih+u})(-\gamma V_{ih+u}-KX_{ih+u}-\nabla G(X_{ih+u})) \rmd u \rmd r \rmd s\Big\}\Big|^2\Big].
\end{align*}
Hence
\begin{align*}
\tilde{I}_2 \le 4\gamma^{-2} 6 \sum_{j=1}^6 \tilde{J}_j,
\end{align*}
where $(\tilde{J}_j)_{j=1,\ldots,6}$ are bounded in the following way: For $\tilde{J}_1$ we obtain by Jensen's inequality, Ito's isometry and since the Brownian motions are independent over disjoints intervals
\begin{align*}
\tilde{J}_1&=\mathbb{E}\Big[\Big|\sum_{i=0}^k \frac{h}{2}\int_0^h \int_0^r \nabla^2 G(X_{ih+u})\sqrt{2\gamma}\rmd B_{ih+u} \rmd r\Big|^2 \Big]
\\ &\le \frac{h^3}{4} \int_0^h \mathbb{E}\Big[ \sum_{i=0}^k \Big|\int_0^r \nabla^2 G(X_{ih+u})\sqrt{2\gamma}\rmd B_{ih+u}\Big|^2\Big]\rmd r
\\ & \le \frac{h^3}{4} \int_0^h \mathbb{E}\Big[ \sum_{i=0}^k \int_0^r 2\gamma\|\nabla^2 G(X_{ih+u})\|_F^2 \rmd u \Big]\rmd r
 \le \frac{h^3}{4} \int_0^h  \sum_{i=0}^k \int_0^r (2\gamma d L_G^2)\rmd u \rmd r
 \\ & \le \frac{h^5}{8}(k+1) 2\gamma L_G^2 d,
\end{align*}
where we used Assumption~\ref{ass:U} in the second last step.
Analogously we bound $\tilde{J}_2$ by
\begin{align*}
\tilde{J}_2&=\mathbb{E}\Big[\Big|\sum_{i=0}^k \int_0^h \int_0^s \int_0^r \nabla^2 G(X_{ih+u})\sqrt{2\gamma}\rmd B_{ih+u} \rmd r \rmd s\Big|^2 \Big]
\\ &\le h \int_0^h s \int_0^s \mathbb{E}\Big[ \sum_{i=0}^k \int_0^r 2\gamma \|\nabla^2 G(X_{ih+u})\|_F^2 \rmd u \Big]\rmd r \rmd s \le \frac{h^5}{8}(k+1) 2\gamma L_G^2 d.
\end{align*}
By Assumption~\ref{ass_GHess} and Jensen's inequality, we bound $\tilde{J}_3$ and $\tilde{J}_4$ by
\begin{align} 
\tilde{J}_3&=\mathbb{E}\Big[\Big|\sum_{i=0}^k \frac{h}{2}\int_0^h \int_0^r \nabla^3 G(X_{ih+u})[V_{ih+u},V_{ih+u}] \rmd u \rmd r\Big|^2 \Big] \nonumber
\\ & \le (k+1) \sum_{i=0}^k \frac{h^2}{4} h\int_0^h r \int_0^r \mathbb{E}[|\nabla^3 G(X_{ih+u}) \nonumber [V_{ih+u},V_{ih+u}]|^2]\rmd u \rmd r 
\\ & \le (k+1)^2 \frac{h^6}{12} \max_{s\ge 0}\mathbb{E}[L_H^2|V_{s}|^4],  \label{eq:badterm1}
\\ \tilde{J}_4&=\mathbb{E}\Big[\Big|\sum_{i=0}^k \int_0^h \int_0^s \int_0^r \nabla^3 G(X_{ih+u})[V_{ih+u},V_{ih+u}] \rmd u \rmd r \rmd s \Big|^2 \Big] \nonumber
\\ & \le (k+1)^2 \frac{h^6}{15} \max_{s\ge 0}\mathbb{E}[L_H^2|V_{s}|^4]. \label{eq:badterm2}
\end{align}
For $\tilde{J}_5$ and $\tilde{J}_6$ we observe
\begin{align*}
\tilde{J}_5&=\mathbb{E}\Big[\Big|\sum_{i=0}^k \frac{h}{2}\int_0^h \int_0^r \nabla^2 G(X_{ih+u})(-\gamma V_{ih+u}-KX_{ih+u}-\nabla G(X_{ih+u})) \rmd u \rmd r  \Big|^2  \Big]
\\ & \le (k+1) \sum_{i=0}^k \frac{h^2}{4} h\int_0^h r \int_0^r \mathbb{E}[|\nabla^2 G(X_{ih+u})(-\gamma V_{ih+u}-KX_{ih+u}
\\ & -\nabla G(X_{ih+u}))|^2]\rmd u \rmd r 
\\ & \le (k+1)^2 \frac{h^6}{12} \max_{s\ge 0}\mathbb{E}[L_G^2(2\gamma^2|V_{s}|^2+2|\nabla U(X_s)|^2)],
\\ \tilde{J}_6&=\mathbb{E}\Big[\Big|\sum_{i=0}^k \int_0^h \int_0^s \int_0^r  \nabla^2 G(X_{ih+u})(-\gamma V_{ih+u}-KX_{ih+u}-\nabla G(X_{ih+u})) \rmd u \rmd r \rmd s \Big|^2 \Big]
\\ & \le (k+1)^2 \frac{h^6}{15} \max_{s\ge 0}\mathbb{E}[L_G^2(2\gamma^2|V_{s}|^2+2|\nabla U(X_s)|^2)].
\end{align*}
Inserting the bounds back into $\tilde{I}_2$ yields
\begin{align*}
\tilde{I}_2\le 24 \gamma^{-2} \sum_{j=1}^6 \tilde{J}_j\le 24\gamma^{-2} &\Big(\frac{h^5(k+1) }{2}L_G^2\gamma d + \frac{3(k+1)^2h^6}{20}\max_{s\ge 0} \mathbb{E}[L_H^2|V_s|^4]
\\ & + \frac{3(k+1)^2h^6}{20}\max_{s\ge 0} \mathbb{E}[L_G^2(2\gamma^2|V_s|^2+2 |\nabla U(X_s)|^2)]\Big).
\end{align*}
Combining this bound with \eqref{eq:tildeI1} and \eqref{eq:tildeI34} and using \eqref{eq:intW2} yields for $b_{k+1}$
\begin{align*}
b_{k+1}&\le 8 \gamma^{-2} k \sum_{i=0}^k L_K^2 h^2 \hat{a}_i+ 8\gamma^{-2} (k+1) \sum_{i=0}^k \frac{h^3L_K^2}{2}\mathbb{E}\Big[\int_0^h |W_{ih+r}|^2 \rmd r\Big]
\\& + \gamma^{-2} (k+1) \sum_{i=0}^k h^2 L_G^2 (\hat{a}_i+\hat{a}_{i+1})
 + 24\gamma^{-2} \Big( \frac{3(k+1)^2h^6}{20}\max_{s\ge 0} \mathbb{E}[L_H^2|V_s|^4]
\\ & +\frac{h^5(k+1) }{2}L_G^2\gamma d + \frac{3(k+1)^2h^6}{20}\max_{s\ge 0} \mathbb{E}[L_G^2(2\gamma^2|V_s|^2+2 |\nabla U(X_s)|^2)]\Big)
\\ &\le 8 \gamma^{-2} k \sum_{i=0}^k L_K^2 h^2 \hat{a}_i+ 8\gamma^{-2} (k+1) \sum_{i=0}^k \frac{h^3 L_K^2}{2}\frac{6144}{829}
\Big(\gamma^2 h b_i+ (\gamma+L_K h)^2 h \hat{a}_i\Big)
\\ & +8\gamma^{-2} (k+1) \sum_{i=0}^k \frac{h^3 L_K^2}{2}\frac{6144}{829}\Big(\frac{7}{6}h^3 L_G^2 \max_{s\ge 0}\mathbb{E}[|X_s|^2]+ L_G^2 \frac{h^3}{4}\hat{a}_i\Big)
\\& + \gamma^{-2} (k+1) \sum_{i=0}^k h^2 L_G^2 (\hat{a}_i+\hat{a}_{i+1})
 + 24\gamma^{-2} \Big(\frac{3(k+1)^2h^6}{20}\max_{s\ge 0} \mathbb{E}[L_H^2|V_s|^4]
\\ & +\frac{h^5(k+1) }{2}L_G^2\gamma d +  \frac{3(k+1)^2h^6}{20}\max_{s\ge 0} \mathbb{E}[L_G^2(2\gamma^2|V_s|^2+2 |\nabla U(X_s)|^2)]\Big).
\end{align*}
We recall that $\max_{s\ge 0}\mathbb{E}[|X_s|^2]=\mathbb{E}[|X_0|^2]\le d/\kappa$, $\max_{s\ge 0}\mathbb{E}[|V_s|^2]=\mathbb{E}[|V_0|^2]\le d$ and \\ $\max_{s\ge 0}\mathbb{E}[|V_s|^4]=\mathbb{E}[|V_0|^4]\le d^2+2d$. By \cite[Lemma A.3]{DePaBoDo2021}, $\max_{s\ge 0}\mathbb{E}[|\nabla U(X_s)|^2]=\mathbb{E}[|\nabla U(X_0)|^2]\le (L_G+L_K)d$ .
Inserting these bounds in the formulas of $\hat{a}_{k+1}$ and $b_{k+1}$ gives
\begin{align*}
\hat{a}_{k+1}&\le (k+1)\sum_{i=1}^k 6\Big(\Big(h\gamma+ L_K\frac{h^2}{2}\Big)^2+\frac{h^4}{4}L_G^2+\frac{3120}{829} \gamma^2h^4 \Big((\gamma+L_K h)^2 + L_G^2 \frac{h^2}{4}  \Big)\Big) \hat{a}_i
\\ & +(k+1)\sum_{i=1}^k 6\Big( h^2\gamma^2+ \frac{3120}{829}\gamma^4 h^4 \Big)b_i+ \mathbf{M}_1,
\end{align*}
and
\begin{align*}
b_{k+1}&\le   (k+1) \sum_{i=0}^k \Big(8 \gamma^{-2} L_K^2 h^2 + 4\gamma^{-2}h^4 \frac{6144}{829} L_K^2 \Big( (\gamma+L_K h)^2+ \frac{h^2 L_G^2}{4}\Big)  \Big) \hat{a}_i
\\ & +(k+1) \sum_{i=0}^k  4 h^4 \frac{6144}{829} L_K^2 b_i+ \gamma^{-2} (k+1) \sum_{i=0}^k h^2 L_G^2 (\hat{a}_i+\hat{a}_{i+1}) + \mathbf{M}_2
\end{align*}
with 
\begin{align*}
\mathbf{M}_1&=(k+1)^2 L_G^2 h^6\Big(\frac{2}{5}+ \frac{21840}{829}\frac{\gamma^2 }{\kappa}\Big) d
\\  \mathbf{M}_2&=(k+1)^2 \gamma^{-2}h^6 L_K^2 \frac{28672}{829} L_G^2 \frac{d}{\kappa}+  12\gamma^{-1}h^5(k+1)L_G^2 d 
\\ &  + \gamma^{-2}\frac{18(k+1)^2h^6}{5} L_H(d^2+2d)+ \gamma^{-2}\frac{18(k+1)^2h^6}{5}L_G^2(2\gamma^2d+2(L_K+L_G)d).
\end{align*}
Using \eqref{eq:condh}, we obtain for the prefactors of $\hat{a}_i$ and $b_i$ in the bounds of $\hat{a}_{k+1}$ and $b_{k+1}$
\begin{align*}
& 6\Big(\Big(h\gamma+ L_K\frac{h^2}{2}\Big)^2+\frac{h^4}{4}L_G^2+\frac{3120}{829} \gamma^2h^4 \Big((\gamma+L_K h)^2 + L_G^2 \frac{h^2}{4}  \Big)\Big)
\\ & \qquad \le 6 \Big(\Big(\frac{9}{8}\Big)^2+\frac{1}{256}+\frac{3120}{829} \frac{1}{16}\Big(\Big(\frac{5}{4}\Big)^2+\frac{1}{256}\Big)\Big)(h\gamma)^2 \le 10 h^2\gamma^2,
\\  & 6\Big( h^2\gamma^2+ \frac{3120}{829}\gamma^4 h^4 \Big) \le \Big(6+6\frac{3120}{829}\frac{1}{16}\Big)(h\gamma)^2\le 8 h^2\gamma^2,
\\ & 8 \gamma^{-2} L_K^2 h^2 +  4\gamma^{-2}h^4 \frac{6144}{829} L_K^2 \Big( (\gamma+L_K h)^2+ \frac{h^2 L_G^2}{4}\Big)+\gamma^{-2} h^2L_G^2 
\\ & \qquad \le 8 \gamma^{-2} L_K^2 h^2+ 4 (h\gamma)^2\frac{6144}{829}\frac{1}{16}\Big(\frac{25}{16}+\frac{1}{256}\Big)+(h\gamma)^2/4
\\ & \qquad \le \Big(16L_K h^2+3(h\gamma)^2\Big)\Big(\frac{L_K}{2\gamma^2}+1\Big), 
\\ &  4 h^4 \frac{6144}{829} L_K^2\le \frac{1536}{829} (h\gamma)^2\le 2 (h\gamma)^2. 
\end{align*}
Plugging these bounds back, we can bound
\begin{align*}
& \Big(\frac{L_K}{2\gamma^2}+1+\gamma^{-2}h^2L_G^2\Big)\hat{a}_{k+1}+b_{k+1}
\\ & \le \Big(\frac{L_K}{2\gamma^2}+1+\gamma^{-2}h^2L_G^2\Big)\Big((k+1)\sum_{i=1}^k 10 h^2\gamma^2\hat{a}_i +(k+1)\sum_{i=1}^k 8 h^2\gamma^2 b_i+ \mathbf{M}_1\Big)+ \mathbf{M}_2
\\ & +  (k+1) \sum_{i=0}^k  \Big(\Big(16L_K h^2+3(h\gamma)^2\Big)\Big(\frac{L_K}{2\gamma^2}+1\Big) \hat{a}_i + 2 (h\gamma)^2 b_i \Big)
+ L_G^2h^2\gamma^{-2} \hat{a}_{k+1}.
\end{align*}
Subtracting $L_G^2h^2\gamma^{-2} \hat{a}_{k+1}$ on both sides, using $a_{k+1}=(\frac{L_K}{2\gamma^2}+1)\hat{a}_{k+1}$ and $\gamma^{-2} h^2L_G^2\le 1/64$ which holds by \eqref{eq:condh}, we obtain
\begin{align*}
 a_{k+1}+b_{k+1}&
 \le \Big(\frac{L_K}{2\gamma^2}+\frac{65}{64}\Big)\Big((k+1)\sum_{i=1}^k 10 h^2\gamma^2\hat{a}_i +(k+1)\sum_{i=1}^k 8 h^2\gamma^2 b_i+ \mathbf{M}_1\Big)
\\ & \quad+  (k+1) \sum_{i=0}^k  \Big(16L_K h^2+3(h\gamma)^2\Big){a}_i
 +(k+1) \sum_{i=0}^k  2(h\gamma)^2 b_i+ \mathbf{M}_2
 \\ & \le (k+1)\sum_{i=1}^k \Big(10\cdot\frac{65}{64}h^2\gamma^2+16L_K h^2+3(h\gamma)^2\Big) {a}_i +\frac{65}{64}\Big)\mathbf{M}_1
 \\ & \quad+(k+1)\sum_{i=1}^k \Big(\Big(\frac{L_K}{2\gamma^2}+\frac{65}{64}\Big)8h^2\gamma^2+ 2 (h\gamma)^2 \Big) b_i+ \Big(\frac{L_K}{2\gamma^2} + \mathbf{M}_2
 \\ & \le (k+1)\sum_{i=1}^k \Big(\Big(16L_K h^2+\frac{421}{32}\gamma^2h^2\Big) {a}_i + \Big(\frac{L_K}{\gamma^2}4 h^2\gamma^2+ \frac{81}{8} \gamma^2h^2 \Big) b_i\Big)
 \\ & \quad + \Big(\frac{L_K}{2\gamma^2}+\frac{65}{64}\Big)\mathbf{M}_1 + \mathbf{M}_2
 \\ & \le  (k+1)\sum_{i=1}^k \hat{\lambda} ({a}_i+b_i) +\Big(\frac{L_K}{2\gamma^2}+\frac{65}{64}\Big)\mathbf{M}_1 + \mathbf{M}_2
\end{align*}
with
\begin{align*}
\hat{\lambda}=\Big(16 L_K +\frac{421}{32}\gamma^2\Big). 
\end{align*}
As in the $\mathcal{PG}$-splitting, we fix $l\in\mathbb{N}$ such that $l=\lfloor \hat{\lambda}^{-1/2} h^{-1}\rfloor$. Then, for $(k+1)\le l$, it holds
\begin{align*}
a_{k+1}+b_{k+1}\le \sum_{i=1}^k h \hat{\lambda}^{1/2} (a_i+b_i) + h^4 \mathbf{M}_3.
\end{align*}
where 
\begin{align*}
\mathbf{M}_3&=\hat{\lambda}^{-1}\Big(\frac{L_K}{2\gamma^2}+\frac{65}{64}\Big)L_G^2 \Big(\frac{2}{5}+  27\frac{\gamma^2 }{\kappa}\Big) d + \hat{\lambda}^{-1}35\gamma^{-2} L_K^2 L_G^2 \frac{d}{\kappa}+ \hat{\lambda}^{-1/2} 12\gamma^{-1}L_G^2 d 
\\ & + \hat{\lambda}^{-1}\gamma^{-2}\frac{18}{5} L_H(d^2+2d) + \hat{\lambda}^{-1} \gamma^{-2}\frac{18}{5}L_G^2(2\gamma^2d+2(L_K+L_G)d).
\end{align*}
Note that we used $\frac{21840}{829}\le 27$ and $\frac{28672}{829}\le 35$ to simplify the bounds of $\mathbf{M}_1$ and $\mathbf{M}_2$.
We observe that there exists a sequence $(c_k)_{k\in\mathbb{N}}$ satisfying $a_k+b_k\le c_k $ for $k\le h^{-1}\hat{\lambda}^{-1/2}$, $c_1=h^4\mathbf{M}_3$ and 
\begin{align*}
c_{k+1}=\sum_{i=1}^k h\sqrt{\hat{\lambda}}c_i+ h^4\mathbf{M}_3=c_k+ h \sqrt{\hat{\lambda}} c_k = (1+h\sqrt{\hat{\lambda}})^k c_1.
\end{align*}
Then, for $k+1\le l$
\begin{align*}
a_{k+1}+b_{k+1}\le c_{k+1}\le (1+h\sqrt{\hat{\lambda}})^k c_1\le e^{h k \sqrt{\hat{\lambda}}} c_1=e^{1}  h^4\mathbf{M}_3.
\end{align*}
By \eqref{eq:W2_stracc3} and since $lh> 2 \hat{\lambda}^{-1/2}$ by $l=\lfloor \hat{\lambda}^{-1/2} h^{-1}\rfloor$ 
\begin{align*}
\mathcal{W}_{2,\rho}(\mu,\tilde{\mu}_h)&\le \Big(1+\frac{1}{chl}\Big) \sqrt{2(a_l+b_l)} \le h^2  \Big(1+\frac{2\sqrt{\hat{\lambda}}}{c}\Big)\sqrt{2e^{1}\mathbf{M}_3}. 
\end{align*}
We note that for $c=\gamma/8$, it holds $ \hat{\lambda}^{-1/2}\le \frac{\sqrt{32}}{\sqrt{421}\gamma}= \frac{\sqrt{32}}{\sqrt{421}}\frac{1}{8} c^{-1}= \frac{1}{\sqrt{842}} c^{-1} $ and for $c=\kappa\gamma^{-1}/4$ it holds $\hat{\lambda}^{-1/2}\le \sqrt{1/16}\frac{1}{\sqrt{L_K}}\le \frac{1}{4}\frac{1}{\kappa \gamma^{-1}}\sqrt{\kappa \gamma^{-2}}\le  \frac{1}{4} \frac{1}{\sqrt{2}}\frac{1}{4}c^{-1}$.
Hence,  $\hat{\lambda}^{-1/2}\le (1/842)^{1/2}c^{-1}$. Using this bound and inserting $\mathbf{M}_3$, we bound 
\begin{align*}
\mathcal{W}_{2,\rho}(\mu,\tilde{\mu}_h)& \le h^2  \Big(\frac{1}{\sqrt{\hat{\lambda}}}+\frac{2}{c}\Big)\mathcal{C}\Big(L_G \sqrt{d}\sqrt{ \frac{L_K}{\gamma^2}+ \frac{\gamma^2}{\kappa}+ \frac{L_K}{\kappa}+\frac{L_K^2}{\kappa \gamma^2}+ \frac{\sqrt{L_K}}{\gamma}}+ \sqrt{L_H}\frac{d}{\gamma}\Big)
\\ & \le h^2  \frac{1}{c}\mathcal{C}\Big(L_G \sqrt{d}\sqrt{ \frac{L_K}{\gamma^2}+ \frac{\gamma^2}{\kappa}+ \frac{L_K}{\kappa}+\frac{L_K^2}{\kappa \gamma^2}}+ \sqrt{L_H}\frac{d}{\gamma}\Big),
\end{align*}
where $\mathcal{C}\in\mathbb{R}_+$ is some number which varies from line to line. This bound in combination with \eqref{eq:equivdist} yields the result.
\end{proof}

\begin{proof}[Proof of Theorem~\ref{thm:complex_PGP}]
The result follows immediately by Theorem~\ref{thm:conv_PGP2}, Theorem~\ref{thm:stracc_PGP} and
\begin{align*}
\mathcal{W}_{2,\rho}(\mu, \nu \tilde{\pi}_h^k)&\le \mathcal{W}_{2,\rho}(\mu, \tilde{\mu}_h)+\mathcal{W}_{2,\rho}(\tilde{\mu}_h, \nu \tilde{\pi}_h^k).
\end{align*}
\end{proof}

\section{Numerical experiments and discussion}

We implemented the $\mathcal{PGP}$ sampler for two models and compared its long-time behaviour to the one of the $\mathcal{OBABO}$ sampler \cite{Mo2021}. In the first model, we considered the potential $U(x)=\frac{1}{2}x^T Kx+G(x)$
with
\begin{align*}
K=\begin{pmatrix}
1 & 0 \\ 0 & 10
\end{pmatrix}, \qquad \text{and} \qquad G(x)=\frac{1}{4}\Big(\frac{1}{2}|x_1|^2+\frac{1}{2}|x_2|^2 +\frac{1}{2}\sin(x_1+x_2)\Big). 
\end{align*}
Note that $G$ is convex, $L_G=1/2$ and $L_K=10$.
In the second model, we considered a logistic type potential with the same matrix $K$ as in the first model and
\begin{align*}
G(x)=\frac{1}{10}\sum_{i=1}^2 \log(1+e^{a_i^T x}) ,
\end{align*}
where $a_1=(1,0)$ and $a_2=(0,2)$. In this model $G$ is convex with $L_G\le 4/10$.

Then for both models, the parameter choice $\gamma=2$ and $h=0.01$ satisfies \eqref{eq:condh2}. In Figure~\ref{fig:1} and Figure~\ref{fig:2}, contraction of the $\mathcal{PGP}$ splitting is illustrated and compared to the behaviour of the $\mathcal{OBABO}$ splitting scheme. Averaging the distance over 200 runs we observe exponential contraction for both schemes with the same rate which indicates that the novel scheme behaves comparably well in experiments. 

Additionally, the numerical experiments are consistent with the theoretical analysis on the long-time behaviour. Together with the theoretical results on the complexity guarantees, these results offer a solid foundation for this novel kinetic Langevin sampler using the exact harmonic Langevin integrator.

Due to its specific structure, the sampler appears particularly well suited for Bayesian sampling problems with Gaussian prior. Extending the analysis to stochastic-gradient variants would be an interesting direction for future work.

\begin{figure}[t!]
	\centering
        \includegraphics[width=0.45\textwidth]{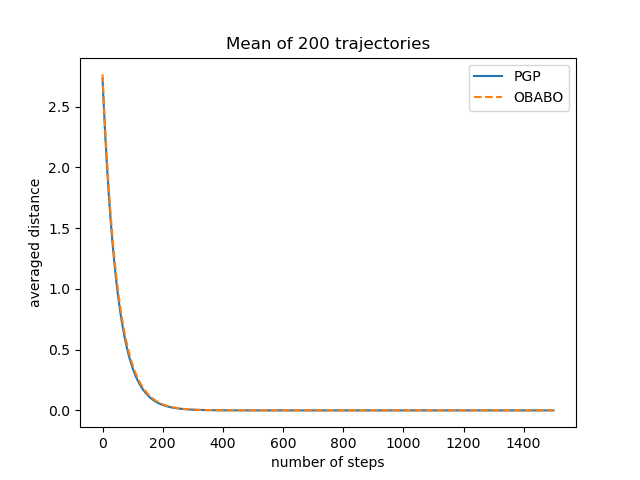}
	\includegraphics[width=0.45\textwidth]{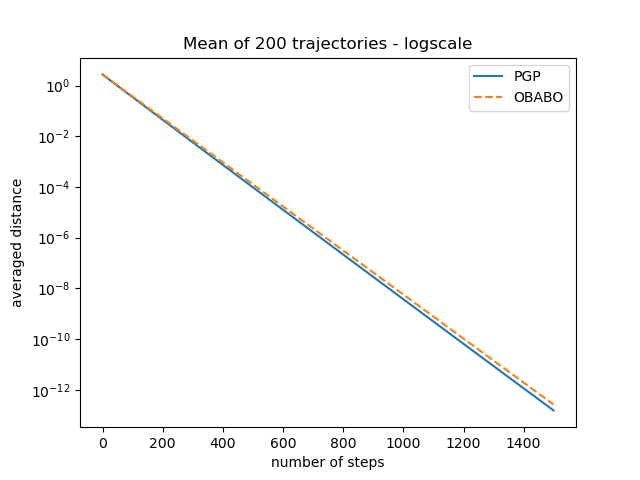}
        \caption{Potential with oscillations: Blue lines on the plot are the contraction of the averaged distance (over 200 samples) of two synchronously coupled trajectories of the PGP splitting schemes, whereas the orange lines illustrate the contraction of the OBABO splitting scheme. For both schemes the two coupled trajectories are initialized at $((1,1),(1,1))$ and $((-1,-1),(-1,-1))$, respectively.}
        \label{fig:1}
\end{figure}

\begin{figure}
	\centering
        \includegraphics[width=0.45\textwidth]{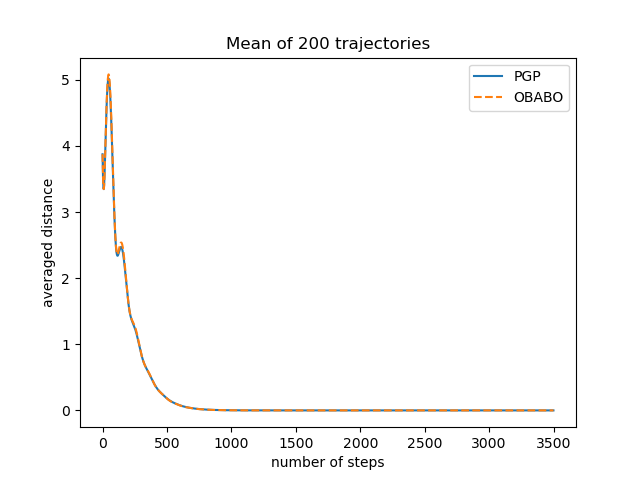}
	\includegraphics[width=0.45\textwidth]{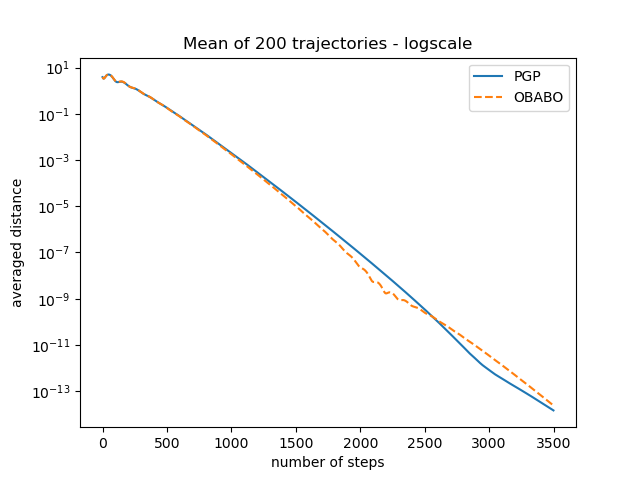}
        \caption{Logistic type potential: Blue lines on the plot are the contraction of the averaged distance (over 200 samples) of two synchronously coupled trajectories of the PGP splitting schemes, whereas the orange lines illustrate the contraction of the OBABO splitting scheme. For both schemes the two coupled trajectories are initialized at $((1,1),(1,1))$ and $((-1,-1),(-1,-1))$, respectively.}
        \label{fig:2}
\end{figure}

\appendix

\section{Proofs on the numerical scheme} \label{sec:proof_num}


\begin{proof}[Proof of Proposition~\ref{prop:numscheme}]
By Assumption~\ref{ass:diag}, the components do not interact in $\mathcal{G}$ and it is sufficient to consider the step $\mathcal{G}$ component wise.
Therefore, without loss of generality, it is sufficient to consider the case $d=1$ and $K=\kappa>0$. 
Then, by \eqref{eq:exactsol_d} $(\tilde{X}_h,\tilde{V}_h)$ at time $h>0$ is given by
\begin{align} \label{eq:exactsol}
\begin{pmatrix}
\tilde{X}_h \\ \tilde{V}_h
\end{pmatrix}
= e^{Ah} 
\begin{pmatrix}
\tilde{X}_0 \\ \tilde{V}_0
\end{pmatrix}+ \sqrt{2\gamma} \int_0^h e^{A(h-s)}  \rmd \begin{pmatrix}
 0
\\  B_s
\end{pmatrix} \qquad \text{with }A=\begin{pmatrix}
0 & 1 \\ -\kappa & -\gamma
\end{pmatrix}.
\end{align} 
To compute the matrix exponential $e^{Ah}$, we have to distinguish three cases depending on the different types of eigenvalues of the matrix $A$. For $\frac{\gamma^2}{4}-\kappa>0$, we obtain the overdamped case with two different real eigenvalues. For $\frac{\gamma^2}{4}-\kappa<0$, we obtain the underdamped case with two different complex-valued eigenvalues. In the third case, i.e., $\frac{\gamma^2}{4}-\kappa=0$, $A$ has one real-valued eigenvalue.

In the case of two different eigenvalues, the eigenvalues are given by $\lambda_{1}= -\gamma/2+\sqrt{\frac{\gamma^2}{4}-\kappa}$ and $\lambda_{2}= -\gamma/2-\sqrt{\frac{\gamma^2}{4}-\kappa}$ and the matrix $A$ can be written as
\begin{align*}
A=P \Lambda P^{-1} \quad \text{with } \Lambda=\begin{pmatrix}
\lambda_1  & 0 \\ 0 & \lambda_2
\end{pmatrix}, \  P=\begin{pmatrix}
1 & 1 \\ \lambda_1 & \lambda_2
\end{pmatrix},  \ P^{-1}=\frac{1}{\lambda_1-\lambda_2}\begin{pmatrix}
-\lambda_2 & 1 \\ \lambda_1 & -1
\end{pmatrix}.  
\end{align*}
Then, for $t\ge 0$
\begin{align*}
e^{At}&=P \begin{pmatrix}
e^{\lambda_1 t} & 0 \\ 0 & e^{\lambda_2 t} 
\end{pmatrix}P^{-1}
\\ & = \frac{e^{-\frac{\gamma}{2} t}}{2 \sqrt{\frac{\gamma^2}{4} - \kappa}} \begin{pmatrix}
-\lambda_2 e^{\sqrt{\frac{\gamma^2}{4} - \kappa}t} +e^{-\sqrt{\frac{\gamma^2}{4} - \kappa}t} & e^{\sqrt{\frac{\gamma^2}{4} - \kappa}t}-e^{-\sqrt{\frac{\gamma^2}{4}- \kappa}t}
\\ \lambda_2 \lambda_1(e^{\sqrt{\frac{\gamma^2}{4} - \kappa}t}-e^{-\sqrt{\frac{\gamma^2}{4} - \kappa}t}) & \lambda_1 e^{\sqrt{\frac{\gamma^2}{4} - \kappa}t}-\lambda_2 e^{-\sqrt{\frac{\gamma^2}{4} - \kappa}t}
\end{pmatrix} .
\end{align*}
Inserting the different eigenvalues and using hyperbolic and trigonometric functions, respectively, we obtain
in the overdamped case  ($\frac{\gamma^2}{4}-\kappa>0$)
\begin{align*}
e^{At}&=e^{-\frac{\gamma}{2}t} \begin{pmatrix}
\frac{1}{\omega}\sinh(\frac{\gamma}{2}\omega t)+\cosh(\frac{\gamma}{2}\omega t) & \frac{2}{\gamma\omega}\sinh(\frac{\gamma}{2}\omega t)
\\ -\kappa\frac{2}{\gamma\omega} \sinh(\frac{\gamma}{2}\omega t) & -\frac{1}{\omega}\sinh(\frac{\gamma}{2}\omega t)+\cosh(\frac{\gamma}{2}\omega t)
\end{pmatrix}
\end{align*}
and in the underdamped case ($\frac{\gamma^2}{4}-\kappa<0$)
\begin{align*}
e^{At}&=e^{-\frac{\gamma}{2}t} \begin{pmatrix}
\frac{1}{\omega}\sin(\frac{\gamma}{2}\omega t)+\cos(\frac{\gamma}{2}\omega t) & \frac{2}{\gamma\omega}\sin(\frac{\gamma}{2}\omega t)
\\ -\kappa\frac{2}{\gamma\omega} \sin(\frac{\gamma}{2}\omega t) & -\frac{1}{\omega}\sin(\frac{\gamma}{2}\omega t)+\cos(\frac{\gamma}{2}\omega t)
\end{pmatrix},
\end{align*}
with $\omega= \sqrt{|1-4\kappa/\gamma^2|}$.

In the case of one real eigenvalue for $A$, the eigenvalue is given by $\lambda=-\gamma/2$ and the matrix $A$ satisfies
\begin{align*}
A=P \Lambda P^{-1} \quad \text{with } \Lambda=\begin{pmatrix}
\lambda  & -\lambda \\ 0 & \lambda
\end{pmatrix}, \quad  P=\begin{pmatrix}
1 & 1 \\ \lambda & 0
\end{pmatrix} \quad \text{and } P^{-1}=\frac{1}{\lambda}\begin{pmatrix}
0 & 1 \\ \lambda & -1
\end{pmatrix}.  
\end{align*}
Then, for $t\ge 0$
\begin{align*}
e^{At}&=P \begin{pmatrix}
e^{\lambda t} & 0 \\ 0 & e^{\lambda t} 
\end{pmatrix}P^{-1}
 = e^{-\frac{\gamma}{2} t} \begin{pmatrix}
\frac{\gamma}{2}t+1& t
\\ -\kappa t & -\frac{\gamma}{2}t +1
\end{pmatrix} .
\end{align*} 
Setting $t=h$, we obtain $\mathbf{A}(h)$.
To compute $\mathbf{B}(h)$, we distinguish again the overdamped, underdamped and critical case. For the overdamped case we can rewrite the noise term in \eqref{eq:exactsol} by
\begin{align*}
 \int_0^h e^{A(h-s)}  &\rmd \begin{pmatrix}
0 \\ B_s
\end{pmatrix}  = \begin{pmatrix}
\int_0^h e^{-\gamma/2 (h-s)} \frac{2}{\gamma \omega}\sinh(\frac{\gamma}{2}\omega (h-s)) \rmd B_s 
\\ \int_0^h e^{-\gamma/2 (h-s)}  (\frac{-1}{\omega}\sinh(\frac{\gamma}{2}\omega (h-s))+\cosh(\frac{\gamma}{2}\omega (h-s))) \rmd B_s
\end{pmatrix}
\\  & = \begin{pmatrix} \int_0^h  \frac{1}{\gamma \omega}\Big( e^{-\frac{\gamma}{2}(1-\omega)(h-s)}-e^{-\frac{\gamma}{2}(1+\omega)(h-s)}\Big)\rmd B_s
\\ \int_0^h  \frac{1 }{2}\Big(e^{-\frac{\gamma}{2}(1+\omega)(h-s)}\frac{1+\omega}{\omega}-e^{-\frac{\gamma}{2}(1-\omega)(h-s)}\frac{1-\omega}{\omega}\Big) \rmd B_s 
\end{pmatrix}=:\begin{pmatrix}
\mathcal{Z}_1 \\ \mathcal{Z}_2
\end{pmatrix}.
\end{align*}
To represent the noise term via standard normally distributed random variables, we consider for some $a,b\in \mathbb{R}$
\begin{align} \label{eq:ZaZb}
Z_a:=\int_0^h e^{-a(h-s)} \rmd B_s, \qquad \text{and } \qquad  Z_b:=\int_0^h e^{-b(h-s)} \rmd B_s
\end{align}
and observe due to Ito's formula
\begin{align*}
&\sigma_a^2:= \mathbb{E}[|Z_a|^2]= \frac{1-e^{-2ha}}{2a}, \quad  \sigma_b^2:= \mathbb{E}[|Z_b|^2]=\frac{1-e^{-2hb}}{2b}, 
\\ & c:=\mathbb{E}[Z_a Z_b]= \frac{1-e^{-(a+b)h}}{a+b}.
\end{align*}
Let $\xi,\zeta\sim \mathcal{N}(0,1)$ be two independent standard normally distributed random variables.
Then, $\tilde{Z}_a$ and $\tilde{Z}_b$ given by
\begin{align} \label{eq:tildeZaZb}
\tilde{Z}_a=\sigma_a \xi, \qquad \text{and} \qquad  \tilde{Z}_b=\sigma_b\Big(\frac{c}{\sigma_a \sigma_b}\xi +\sqrt{1-\Big(\frac{c}{\sigma_a \sigma_b}\Big)^2}\zeta\Big)
\end{align}
satisfy $\mathbb{E}[|\tilde{Z}_a|^2]=\sigma_a^2$, $\mathbb{E}[|\tilde{Z}_b|^2]=\sigma_b^2$ and $\mathbb{E}[|\tilde{Z}_a\tilde{Z}_b]=c$ and are equivalent to $Z_a$ and $Z_b$.
Setting $a=\frac{\gamma}{2}(1+\omega)$ and $b=\frac{\gamma}{2}(1-\omega)$, we note $a+b=\gamma$ and we obtain that $(\mathcal{Z}_1, \mathcal{Z}_2)$ is equivalent to 
\begin{align*}
\begin{pmatrix}
\frac{1}{\gamma \omega} \Big(
\frac{(1+\omega)(1-e^{-\gamma h} )-(1-e^{-\gamma(1+\omega)h})}{\sqrt{\gamma(1-e^{-\gamma(1+\omega) h})(1+\omega) } }\xi
+ \sqrt{\frac{1-e^{-\gamma(1-\omega)h}}{\gamma(1-\omega)}-\frac{(1-e^{-\gamma h})^2(1+\omega)}{\gamma(1-e^{-\gamma(1+\omega)h})}} \zeta \Big)
\\
\frac{1+\omega}{2\omega}\Big(\frac{(1-e^{-\gamma(1+\omega)h})-(1-\omega)(1-e^{-\gamma h})}{\sqrt{\gamma(1-e^{-\gamma(1+\omega)h})(1+\omega)}}\Big)\xi
- \frac{1-\omega}{2\omega} \sqrt{\frac{1-e^{-\gamma(1-\omega)h}}{\gamma(1-\omega)}-\frac{(1-e^{-\gamma h})^2(1+\omega)}{\gamma(1-e^{-\gamma(1+\omega)h})}} \zeta 
\end{pmatrix}
\end{align*}
which multiplied by $\sqrt{2\gamma}$ provides $\mathbf{B}(h)$ in the overdamped case.

Analogously for the underdamped case, we observe that the noise term in \eqref{eq:exactsol} can be written as
\begin{align*}
\begin{pmatrix}
\mathcal{Z}_1 \\ \mathcal{Z}_2
\end{pmatrix}
= \begin{pmatrix}
\int_0^h e^{-\gamma/2 (h-s)}  \frac{2}{\gamma \omega}\sin(\frac{\gamma}{2}\omega (h-s)) \rmd B_s
\\ \int_0^h e^{-\gamma/2 (h-s)}  (\frac{-1}{\omega}\sin(\frac{\gamma}{2}\omega (h-s))+\cos(\frac{\gamma}{2}\omega (h-s))) \rmd B_s 
\end{pmatrix}
\end{align*}
where $(B_t)_{t\ge 0}$ is a standard Brownian motion. 
To represent the noise term via standard normally distributed random variables, we consider for some $a\in \mathbb{R}$
\begin{align*} 
&Z_{sin}:=\int_0^h e^{-\frac{\gamma}{2}(h-s)}\sin\Big(\frac{\gamma\omega}{2}(h-s)\Big) \rmd B_s,
\\ & Z_{cos}:=\int_0^h e^{-\frac{\gamma}{2}(h-s)}\cos\Big(\frac{\gamma\omega}{2}(h-s)\Big)  \rmd B_s
\end{align*}
and observe due to Ito's formula
\begin{align*}
\sigma_{sin}^2 & := \mathbb{E}[|Z_{sin}|^2]
 =  \int_0^h e^{-\gamma(h-s)} \frac{(e^{i\frac{\gamma\omega}{2}(h-s)}-e^{-i\frac{\gamma\omega}{2}(h-s)})^2}{(2i)^2} \rmd s 
\\ & = \frac{-1}{4}\Big(\frac{1-e^{-\gamma(1-i\omega) h}}{\gamma(1-i\omega)}-2\frac{1-e^{-\gamma h}}{\gamma}+\frac{1-e^{-\gamma(1+i\omega) h}}{\gamma(1+i\omega)}\Big) 
\\ & =\frac{1}{2(1+\omega^2)\gamma}\Big(-1+e^{-\gamma h} \cos(\omega \gamma h)- \omega e^{-\gamma h}\sin(\omega\gamma h)+(1-e^{-\gamma  h} )(1+\omega^2)\Big) , 
\\   \sigma_{cos}^2 & := \mathbb{E}[|Z_{cos}|^2]
 = \int_0^h e^{-\gamma(h-s)} \Big(1-\sin\Big(\frac{\gamma\omega}{2}(h-s)\Big)^2\Big) \rmd s
\\ & =\frac{1}{2(1+\omega^2)\gamma}\Big(1-e^{-\gamma h} \cos(\omega \gamma h)+ \omega e^{-\gamma h}\sin(\omega\gamma h)+(1-e^{-\gamma  h} )(1+\omega^2)\Big) ,   
\\  c & :=\mathbb{E}[Z_{sin} Z_{cos}] 
= \int_0^h \frac{1}{4 i}(e^{-\gamma(1-i\omega)(h-s)}-e^{-\gamma(1+i\omega)(h-s)})\rmd s
\\ & = \frac{1}{2(1+\omega^2)\gamma}\Big(\omega -e^{-\gamma h} (\sin(\gamma \omega h)+\cos(\gamma \omega h)\omega)\Big) .
\end{align*}
As for the overdamped case, let $\xi,\zeta\sim \mathcal{N}(0,1)$ be two independent standard normally distributed random variables.
Then, $\tilde{Z}_{sin}$ and $\tilde{Z}_{cos}$ given by
\begin{align*} 
\tilde{Z}_{sin}=\sigma_{sin} \xi, \qquad \text{and } \tilde{Z}_{cos}=\sigma_{cos}\Big(\frac{c}{\sigma_{sin} \sigma_{cos}}\xi +\sqrt{1-\Big(\frac{c}{\sigma_{sin} \sigma_{cos}}\Big)^2}\zeta\Big)
\end{align*}
satisfy $\mathbb{E}[|\tilde{Z}_{sin}|^2]=\sigma_{sin}^2$, $\mathbb{E}[|\tilde{Z}_{cos}|^2]=\sigma_{cos}^2$ and $\mathbb{E}[|\tilde{Z}_{sin}\tilde{Z}_{cos}]=c$. Hence, they are equivalent to $Z_{sin}$ and $Z_{cos}$. Then, the noise term $(\mathcal{Z}_1,\mathcal{Z}_2)$ is equivalent to 
\begin{align*}
\begin{pmatrix}
\frac{2}{\gamma \omega}\sigma_{sin} \xi
\\ \Big(-\frac{\sigma_{sin}}{\omega}+ \frac{c}{\sigma_{sin}}\Big) \xi+ \sqrt{\sigma_{cos}^2-\frac{c^2}{\sigma_{sin}^2}}\zeta
\end{pmatrix}
\end{align*}
which multiplied by $\sqrt{2\gamma}$ gives $\mathbf{B}(h)$ for the underdamped case.


In the critical case, the noise term of \eqref{eq:exactsol}, is given by 
\begin{align*}
\begin{pmatrix}
\mathcal{Z}_1 \\ \mathcal{Z}_2
\end{pmatrix}
= 
\begin{pmatrix}
\int_0^h \sqrt{2\gamma}e^{-\frac{\gamma}{2}(h-s)}(h-s) \rmd B_s \\\int_0^h \sqrt{2\gamma}e^{-\frac{\gamma}{2}(h-s)}(-\frac{\gamma}{2}(h-s)+1) \rmd B_s
\end{pmatrix}.
\end{align*}
To express the noise term via standard normally distributed random variables, we consider
\begin{align*}
Z_1:=\int_0^h e^{-\frac{\gamma}{2}(h-s)}(h-s)\rmd B_s, \qquad \text{and }\qquad Z_2:=\int_0^h e^{-\frac{\gamma}{2}(h-s)}\rmd B_s.
\end{align*}
Then, 
\begin{align*}
\sigma_1^2&:= \mathbb{E}[|Z_1|^2]=\int_0^h e^{-\gamma(h-s)} (h-s)^2 \rmd s = \frac{-e^{-\gamma h} h^2 \gamma^2 - 2e^{-\gamma h}h\gamma+2(1-e^{-\gamma h})}{\gamma^3},
\\ \sigma_2^2 &:= \mathbb{E}[|Z_2|^2]=\int_0^h e^{-\gamma(h-s)} \rmd s = \frac{1-e^{-\gamma h}}{\gamma},
\\ c&:= \mathbb{E}[Z_1Z_2]=\int_0^h e^{-\gamma(h-s)} (h-s) \rmd s = \frac{1-e^{-\gamma h}-h\gamma e^{-\gamma h}}{\gamma^2}.
\end{align*}
As in the two previous cases, let $\xi, \zeta\sim \mathcal{N}(0,1)$ be two independent standard normally distributed random variables. Then $\tilde{Z}_1$ and $\tilde{Z}_2$ given by
\begin{align*}
\tilde{Z}_1= \sigma_1 \xi, \quad \text{and } \tilde{Z}_2= \sigma_2\Big(\frac{c}{\sigma_1 \sigma_2}\xi+\sqrt{1-\Big(\frac{c}{\sigma_1 \sigma_2}\Big)^2}\zeta\Big)
\end{align*}
satisfy $\mathbb{E}[|\tilde{Z}_1|^2]=\sigma_1^2$, $\mathbb{E}[|\tilde{Z}_2|^2]=\sigma_2^2$ and $\mathbb{E}[|\tilde{Z}_1\tilde{Z}_2]=c$. Hence, they are equivalent to $Z_1$ and $Z_2$.
Therefore, the noise term $(\mathcal{Z}_1,\mathcal{Z}_2)$ is equivalent to 
\begin{align*}
\begin{pmatrix}
\sigma_{1} \xi
\\ \Big(-\frac{\gamma\sigma_{1}}{2}+ \frac{c}{\sigma_{1}}\Big) \xi+\sqrt{\sigma_{2}^2-\frac{c^2}{\sigma_{1}^2}}\zeta
\end{pmatrix}
\end{align*}
which gives $\mathbf{B}(h)$ for the critical case and concludes the proof.

\end{proof}

\section{Trapezoidal rule}

\begin{lemma}[Trapezoidal rule]\label{lem:trapez}
Let $(X_t,V_t)_{t\ge 0}$ solving \eqref{eq:LD}. Suppose Assumption~\ref{ass:U} and Assumption~\ref{ass_GHess} hold true. Then, for $h>0$ and $t\ge 0$ it holds
\begin{align*}
&\frac{h}{2}[\nabla G(X_t)+\nabla G(X_{t+h})]-\int_0^h \nabla G(X_{t+s})\rmd s
\\ & = \frac{h}{2}\int_0^h \int_0^r \nabla^2 G(X_{t+u})\sqrt{2\gamma}\rmd B_{t+u} \rmd r-\int_0^h \int_0^s \int_0^r \nabla^2 G(X_{t+u})\sqrt{2\gamma}\rmd B_{t+u} \rmd r \rmd s
\\ & +\int_0^h \Big(  \frac{h}{2} \int_0^r \nabla^3 G(X_{t+u})[V_{t+u},V_{t+u}] \rmd u - \int_0^s \int_0^r \nabla^3 G(X_{t+u})[V_{t+u},V_{t+u}] \rmd u \rmd r\Big) \rmd s 
\\ & + \frac{h}{2}\int_0^h \int_0^r \nabla^2 G(X_{t+u})(-\gamma V_{t+u}-KX_{t+u}-\nabla G(X_{t+u})) \rmd u \rmd r
\\ & \qquad -\int_0^h \int_0^s \int_0^r  \nabla^2 G(X_{t+u})(-\gamma V_{t+u}-KX_{t+u}-\nabla G(X_{t+u})) \rmd u \rmd r \rmd s.
\end{align*}
\end{lemma}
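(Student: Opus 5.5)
Set $f(s)=\nabla G(X_{t+s})$ for $s\in[0,h]$. The plan is to apply Itô's formula to $f$ twice and then use the classical Peano-type representation of the trapezoidal rule error. Since $X$ is $C^1$ with $\frac{\rmd}{\rmd s}X_{t+s}=V_{t+s}$, the first step is deterministic:
\[
f(r)=f(0)+\int_0^r \nabla^2 G(X_{t+u})V_{t+u}\,\rmd u .
\]
So $f'(r)=g(r):=\nabla^2 G(X_{t+r})V_{t+r}$. This $g$ is an Itô process. By Assumption~\ref{ass_GHess}, $G\in C^3$, and since $\rmd V=(-\gamma V-KX-\nabla G(X))\rmd t+\sqrt{2\gamma}\rmd B$, Itô's product rule gives (no quadratic covariation term, since $X$ has finite variation)
\[
g(r)=g(0)+\int_0^r \nabla^3 G(X_{t+u})[V_{t+u},V_{t+u}]\rmd u+\int_0^r \nabla^2 G(X_{t+u})\bigl(-\gamma V_{t+u}-KX_{t+u}-\nabla G(X_{t+u})\bigr)\rmd u+\int_0^r\nabla^2 G(X_{t+u})\sqrt{2\gamma}\,\rmd B_{t+u}.
\]
Write $R(r):=g(r)-g(0)$ for the sum of these three integrals.

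Next I substitute into the trapezoidal error $E:=\frac h2[f(0)+f(h)]-\int_0^h f(s)\rmd s$. Using $f(s)=f(0)+\int_0^s g(r)\rmd r$,
\[
E=\frac h2\int_0^h g(r)\rmd r-\int_0^h\int_0^s g(r)\rmd r\,\rmd s .
\]
Now insert $g(r)=g(0)+R(r)$. The constant part contributes $\frac h2\cdot h\,g(0)-\frac{h^2}{2}g(0)=0$, which is exactly why the trapezoidal rule is second order. What remains is
\[
E=\frac h2\int_0^h R(r)\rmd r-\int_0^h\int_0^s R(r)\rmd r\,\rmd s .
\]
Splitting $R$ into its three pieces (stochastic, $\nabla^3G$, drift) yields the six terms of the statement. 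For the $\nabla^3 G$ terms, the first summand is written as $\int_0^h \frac h2\int_0^r(\cdot)\rmd u\,\rmd r$, which matches the grouping inside $\int_0^h(\dots)\rmd s$ in the stated formula after renaming the dummy variable.

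The main point needing care is justifying the pathwise manipulations with stochastic integrals inside Lebesgue integrals. No stochastic Fubini is actually required, since $R$ is a continuous adapted process and the outer integrals are pathwise Riemann integrals. The only other requirement is well-posedness of the Itô integral $\int\nabla^2G(X)\sqrt{2\gamma}\,\rmd B$, which holds because $\|\nabla^2 G\|\le L_G$ (Assumption~\ref{ass:U}). Continuity of $\nabla^3 G$ ensures Itô's formula applies to $\nabla^2 G(X)$, and hence to the product $\nabla^2 G(X)V$.
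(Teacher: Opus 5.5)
Your proof is correct and follows essentially the same route as the paper. The paper performs a second-order It\^o--Taylor expansion of $\nabla G(X_{t+s})$ using the operators $\mathcal{L}^0,\mathcal{L}^1$ (with $\mathcal{L}^1\nabla G=0$), which is the same as your two applications of It\^o's formula, and the $\mathcal{L}^0\nabla G(X_t)=g(0)$ contributions cancel exactly as in your argument. Your version, which works directly with $g=\nabla^2 G(X)V$, handles the dummy-variable issue in the $\nabla^3 G$ term explicitly, but this is only a presentational difference.
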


\begin{proof}[Proof of Lemma~\ref{lem:trapez}]
By Ito-Taylor expansion (see e.g. \cite[Theorem 5.5.1]{KlPl1992}), it holds for $t,s\ge 0$
\begin{align*}
\int_0^h\nabla  G(X_{t+s})\rmd s &= \int_0^h\Big( \nabla G(X_t)+\int_0^s \mathcal{L}^0 \nabla G(X_{t+r})\rmd r + \int_0^s \mathcal{L}^1 \nabla G(X_{t+r}) \rmd B_{t+r}\Big)\rmd s
\\ & = h\nabla G(X_t)+ \int_0^h\int_0^s \Big(\mathcal{L}^0 \nabla G(X_{t}) +\int_0^r\mathcal{L}^0 \mathcal{L}^0\nabla G(X_{t+u})\rmd u 
\\ & \qquad  +\int_0^r\mathcal{L}^1 \mathcal{L}^0\nabla G(X_{t+u})\rmd B_{t+u}\Big)\rmd r \rmd s
\\ & = h\nabla G(X_t)+\frac{h^2}{2}\mathcal{L}^0 \nabla G(X_{t}) + \int_0^h\int_0^s \int_0^r\mathcal{L}^0 \mathcal{L}^0\nabla G(X_{t+u})\rmd u \rmd r \rmd s
\\ & \qquad  +\int_0^h\int_0^s \int_0^r\mathcal{L}^1 \mathcal{L}^0\nabla G(X_{t+u})\rmd B_{t+u}\rmd r \rmd s,
\end{align*}
where the operator $\mathcal{L}^0$ and $\mathcal{L}^1$ are given by 
\begin{align*}
\mathcal{L}^0= v \nabla_x + (-\gamma v- K x-\nabla G(x))\nabla_v +\gamma \nabla_v^2, \qquad \mathcal{L}^1=\sqrt{2\gamma}\nabla_v.
\end{align*}
Note that in the second step, we used $\mathcal{L}^1 \nabla G(X_{t+r})=0$.
Similarly, 
\begin{align*}
\nabla G(X_{t+h})&= \nabla G(X_t)+ \int_0^h \mathcal{L}^0 \nabla G(X_{t+r})\rmd r + \int_0^s \mathcal{L}^1 \nabla G(X_{t+r}) \rmd B_{t+r}
\\ & = \nabla G(X_t)+ \int_0^h \Big(\mathcal{L}^0 \nabla G(X_{t}) +\int_0^r\mathcal{L}^0 \mathcal{L}^0\nabla G(X_{t+u})\rmd u 
\\ & \quad +\int_0^r\mathcal{L}^1 \mathcal{L}^0\nabla G(X_{t+u})\rmd B_{t+u}\Big)\rmd r. 
\end{align*}
Hence, 
\begin{align*}
T&=\frac{h}{2}[\nabla G(X_t)+\nabla G(X_{t+h})]
\\ & =h\nabla G(X_t)+ \frac{h^2}{2}\mathcal{L}^0 \nabla G(X_{t}) +\frac{h}{2}\int_0^h \int_0^r\mathcal{L}^0 \mathcal{L}^0\nabla G(X_{t+u})\rmd u\rmd r 
\\ & \qquad +\frac{h}{2}\int_0^h\int_0^r\mathcal{L}^1 \mathcal{L}^0\nabla G(X_{t+u})\rmd B_{t+u}\rmd r. 
\end{align*}
Hence, 
\begin{align*}
&T-\int_0^h\nabla  G(X_{t+s})\rmd s
\\ & = \frac{h}{2}\int_0^h \int_0^r\mathcal{L}^0 \mathcal{L}^0\nabla G(X_{t+u})\rmd u\rmd r  +\frac{h}{2}\int_0^h\int_0^r\mathcal{L}^1 \mathcal{L}^0\nabla G(X_{t+u})\rmd B_{t+u}\rmd r
\\ &- \int_0^h\int_0^s \int_0^r\mathcal{L}^0 \mathcal{L}^0\nabla G(X_{t+u})\rmd u \rmd r \rmd s  -\int_0^h\int_0^s \int_0^r\mathcal{L}^1 \mathcal{L}^0\nabla G(X_{t+u})\rmd B_{t+u}\rmd r \rmd s.
\end{align*}
Using that
\begin{align*}
&\mathcal{L}^0 \nabla G(X_s)=\nabla^2 G(X_s)V_s, \qquad \mathcal{L}^1 (\nabla^2 G(X_s)V_s)= \nabla^2G(X_s)\sqrt{2\gamma}, \qquad \text{and}
\\ & \mathcal{L}^0 (\nabla^2 G(X_s)V_s)= \nabla^3 G(X_s) [V_s,V_s]+ \nabla^2 G(X_s) (-\gamma V_s-K X_s-\nabla G(X_s))
\end{align*}
gives the desired result.
\end{proof}

\section*{Acknowledgments}
The author wants to thank Andreas Eberle, Francis Lörler and Stefan Oberdörster for fruitful discussions during her research visit in Bonn.

\bibliographystyle{plain}
\bibliography{biblio}

\begin{thebibliography}{10}

\bibitem{AbBuHi2017}
{\sc M.~Ableidinger, E.~Buckwar, and H.~Hinterleitner}, {\em A stochastic
  version of the jansen and rit neural mass model: analysis and numerics}, The
  Journal of Mathematical Neuroscience, 7 (2017), p.~8.

\bibitem{AnDeDoJo2003}
{\sc C.~Andrieu, N.~De~Freitas, A.~Doucet, and M.~I. Jordan}, {\em An
  introduction to mcmc for machine learning}, Machine learning, 50 (2003),
  pp.~5--43.

\bibitem{BaCaGu2008}
{\sc D.~Bakry, P.~Cattiaux, and A.~Guillin}, {\em Rate of convergence for
  ergodic continuous markov processes: Lyapunov versus poincar{\'e}}, Journal
  of Functional Analysis, 254 (2008), pp.~727--759.

\bibitem{BaGeLe2014}
{\sc D.~Bakry, I.~Gentil, and M.~Ledoux}, {\em Analysis and geometry of
  {M}arkov diffusion operators}, vol.~348 of Grundlehren der mathematischen
  Wissenschaften [Fundamental Principles of Mathematical Sciences], Springer,
  Cham, 2014, \url{https://doi.org/10.1007/978-3-319-00227-9},
  \url{https://doi.org/10.1007/978-3-319-00227-9}.

\bibitem{BoEbZi2020}
{\sc N.~Bou-Rabee, A.~Eberle, and R.~Zimmer}, {\em Coupling and convergence for
  {H}amiltonian {M}onte {C}arlo}, Ann. Appl. Probab., 30 (2020),
  pp.~1209--1250, \url{https://doi.org/10.1214/19-AAP1528},
  \url{https://doi.org/10.1214/19-AAP1528}.

\bibitem{BoSc2023}
{\sc N.~Bou-Rabee and K.~Schuh}, {\em Convergence of unadjusted hamiltonian
  monte carlo for mean-field models}, Electronic Journal of Probability, 28
  (2023), pp.~1--40.

\bibitem{BrLi1976}
{\sc H.~J. Brascamp and E.~H. Lieb}, {\em On extensions of the
  {B}runn-{M}inkowski and {P}r\'ekopa-{L}eindler theorems, including
  inequalities for log concave functions, and with an application to the
  diffusion equation}, J. Functional Analysis, 22 (1976), pp.~366--389,
  \url{https://doi.org/10.1016/0022-1236(76)90004-5},
  \url{https://doi.org/10.1016/0022-1236(76)90004-5}.

\bibitem{BuPa2007}
{\sc G.~Bussi and M.~Parrinello}, {\em Accurate sampling using langevin
  dynamics}, Physical Review E—Statistical, Nonlinear, and Soft Matter
  Physics, 75 (2007), p.~056707.

\bibitem{CaDuMoSt2023}
{\sc E.~Camrud, A.~Durmus, P.~Monmarch{\'e}, and G.~Stoltz}, {\em Second order
  quantitative bounds for unadjusted generalized hamiltonian monte carlo},
  arXiv preprint arXiv:2306.09513,  (2023).

\bibitem{CaLuWa2023}
{\sc Y.~Cao, J.~Lu, and L.~Wang}, {\em On explicit l 2-convergence rate
  estimate for underdamped langevin dynamics}, Archive for Rational Mechanics
  and Analysis, 247 (2023), p.~90.

\bibitem{CaGu2009}
{\sc P.~Cattiaux and A.~Guillin}, {\em Trends to equilibrium in total variation
  distance}, Ann. Inst. Henri Poincar\'e{} Probab. Stat., 45 (2009),
  pp.~117--145, \url{https://doi.org/10.1214/07-AIHP152},
  \url{https://doi.org/10.1214/07-AIHP152}.

\bibitem{ChMo2025}
{\sc M.~Chak and P.~Monmarch{\'e}}, {\em Reflection coupling for unadjusted
  generalized hamiltonian monte carlo in the nonconvex stochastic gradient
  case}, IMA Journal of Numerical Analysis,  (2025), p.~draf045.

\bibitem{ChGaJi2023}
{\sc Y.~Chen, K.~Gatmiry, and M.~Jiang}, {\em When does metropolized
  hamiltonian monte carlo provably outperform metropolis-adjusted langevin
  algorithm?}, arXiv preprint arXiv:2304.04724,  (2023).

\bibitem{ChVe2022}
{\sc Z.~Chen and S.~S. Vempala}, {\em Optimal convergence rate of hamiltonian
  monte carlo for strongly logconcave distributions}, Theory of Computing, 18
  (2022), pp.~1--18.

\bibitem{ChChBaJo2018}
{\sc X.~Cheng, N.~S. Chatterji, P.~L. Bartlett, and M.~I. Jordan}, {\em
  Underdamped langevin mcmc: A non-asymptotic analysis}, in Conference on
  learning theory, PMLR, 2018, pp.~300--323.

\bibitem{Da2017}
{\sc A.~S. Dalalyan}, {\em Theoretical guarantees for approximate sampling from
  smooth and log-concave densities}, J. R. Stat. Soc. Ser. B. Stat. Methodol.,
  79 (2017), pp.~651--676, \url{https://doi.org/10.1111/rssb.12183},
  \url{https://doi.org/10.1111/rssb.12183}.

\bibitem{DaRi2020}
{\sc A.~S. Dalalyan and L.~Riou-Durand}, {\em On sampling from a log-concave
  density using kinetic {L}angevin diffusions}, Bernoulli, 26 (2020),
  pp.~1956--1988, \url{https://doi.org/10.3150/19-BEJ1178}.

\bibitem{DePaBoDo2021}
{\sc G.~Deligiannidis, D.~Paulin, A.~Bouchard-C\^{o}t\'{e}, and A.~Doucet},
  {\em Randomized {H}amiltonian {M}onte {C}arlo as scaling limit of the bouncy
  particle sampler and dimension-free convergence rates}, Ann. Appl. Probab.,
  31 (2021), pp.~2612--2662, \url{https://doi.org/10.1214/20-aap1659},
  \url{https://doi.org/10.1214/20-aap1659}.

\bibitem{DoMoSc2009}
{\sc J.~Dolbeault, C.~Mouhot, and C.~Schmeiser}, {\em Hypocoercivity for
  kinetic equations with linear relaxation terms}, Comptes Rendus.
  Math{\'e}matique, 347 (2009), pp.~511--516.

\bibitem{DoMoSc2015}
{\sc J.~Dolbeault, C.~Mouhot, and C.~Schmeiser}, {\em Hypocoercivity for linear
  kinetic equations conserving mass}, Transactions of the American Mathematical
  Society, 367 (2015), pp.~3807--3828.

\bibitem{DuMo2017}
{\sc A.~Durmus and E.~Moulines}, {\em Nonasymptotic convergence analysis for
  the unadjusted {L}angevin algorithm}, Ann. Appl. Probab., 27 (2017),
  pp.~1551--1587, \url{https://doi.org/10.1214/16-AAP1238},
  \url{https://doi.org/10.1214/16-AAP1238}.

\bibitem{DuMo2019}
{\sc A.~Durmus and E.~Moulines}, {\em High-dimensional {B}ayesian inference via
  the unadjusted {L}angevin algorithm}, Bernoulli, 25 (2019), pp.~2854--2882,
  \url{https://doi.org/10.3150/18-BEJ1073},
  \url{https://doi.org/10.3150/18-BEJ1073}.

\bibitem{Eb2016}
{\sc A.~Eberle}, {\em Reflection couplings and contraction rates for
  diffusions}, Probab. Theory Related Fields, 166 (2016), pp.~851--886,
  \url{https://doi.org/10.1007/s00440-015-0673-1},
  \url{https://doi.org/10.1007/s00440-015-0673-1}.

\bibitem{EbGuZi2019b}
{\sc A.~Eberle, A.~Guillin, and R.~Zimmer}, {\em Couplings and quantitative
  contraction rates for {L}angevin dynamics}, Ann. Probab., 47 (2019),
  pp.~1982--2010, \url{https://doi.org/10.1214/18-AOP1299},
  \url{https://doi.org/10.1214/18-AOP1299}.

\bibitem{GeCaStRu1995}
{\sc A.~Gelman, J.~B. Carlin, H.~S. Stern, and D.~B. Rubin}, {\em Bayesian data
  analysis}, Chapman and Hall/CRC, 1995.

\bibitem{GoBrMaMo2025}
{\sc N.~Gouraud, P.~L. Bris, A.~Majka, and P.~Monmarch\'{e}}, {\em Hmc and
  underdamped langevin united in the unadjusted convex smooth case}, SIAM/ASA
  Journal on Uncertainty Quantification, 13 (2025), pp.~278--303,
  \url{https://doi.org/10.1137/23M1608963},
  \url{https://doi.org/10.1137/23M1608963}.

\bibitem{Ha1970}
{\sc W.~K. Hastings}, {\em Monte carlo sampling methods using markov chains and
  their applications},  (1970).

\bibitem{KlPl1992}
{\sc P.~E. Kloeden and E.~Platen}, {\em Numerical solution of stochastic
  differential equations}, vol.~23 of Applications of Mathematics (New York),
  Springer-Verlag, Berlin, 1992,
  \url{https://doi.org/10.1007/978-3-662-12616-5},
  \url{https://doi.org/10.1007/978-3-662-12616-5}.

\bibitem{LeMa2013}
{\sc B.~Leimkuhler and C.~Matthews}, {\em Rational construction of stochastic
  numerical methods for molecular sampling}, Appl. Math. Res. Express. AMRX,
  (2013), pp.~34--56, \url{https://doi.org/10.1093/amrx/abs010},
  \url{https://doi.org/10.1093/amrx/abs010}.

\bibitem{LeMaSt2016}
{\sc B.~Leimkuhler, C.~Matthews, and G.~Stoltz}, {\em The computation of
  averages from equilibrium and nonequilibrium {L}angevin molecular dynamics},
  IMA J. Numer. Anal., 36 (2016), pp.~13--79,
  \url{https://doi.org/10.1093/imanum/dru056},
  \url{https://doi.org/10.1093/imanum/dru056}.

\bibitem{LePaWh2024}
{\sc B.~Leimkuhler, D.~Paulin, and P.~A. Whalley}, {\em Contraction rate
  estimates of stochastic gradient kinetic {L}angevin integrators}, ESAIM Math.
  Model. Numer. Anal., 58 (2024), pp.~2255--2286,
  \url{https://doi.org/10.1051/m2an/2024038},
  \url{https://doi.org/10.1051/m2an/2024038}.

\bibitem{LePaWh2024a}
{\sc B.~J. Leimkuhler, D.~Paulin, and P.~A. Whalley}, {\em Contraction and
  convergence rates for discretized kinetic {L}angevin dynamics}, SIAM J.
  Numer. Anal., 62 (2024), pp.~1226--1258,
  \url{https://doi.org/10.1137/23M1556289},
  \url{https://doi.org/10.1137/23M1556289}.

\bibitem{LeSt2016}
{\sc T.~Lelievre and G.~Stoltz}, {\em Partial differential equations and
  stochastic methods in moleculardynamics}, Acta Numerica, 25 (2016),
  pp.~681--880.

\bibitem{MaSm2021}
{\sc O.~Mangoubi and A.~Smith}, {\em Mixing of {H}amiltonian {M}onte {C}arlo on
  strongly log-concave distributions: continuous dynamics}, Ann. Appl. Probab.,
  31 (2021), pp.~2019--2045, \url{https://doi.org/10.1214/20-aap1640},
  \url{https://doi.org/10.1214/20-aap1640}.

\bibitem{McQu2002}
{\sc R.~I. McLachlan and G.~R.~W. Quispel}, {\em Splitting methods}, Acta
  Numerica, 11 (2002), pp.~341--434.

\bibitem{MeRoRoTeTe1953}
{\sc N.~Metropolis, A.~W. Rosenbluth, M.~N. Rosenbluth, A.~H. Teller, and
  E.~Teller}, {\em Equation of state calculations by fast computing machines},
  The journal of chemical physics, 21 (1953), pp.~1087--1092.

\bibitem{Mo2021}
{\sc P.~Monmarch\'e}, {\em High-dimensional {MCMC} with a standard splitting
  scheme for the underdamped {L}angevin diffusion.}, Electron. J. Stat., 15
  (2021), pp.~4117--4166, \url{https://doi.org/10.1214/21-ejs1888},
  \url{https://doi.org/10.1214/21-ejs1888}.

\bibitem{Ne1995}
{\sc R.~M. Neal}, {\em Bayesian learning for neural networks}, vol.~118,
  Springer Science \& Business Media, 2012.

\bibitem{Ne2018}
{\sc Y.~Nesterov}, {\em Lectures on convex optimization}, vol.~137 of Springer
  Optimization and Its Applications, Springer, Cham, second~ed., 2018,
  \url{https://doi.org/10.1007/978-3-319-91578-4},
  \url{https://doi.org/10.1007/978-3-319-91578-4}.

\bibitem{PaWh2024}
{\sc D.~Paulin and P.~A. Whalley}, {\em Correction to" wasserstein distance
  estimates for the distributions of numerical approximations to ergodic
  stochastic differential equations"}, Journal of Machine Learning Research, 25
  (2024), pp.~1--9.

\bibitem{Pa2014}
{\sc G.~A. Pavliotis}, {\em Stochastic processes and applications}, Texts in
  applied mathematics, 60 (2014), pp.~41--43.

\bibitem{SaZy2021}
{\sc J.~M. Sanz-Serna and K.~C. Zygalakis}, {\em Wasserstein distance estimates
  for the distributions of numerical approximations to ergodic stochastic
  differential equations}, J. Mach. Learn. Res., 22 (2021), pp.~Paper No. 242,
  37.

\bibitem{Sc2024}
{\sc K.~Schuh}, {\em Global contractivity for {L}angevin dynamics with
  distribution-dependent forces and uniform in time propagation of chaos}, Ann.
  Inst. Henri Poincar\'e{} Probab. Stat., 60 (2024), pp.~753--789,
  \url{https://doi.org/10.1214/22-aihp1337},
  \url{https://doi.org/10.1214/22-aihp1337}.

\bibitem{ScWh2024}
{\sc K.~Schuh and P.~A. Whalley}, {\em Convergence of kinetic langevin samplers
  for non-convex potentials}, arXiv preprint arXiv:2405.09992,  (2024).

\bibitem{Vi2009}
{\sc C.~Villani}, {\em Hypocoercivity}, vol.~202, American Mathematical
  Society, 2009.

\bibitem{WeTe2011}
{\sc M.~Welling and Y.~W. Teh}, {\em Bayesian learning via stochastic gradient
  {L}angevin dynamics}, in Proceedings of the 28th international conference on
  machine learning (ICML-11), 2011, pp.~681--688.

\bibitem{Za2017}
{\sc A.~A. Zapatero}, {\em Word series for the numerical integration of
  stochastic differential equations}, PhD thesis, Universidad de Valladolid,
  2017.

\end{thebibliography}
\end{document}